\keywords{Session types, type safety, broadcasting, unreliability, Paxos}
\theoremstyle{plain} 
\newif\ifdraft
\newcommand\DEF{\ \ ::=\ \ }
\newcommand\OR{\ \ |\ \ }
\newcommand{\set}[1]{\{#1\}}
\newcommand{\setbar}{\mathrel{|}}
\newcommand{\bnfis}{\ensuremath{::=}}
\newcommand{\bnfbar}{\ensuremath{\ \ |\ \ }}
\newcommand{\tree}[4][] {
	\def\arraystretch{1.2}
	\begin{array}{l}
		\ifthenelse { \equal {#1} {} }
		{}
		{
		#1
		\\
		}
		{\textstyle \frac{
			\begin{array}{c}
			#2
			\end{array}
		}{
			\def\arraystretch{1.2}
			\begin{array}{l}
				#3
			\end{array}
		}
		}
		\ifthenelse { \equal {#4} {} }
		{}
		{
		\ #4
		}
	\end{array}
}
\newcommand\andalso{\quad}
\newcommand{\btree}[4][]{
	\boxed{
		\tree[#1]{
			#2
		}{
			#3
		}{#4}
	}
}
\newcommand{\Proc}{\ensuremath{\mathcal{P}}\xspace}
\newcommand{\Buf}{\ensuremath{\mathcal{B}}\xspace}
\newcommand{\Net}{\ensuremath{\mathcal{N}}\xspace}
\newcommand\aggr[1]{\breve{#1}}
\newcommand\dual[1]{\overline{#1}}
\newcommand\pout[2]{#1{}_!\langle #2\rangle.}
\newcommand\pin[2]{{#1}{}_?(#2).}
\newcommand\pdin[3]{{#1}{}_?(#2)\langle#3\rangle.}
\newcommand{\default}{\ensuremath{\mathtt{df}}\xspace}
\newcommand\pll{\,|\,}
\newcommand{\Sum}{\ensuremath{+}\xspace}
\newcommand\sel[2]{#1\triangleleft #2.}
\newcommand\branchOn[2]{#1\,{\triangleright}\,#2}
\newcommand\branch[2]{#1\,{\triangleright}\,\{#2\}}
\newcommand\branchI[3]{\branch{#1}{{#2}_i: {#3}_i}_{i\in I}}
\newcommand\branchDef[3]{#1\,{\triangleright}\,\{#2, \default: #3\}}
\newcommand{\ndet}[2]{#1 \Sum #2}
\newcommand\rec[1]{\mu #1.}
\newcommand\recvar[1]{X}
\newcommand{\defin}{\ensuremath{\mathsf{def}}\xspace}
\newcommand{\Def}[2]{\ensuremath{\defin\ #1\ \mathsf{in}\ #2} \xspace}
\newcommand{\appl}[2]{\ensuremath{#1\langle#2\rangle}\xspace}
\newcommand{\abs}[2]{\ensuremath{#1(#2)}\xspace}
\newcommand{\defeq}{\ensuremath{\stackrel{\mathsf{def}}{=}}\xspace}
\newcommand{\If}{\ensuremath{\mathsf{if}}\xspace}
\newcommand{\Then}{\ensuremath{\mathsf{then}}\xspace}
\newcommand{\Else}{\ensuremath{\mathsf{else}}\xspace}
\newcommand\cond[3]{\If\,#1\,\Then\,#2\,\Else\,#3}
\newcommand\request[2]{#1_!(#2).}
\newcommand\accept[2]{#1_?(#2).}
\newcommand\nil{\mathbf{0}}
\newcommand{\newn}[1]{(\nu\,#1)}
\newcommand{\newnp}[2]{\newn{#1}(#2)}
\newcommand{\net}[1]{[\,#1\,]}
\newcommand{\npll}{\,|\!|\,}
\newcommand\condtrue{\Vdash}
\newcommand{\econd}{\ensuremath{\varphi}\xspace}
\newcommand{\Par}[3]{\prod_{#1 \in #2} #3}
\newcommand{\dom}[1]{\mathrm{dom}(#1)}
\newcommand{\eval}[1]{\ensuremath{#1\downarrow}\xspace}
\newcommand{\mult}{\ensuremath{\odot}\xspace}
\newcommand{\unit}{\ensuremath{\mathbf{1}}}
\newcommand{\exception}{\ensuremath{\mathbf{exc}}}
\newcommand{\true}{\ensuremath{\mathtt{true}}\xspace}
\newcommand{\false}{\ensuremath{\mathtt{false}}\xspace}
\newcommand\subst[2]{\{{#1}/{#2}\}}
\newcommand\reduces{\mathrel{\longrightarrow}}
\newcommand{\gthr}[2]{\ensuremath{\mathsf{V}(#1, #2)}\xspace}
\newcommand{\newbuffer}[2]{\ensuremath{\mathsf{B}(#1, #2)}\xspace}
\newcommand{\ebuffer}{\ensuremath{\varepsilon}\xspace}
\newcommand{\squeue}[3]{\ensuremath{#1[#2, #3]}}
\newcommand{\pol}[1]{\ensuremath{\tilde{#1}}}
\newcommand{\cat}{\cdot}
\newcommand{\equeue}{\ensuremath{\varepsilon}\xspace}
\newcommand\tout[1]{{}_!#1.}
\newcommand\tin[1]{{}_?#1.}
\newcommand\tsel[1]{\oplus\set{#1}}
\newcommand\tselI[2]{\tsel{{#1}_i: {#2}_i}_{i\in I}}
\newcommand\tbranchOn[1]{\&\{#1\}}
\newcommand\tbranch[2]{\&\{#1\mathrel:#2\}}
\newcommand\tbranchI[2]{\&\{{#1}_i\mathrel:{#2}_i\}_{i\in I}}
\newcommand\tend{\mathsf{end}}
\newcommand\tvar[1]{#1}
\newcommand\trec[1]{\mu\,\tvar#1.}
\newcommand\tdual[1]{\overline{#1}}
\newcommand\tadvance{\mathrel{\shortrightarrow}}
\newcommand{\tempty}{\ensuremath{\varepsilon}\xspace}
\newcommand{\econtext}{\ensuremath{\emptyset}\xspace}
\newcommand{\ctype}[2]{\ensuremath{(#1, #2)}}
\newcommand\msel[1]{{}_{\oplus}#1}
\newcommand{\mtout}[1]{{}_{!}#1}
\newcommand{\sessionop}{\ensuremath{\circ}\xspace}
\newcommand{\synchronise}[2]{\ensuremath{#1 \mathrel{\hookrightarrow} #2}}
\newcommand{\fn}[1]{\mathrm{fn}(#1)}
\newcommand{\bn}[1]{\mathrm{bn}(#1)}
\newcommand{\nam}[1]{\mathrm{n}(#1)}
\newcommand{\fs}[1]{\mathrm{fs}(#1)}
\newcommand{\fv}[1]{\mathrm{fv}(#1)}
\newcommand\seq[1]{\widetilde{#1}}
\newcommand\types{\mathrel{\vdash}}
\newcommand\etypes{\mathrel{\succ}}
\newcommand\Figure[1]{Fig.~\ref{fig:#1}}
\newcommand\Definition[1]{Definition~\ref{#1}}
\newcommand\Section[1]{Section~\ref{sec:#1}}
\newcommand\IF{\mathrel{\mbox{if}}}
\newcommand\AND{\mathrel{\mbox{and}}}
\newcommand{\bool}{\mathsf{bool}}
\def\infrulestyle{\displaystyle}
\def\makeinfrulesmall{\def\infrulestyle{\textstyle}}
\newcommand\infrule[2]{{\infrulestyle\frac{#1}{#2}}}
\newcommand\rname[1]{\ensuremath{\scriptstyle \textsc{{[#1]}}}\xspace}
\newcommand\rulename[1]{\rname{#1}}
\newcommand\sname[1]{\ensuremath{\scriptstyle \textsf{{(#1)}}}\xspace}
\newcommand{\SharedC}{\sname{ShCh}}
\newcommand{\SessionC}{\sname{SCh}}
\newcommand{\Labels}{\sname{Labels}}
\newcommand{\Counter}{\sname{Cnt}}
\newcommand{\Variables}{\sname{Var}}
\newcommand{\Expressions}{\sname{Exp}}
\newcommand{\Endpoint}{\sname{Endp}}
\newcommand{\Session}{\sname{Sess}}
\newcommand{\Identifier}{\sname{Identifier}}
\newcommand{\Message}{\sname{Msg}}
\newcommand{\TMessage}{\sname{TMsg}}
\newcommand{\Request}{\sname{Req}}
\newcommand{\Accept}{\sname{Acc}}
\newcommand{\Send}{\sname{Snd}}
\newcommand{\Rcv}{\sname{Rcv}}
\newcommand{\Selection}{\sname{Sel}}
\newcommand{\Branching}{\sname{Bra}}
\newcommand{\Conditional}{\sname{Cond}}
\newcommand{\NDet}{\sname{Sum}}
\newcommand{\Recursion}{\sname{Def}}
\newcommand{\PVar}{\sname{PVar}}
\newcommand{\Inact}{\sname{Inact}}
\newcommand{\Node}{\sname{Node}}
\newcommand{\Parallel}{\sname{Par}}
\newcommand{\Restriction}{\sname{Restr}}
\newcommand{\Buffer}{\sname{Buffer}}
\newcommand{\SRecover}{\sname{Recov}}
\newcommand{\Connect}{\rname{Conn}}
\newcommand{\Broadcast}{\rname{Bcast}}
\newcommand{\Unicast}{\rname{Ucast}}
\newcommand{\Receive}{\rname{Rcv}}
\newcommand{\Gather}{\rname{Gthr}}
\newcommand{\Select}{\rname{Sel}}
\newcommand{\Branch}{\rname{Bra}}
\newcommand{\Loss}{\rname{Loss}}
\newcommand{\Recover}{\rname{Rec}}
\newcommand{\BRecover}{\rname{BRec}}
\newcommand{\True}{\rname{True}}
\newcommand{\False}{\rname{False}}
\newcommand{\NonDet}{\rname{NDet}}
\newcommand{\RDef}{\rname{Def}}
\newcommand{\RPar}{\rname{RPar}}
\newcommand{\RCong}{\rname{RCong}}
\newcommand{\RRes}{\rname{RRes}}
\newcommand{\Cond}{\rname{Cond}}
\newcommand{\SWk}{\rname{SWk}}
\newcommand{\BEmp}{\rname{BEmp}}
\newcommand{\SEmp}{\rname{SEmp}}
\newcommand{\SExp}{\rname{SExp}}
\newcommand{\SLab}{\rname{SLab}}
\newcommand{\LExp}{\rname{LExp}}
\newcommand{\BPar}{\rname{BPar}}
\newcommand{\TReq}{\rname{TReq}}
\newcommand{\TAcc}{\rname{TAcc}}
\newcommand{\TSnd}{\rname{TSnd}}
\newcommand{\TRcv}{\rname{TRcv}}
\newcommand{\TSel}{\rname{TSel}}
\newcommand{\TBr}{\rname{TBr}}
\newcommand{\TCond}{\rname{TCond}}
\newcommand{\TInact}{\rname{TInact}}
\newcommand{\TVar}{\rname{TVar}}
\newcommand{\TRec}{\rname{TRec}}
\newcommand{\TSum}{\rname{TSum}}
\newcommand{\TNode}{\rname{TNode}}
\newcommand{\TSynch}{\rname{TSynch}}
\newcommand{\TPar}{\rname{TPar}}
\newcommand{\TCRes}{\rname{TCRes}}
\newcommand{\TSRes}{\rname{TSRes}}
\def\nname#1{\textsc{#1}\xspace}
\newcommand{\Nbrc}{\nname{Brc}}
\newcommand{\Nuni}{\nname{Uni}}
\newcommand{\Ngth}{\nname{Gth}}
\newcommand{\Nrcv}{\nname{Rcv}}
\newcommand{\Nsel}{\nname{Sel}}
\newcommand{\Nbra}{\nname{Bra}}
\newcommand{\mytype}[1]{\ensuremath{\mathtt{#1}}\xspace}
\newcommand{\nat}{\mytype{nat}}
\newcommand{\myprocess}[1]{\ensuremath{\mathsf{#1}}\xspace}
\newcommand{\Sender}{\myprocess{Sender}}
\newcommand{\Medium}{\myprocess{Medium}}
\newcommand{\Receiver}{\myprocess{Receiver}}
\newcommand{\map}[1]{[\![#1]\!]}
\newcommand{\pmap}[2]{[\![#1]\!]^{#2}}
\newcommand\recover{\,\diamond\,}
\newcommand{\Heartbeat}{\myprocess{Heartbeat}}
\newcommand{\heartbeat}{\ensuremath{\mathsf{hbt}}\xspace}
\newcommand{\theartbeat}{\mytype{b}}
\newcommand{\Network}{\myprocess{Network}}
\newcommand{\Recursive}{\myprocess{Recursive}}
\newcommand{\PreparePh}{\ensuremath{\mathsf{Prepare}}\xspace}
\newcommand{\AcceptPh}{\ensuremath{\mathsf{Accept}}\xspace}
\newcommand{\Proposer}{\myprocess{Proposer}}
\newcommand{\ProposerDef}[1]{\abs{\Proposer}{#1}}
\newcommand{\ProposerVar}[1]{\appl{\Proposer}{#1}}
\newcommand{\Acceptor}{\myprocess{Acceptor}}
\newcommand{\AcceptorDef}[1]{\abs{\Acceptor}{#1}}
\newcommand{\AcceptorVar}[1]{\appl{\Acceptor}{#1}}
\newcommand{\Acc}{\myprocess{Acc}}
\newcommand{\AccDef}[1]{\abs{\Acc}{#1}}
\newcommand{\AccVar}[1]{\appl{\Acc}{#1}}
\newcommand{\AccPh}{\myprocess{AcceptPhase}}
\newcommand{\AccPhDef}[1]{\abs{\AccPh}{#1}}
\newcommand{\AccPhVar}[1]{\appl{\AccPh}{#1}}
\newcommand{\Paxos}{\ensuremath{\mathsf{Paxos}}\xspace}
\newcommand{\PaxosDef}[1]{\abs{\Paxos}{#1}}
\newcommand{\PaxosVar}[1]{\appl{\Paxos}{#1}}
\newcommand{\PaxosNode}{\ensuremath{\mathsf{PaxosNode}}\xspace}
\newcommand{\restart}{\ensuremath{\mathsf{restart}}\xspace}
\newcommand{\acceptLabel}{\ensuremath{\mathsf{accept}}\xspace}
\newcommand{\chooseVal}[1]{\ensuremath{\mathsf{choose}(#1)}\xspace}
\newcommand{\messageTuple}{(\rnumber, \pvalue)}
\newcommand{\prep}{\mytype{prepare}}
\newcommand{\prom}{\mytype{promise}}
\newcommand{\rnumber}{\mytype{round}}
\newcommand{\pvalue}{\mytype{value}}
\newcommand{\id}{\ensuremath{\mathsf{id}}\xspace}
\newcommand{\emptyval}{\ensuremath{\epsilon}\xspace}
\newcommand{\PaxosType}{\ensuremath{\mathsf{PaxosType}}\xspace}
\newcommand{\UBSC}{\textsf{UBSC}\xspace}
\title[Session Types for Asynchronous Unreliable Broadcast Communication]{A Session Type System for Asynchronous Unreliable Broadcast Communication}
\author[D. Kouzapas]{Dimitrios Kouzapas \lmcsorcid{0000-0001-9300-0146}}[a]
\author[R. F. Gutkovas]{Ram\=unas Forsberg Gutkovas \lmcsorcid{0009-0003-8712-825X}}[b]
\author[A. L. Voinea]{\texorpdfstring{\linebreak}{}A. Laura Voinea \lmcsorcid{0000-0003-4482-205X}}[c]
\author[S. J. Gay]{Simon J. Gay \lmcsorcid{0000-0003-3033-9091}}[c]
\address{KIOS Centre of Excellence, University of Cyprus, Cyprus}
\address{Ericsson AB, Gothenburg, Sweden}
\address{School of Computing Science, University of Glasgow, UK}
\begin{document}

	\maketitle

	\begin{abstract}

	Session types are formal specifications of communication
	protocols, allowing protocol implementations to be
	verified by typechecking.
	Up to now, session type disciplines have assumed that the
	communication medium is reliable, with no loss of messages.
	However, unreliable broadcast communication is common in a
	wide class of distributed systems such as ad-hoc and wireless
	sensor networks.
	Often such systems have structured communication patterns
	that should be amenable to analysis by means of session types,
	but the necessary theory has not previously been developed.
	%
	We introduce the Unreliable
	Broadcast Session Calculus,
	a process calculus with unreliable  broadcast communication,
	and equip it with a session type system that we show is sound.
	We capture two common operations, broadcast
	and gather, inhabiting dual session types.
	Message loss may lead to non-synchronised session endpoints.
	To further account for unreliability we provide with an
	autonomous recovery mechanism
	that does not require
	acknowledgements from session participants.
	Our type system ensures soundness, safety, and progress
	between the synchronised endpoints within a session.
	%
	We demonstrate the expressiveness of our framework
	by implementing Paxos, the textbook protocol for
	reaching consensus in an unreliable, asynchronous network.

	\end{abstract}


\section{Introduction}

Networks that use a shared, stateless communication medium, such
as wireless sensor networks or ad-hoc networks, are widely used.
They are complex in nature, and their design and verification is
a challenging topic. In the presence of unreliability, these
networks feature structured communication, which lends itself to
being formalised. The use of a formalism in a first step can
help in understanding and implementing network protocols, and
then move on to proving their correctness.

One such formalisation is session types~\cite{Honda1998}.
Session types specify structured communication (protocols) in
concurrent, distributed systems. They allow implementations of
such protocols to be verified by type-checking, for properties
such as session fidelity and deadlock freedom. A prolific topic of
research, session types have found their way into many programming
languages and paradigms~\cite{BETTYWG3}, with several session type
technologies developed~\cite{BETTY:tools}. An assumption of
session types so far has been the reliability of communication.
That is, that messages are never lost and are always delivered to the
receiver. However, it is not realistic to assume reliability in
ad-hoc and wireless sensor networks. Such networks use a shared,
stateless communication medium and broadcast to deliver messages.
Broadcast collisions and the stateless nature of the communication
medium will occasionally lead to message loss.

In this paper we introduce 
{\em the Unreliable Broadcast Session Calculus} (\UBSC)
accompanied by a safe and sound session type system.
The semantics of the calculus are inspired by the practice
of ad-hoc and sensor networks. More concretely, the semantics
describe asynchronous broadcast and gather operations between
a set of network nodes operating in an unreliable context.
Unreliable communication allows for link failures and lost messages,
that may lead to situations where a network node is not synchronised
with the rest of the computation. To cope with these cases, the
semantics of the calculus offer flexible recovery mechanisms that
follow the practice of ad-hoc and sensor networks.

We develop a session type system based on binary session
types~\cite{Honda1998} for the \UBSC.
%
%
The syntax and the duality operator for the \UBSC
session types are identical to the syntax and duality operator
for standard binary session types.
%
However, a non-trivial type system 
describes the interaction between multiple nodes,
in contrast to binary session types that describe
interaction only between two processes.
Moreover, the type system, ensures session safety and
soundness of the communication
interaction and of the recovery mechanisms.
The main idea for the type system is to interpret
the case of failure and message loss
as a node
that is not synchronised with the overall session type protocol.
%
The session
type system ensures that the conditions for recovery are
adequate to support safe session interaction.
The type system is proved to be sound via a type preservation
theorem and safe via a type safety theorem stating that a 
well-typed network will never reduce to an undesirable/error
state.
The type system also ensures progress within a session, under
standard conditions of non-session interleaving.
An additional progress result states that every non-synchronised
node within a typed process may eventually recover.

Notably, our system does not introduce the description
of failure and recovery at the type level as
in~\cite{10.1007/978-3-319-60225-7_1,CarboneHY08,CGY2014}.
Our approach assumes that every interaction described at the
type level can fail and can lead to non-synchronised session endpoints.
Moreover, all non-synchronised endpoints can eventually recover. 
In~\cite{10.1007/978-3-319-60225-7_1,CarboneHY08,CGY2014},
non-synchronised endpoints (or multiparty roles)
recover at runtime using safe recovery as permitted by the 
session type description. Thus, the description of recovery
actions at each communication interaction would require long and
tedious description of protocols.
As shown in Section~\ref{subsec:os}, a second advantage of our approach
is the lack of global, and often complicated and unnatural, synchronisation
between failed session endpoints (or multiparty roles) in order for the
session to safely recover.


\subsection{Unreliable Broadcast Session Communication}

The communication assumptions of the \UBSC are based on
the practice of networks such as ad-hoc and wireless networks,
and Ethernet networks.
Specifically, the semantics of the calculus are justified by
the following assumptions:
\begin{itemize}[align=left]
	\item[A1.]	The network nodes operate within a shared,
			stateless communication medium.
			The term broadcast in these networks is understood
			when a node transmits a message within the shared
			communication medium, which is then ``instantly''
			received by a set of nodes that share the
			communication medium.

	\item[A2.]	The communication medium is considered unreliable,
			therefore transmitted messages may fail to be delivered
			to some or to all the receivers.
			In practice, unreliability arises for various reasons,
			e.g.~weak transmission signal or message collisions.

	\item[A3.]	A network node transmits a message only once;
			we assume that a message is received at most once,
			i.e.~there is no duplication of successfully
			received messages.

	\item[A4.]	There is no mechanism to acknowledge a successful
			message reception,
			i.e.~upon reception a receiver network node does
			not reply with an acknowledgement message.
			Lack of acknowledgement messages, implies that a
			transmitting network node cannot possibly know the
			subset of network nodes that successfully
			received the transmitted message.

	\item[A5.]	Messages are either lost, or delivered without being corrupted.
			This can be achieved by assuming an underlying mechanism that
			detects, e.g.~through parity bit comparison, and rejects 
			all corrupted messages. Rejected messages are considered lost.

	\item[A6.]	The network nodes operate at an arbitrary speed. Each network
			node decides to perform an interaction (either transmission, or
			internal processing) at an arbitrary moment.
\end{itemize}


The semantics of the \UBSC
deploy the mechanisms that support safe session interaction
and, at the same time, respect assumptions A1-A6. Specifically, the
semantics of the calculus deploy mechanisms that respect
the following session type principles:
%
\begin{itemize}[align=left]
	\item[S1.]	
			We assume binary interaction within a session.
			The interaction within a session name $s$ is
			defined between 
			a $\aggr s$-endpoint, uniquely used by a single network
			node, and a $s$-endpoint shared by an arbitrary
			number of network nodes.

			The one to many correspondence between endpoints
			gives rise to a broadcasting operation, as
			imposed by assumption A1,
			where the $\aggr s$-endpoint
			broadcasts a value towards the $s$-endpoints, and
			a gather operation, where the $\aggr s$-endpoint
			gathers messages sent from the $s$-endpoints.

	\item[S2.]
			Safe session interaction requires that messages are
			delivered in the order they are transmitted.
			Ordered message delivery is a direct consequence of
			assumptions A1, A3, and A5.

	\item[S3.]	Session interaction is subject to unreliability.
			Assumptions A2 and A6 imply that session endpoints
			may become non-synchronised with the overall session
			interaction.

	\item[S4.]
			Safe session interaction requires
			that two session endpoints can communicate
			whenever they are synchronised.

	\item[S5.]
			To ensure progress, endpoints that cannot progress, e.g.~because
			the endpoint is not synchronised or because the opposing endpoint
			is deadlocked, need to be able to autonomously recover.
			Autonomous recovery is a consequence of assumptions A4 and A6
			that imply that a network node does not have global information
			whether an endpoint can progress or not, a behaviour typical
			for ad-hoc and wireless sensor networks.

\end{itemize}

Below, we use a simple example to introduce the basic
assumptions of the \UBSC.
Prior to the example, consider the following
informal presentation of the syntax of the calculus.
The calculus defines the syntax for {\em network nodes}.
Specifically, a network node,
\[
	\textstyle
	N = \net{P \pll \prod_{i \in I} \squeue{s_i}{c}{\tilde{m}_i}}
\]
composes a binary session $\pi$-calculus~\cite{Honda1998} process $P$
together with a finite parallel composition of session buffer terms,
$\prod_{i \in I} \squeue{s_i}{c}{\tilde{m}_i}$.

A session $\pi$-calculus process of the form $\pout{s}{v} P$ denotes a process
that is ready to send value $v$ via channel (or session) endpoint $s$ and proceed
as process $P$. Dually, a process of the form $\pin{s}{x} P$ denotes a process
that is ready to receive a value on channel endpoint $s$ and substitute it on variable
$x$ within process $P$.
A buffer term, $\squeue{s}{c}{\tilde{m}}$, represents
a first-in first-out message buffer that interacts on session endpoint $s$.
The buffer stores messages $\tilde{m}$ and keeps track of the session
endpoint state using integer counter $c$.
Multiple network nodes can be composed in parallel
$N_1 \npll \dots \npll N_n$ to form a network. We often
write term,
$\prod_{j \in J} \net{P_j \pll \prod_{i \in I_j} \squeue{s_{i}}{c}{\tilde{m}_{i}}}$
to represent a parallel composition of network nodes.

The next example demonstrates the semantics and basic typing
ideas for the \UBSC.
Variations of the example will be used as a running example
throughout the paper.
\begin{exa}[A simple Heartbeat Protocol]
	\label{ex:intro_heartbeat}
	The heartbeat protocol, is a simple sensor
	network protocol where one, or more, network nodes
	periodically broadcast a {\em heartbeat} message
	to signal that they are alive.
	In the \UBSC, a simple heartbeat
	interaction can be specified with the following network.
	\[
	\begin{array}{c}
		\Heartbeat =
				\net{\pout{\aggr s}{\heartbeat} P_0 \pll \squeue{\aggr s}{0}{\equeue}}
				\,\,\npll\,\,
				\prod_{i \in I}
				\net{\pin{s}{x} P_i \pll \squeue{s}{0}{\equeue}}
	\end{array}
	\]
%
	The network implements the requirements of S1 on session channel $s$;
	the $\aggr s$-endpoint is uniquely used by network node
	$\net{\pout{\aggr s}{\heartbeat} P_0 \pll \squeue{\aggr s}{0}{\equeue}}$,
	and the $s$-endpoint is shared among an arbitrary number of network nodes,
	represented by network
	$\prod_{i \in I} \net{\pin{s}{x} P_i \pll \squeue{s}{0}{\equeue}}$.

	Under the requirements of S2 for ordered delivery of messages,
	it is typical for nodes in ad-hoc and wireless networks,
	to deploy a received message buffer that follows a first-in
	first-out policy.
	The use of a message buffer leads to asynchronous communication
	semantics, where we first observe a message stored in a message buffer
	and then extracted from the buffer for processing.
	At each buffer there is also a counter which is increased
	with every endpoint interaction and keeps track of the
	session endpoint state.
	The counter is used by the semantics to maintain interaction
	between synchronised endpoints,	as required by S4.
	%
	In the example above,  each session endpoint is associated with a
	corresponding empty message buffer, $\squeue{s}{0}{\equeue}$,
	(similarly $\squeue{\aggr s}{0}{\equeue}$ for the $\aggr s$-endpoint).
	The state counter designates that all endpoints are in the $0$ state.
	%
	%

	A heartbeat (\heartbeat) message is broadcast on the
	$\aggr s$-endpoint and it is received by the $s$-endpoints,
	as expected by the requirements of S1.
	%
	As required by S3, the broadcast interaction is subject to
	unreliability and is captured by the reduction:
	\[
	\begin{array}{rcl}
		\Heartbeat \reduces  \Heartbeat' = &&
				\net{P_0 \pll \squeue{\aggr s}{1}{\equeue}}
		\\		&\npll&
				\prod_{j \in J}
				\net{\pin{s}{x} P_j \pll \squeue{s}{1}{\heartbeat}}
				\,\,\npll\,\, 
				\prod_{k \in K}
				\net{\pin{s}{x} P_k \pll \squeue{s}{0}{\equeue}}
	\end{array}
	\]
	where $I = J \cup K$ and $J, K$ are disjoint.
	The broadcast message is received instantly during the transmission
	time.
	Failure of reception 
	is modelled by the fact that
	only an arbitrary subset of the network nodes, indexed by set $J$,
	accept the heartbeat message.
	Each successful receiver uses the session queue to buffer the
	heartbeat message and model asynchronous communication.

	Following the requirements of S4, the reduction semantics,
	defined in Section~\ref{subsec:os}, allow for interaction only between
	the $\aggr s$-endpoint and the $s$-endpoints that are in
	the same state.
	%
	To maintain synchronisation, a  successful interaction increases
	the session state counter on each buffer by $1$. The network nodes
	that failed to receive the heartbeat messages, indexed by set $K$,
	do not update their session counter, therefore they are considered
	non-synchronised and cannot continue to safely interact with the
	$\aggr s$-endpoint.

	In practice, state counting implies an underlying mechanism
	where each transmitted message includes a header that tags
	the message with the corresponding session and state counter.
	A network node can only accept a message if it implements the
	corresponding session endpoint and the session endpoint is
	synchronised with the state of the message.
	In a different case the message is dropped
	and considered lost.

	Each receiver will then interact locally with its own queue
	and extract the message for processing. For example,
	network node, $\net{\pin{s}{x} P_q \pll \squeue{s}{1}{\heartbeat}}$
	will interact with its local buffer using the reduction:
	\[
	\begin{array}{rcl}
		\Heartbeat' \reduces &&
				\net{P_0 \pll \squeue{\aggr s}{1}{\equeue}}
				\,\,\npll\,\,  
				\net{P_q \subst{\heartbeat}{x} \pll \squeue{s}{1}{\equeue}}
		\\
				&\npll&  
				\prod_{j \in J\backslash\set{q}}
				\net{\pin{s}{x} P_j \pll \squeue{s}{1}{\heartbeat}}
				\,\,\npll\,\,
				\prod_{k \in K}
				\net{\pin{s}{x} P_k \pll \squeue{s}{0}{\equeue}}
	\end{array}
	\]
	In network $\Heartbeat'$, the nodes described by set $K$
	appear to be stuck because the receiving $s$-endpoints are
	non-synchronised with the $\aggr s$-endpoint.
	The requirements of S5 impose the development of recovery
	semantics, in order to observe session progress.
	For example, 
	recovery on network $\Heartbeat'$ is described by the interaction.
	\[
	\begin{array}{rcl}
		\Heartbeat' \reduces &&
				\net{P_0 \pll \squeue{\aggr s}{1}{\equeue}}
				\,\,\npll\,\, 
				\net{P_q \subst{\unit}{x} \pll \squeue{s}{1}{\equeue}}
		\\
				&\npll&  
				\prod_{j \in J}
				\net{\pin{s}{x} P_j \pll \squeue{s}{1}{\heartbeat}}
				\,\,\npll\,\,
				\prod_{k \in K\backslash\set{q}}
				\net{\pin{s}{x} P_k \pll \squeue{s}{0}{\equeue}}
	\end{array}
	\]
	Whenever a session endpoint is possibly unable to progress,
	the reduction semantics use a default expression, in this
	case the unit (\unit) expression, to substitute the receive
	variable.

	However, due to assumptions A4 and A6, a network node does
	not have knowledge whether an endpoint can progress or not.
	This situation also exists in the practice of ad-hoc and wireless
	sensor networks that approximate lack of progress using
	necessary but not sufficient conditions,
	combined with internal mechanisms, e.g.~timeout signals, to achieve
	recovery and progress. Here the conditions that are necessary but
	not sufficient to detect lack of progress
	are the receiving prefix and the empty $s$-buffer.

	The typing system ensures the conditions for a safe recovery.
	The full semantics of the calculus offer the programmer multiple
	and flexible recovery choices to deploy whenever the session safety
	conditions are met.
\end{exa}

\subsection{Related Work}
The \UBSC follows the principles of~\cite{Kouzapas2014}, where
the authors provided semantics to a syntax for an unreliable broadcasting session
calculus through an encoding into the psi-calculus~\cite{Borgstrom2011}.
The work also provides with a sound session type system for the
aforementioned syntax.
In comparison to~\cite{Kouzapas2014},
in this work we define the syntax and semantics of a more sophisticated
calculus that includes distinction between processes and nodes,
asynchronous semantics through buffers, and more advanced recovery mechanisms.
The type system is also non-trivial, despite the fact that the session types
syntax remains the same. In contrast to~\cite{Kouzapas2014}, we also prove type
safety and progress results.

Asynchronous session semantics and their corresponding session typing systems are studied,
both for binary~\cite{DBLP:conf/ecoop/HuKPYH10,DBLP:journals/mscs/KouzapasYHH16}
and multiparty~\cite{DBLP:conf/popl/HondaYC08,DBLP:journals/mscs/CoppoDYP16}
session types.
However, our calculus is the first calculus that
supports asynchronous broadcast semantics, in contrast to
point-to-point communication
proposed by state-of-the-art session type calculi.

Session types for reliable gather semantics were proposed
in~\cite{DBLP:journals/corr/abs-1203-0780} but
a corresponding session calculus, was never proposed.
%
Structured multiparty session interaction for a parametrised (arbitrary)
number of participants is studied
in~\cite{DBLP:journals/corr/abs-1208-6483,DBLP:journals/soca/NgY15},
where the authors propose a framework for describing the behaviour
of an arbitrary number of participants that implement different
interacting roles. A parametrised multiparty session type protocol
can be instantiated to a specific number of participants.
Our framework can handle an arbitrary number of agents within a session
as a result of the shared unreliable communication medium. Agents
might arbitrarily lose messages and become non-synchronised and
arbitrarily recover and re-synchronise.

Session type systems in an unreliable context are studied in~\cite{10.1007/978-3-319-60225-7_1,CarboneHY08,CGY2014}. 
Adameit et al.~\cite{10.1007/978-3-319-60225-7_1} extend
multiparty session types with optional blocks that cover a limited class of
link failures. Specifically, the authors extend the typing
syntax with constructs that allow the description of a default value each role
needs to receive whenever there is a possibility of link of failure. 
In contrast, our framework does not require to describe failures at the type level,
rather than it enforces a number of conditions for a session safe recovery whenever
a failure occurs. Furthermore, it assumes that every interaction is subject
to failure and proposes several recovery mechanisms,
inspired by the practice of wireless sensor-networks. Our calculus, also,
supports broadcast semantics in contrast to point-to-point communication
supported by~\cite{10.1007/978-3-319-60225-7_1}.

Recovery after failure, was introduced in the shape of exception handling in
the binary~\cite{CarboneHY08} and multipary~\cite{CGY2014} session types.
Both works present a procedure where upon an exception during communication
the session endpoints are informed and safely handle the exception.
Similarly to~\cite{10.1007/978-3-319-60225-7_1}, the type discipline 
proposes a complicated syntax for exception description at the type level.
Moreover, exception handling within session types assumes strong global
synchronisation requirements among the session endpoints.
In contrast, our recovery semantics are implemented locally, thus being
more natural and general; each network node can autonomously and safely
recover from a communication failure, by choosing between
several recovery mechanisms.

The report in~\cite{ftmpst} presents fault-tolerant multiparty session types;
an extension to multiparty session types
that handles failures such as unreliable communication and process crashes,
applied to a calculus that supports failure patterns. Moreover, this work
demonstrates fault tolerant multiparty session types with an application
on the rotating coordination algorithm.
An additional work that describes recovery patterns is found in~\cite{magpi},
which develops an asynchronous multiparty framework that accommodates
non-Byzantine faults in unreliable settings. 
The work in~\cite{ftmpst} has similar recovery semantics as our work adapted
in the context of multiparty session types. Nevertheless,
we are the first to define patterns of broadcast and gather that are found
in more dynamic systems with share communication medium, such as
wireless sensor networks. Moreover, our work is the first to 
demonstrate an implementation of the Paxos protocol defined
as sessions of structured interaction.

\subsection{Overview}

In Section~\ref{sec:broadcast-calculus} we present
the Unreliable Broadcast Session Calculus --- \UBSC.
\UBSC is accompanied by the first type system for structured
communication in an unreliable broadcast setting.
The calculus develops all the necessary mechanisms
to satisfy the requirements S1-S5 under the assumptions
A1-A6.

Section~\ref{sec:session-types} presents the \UBSC session
type system.
Interestingly, we do not introduce the
description of unreliability at the type level, keeping the
session type syntax identical to standard binary session
type syntax.
The type system makes use of the synchronisation notion
to cope with non-synchronised session endpoints and
to ensure session duality.

In Section~\ref{sec:soundness}, the type system is proved
to be sound via a type preservation theorem and safe
via a type safety theorem. A process is safe whenever
it can never reduce to an error process. In turn, 
error processes are a class of processes that do not
respect the session types principles.
Section~\ref{sec:soundness} also includes a set of
progress results that ensure safe progress
within a session and safe session recovery.



%
%


We demonstrate the expressiveness of
our framework in \Section{paxos} with a session
specification of the Paxos consensus algorithm,
which is the standard consensus algorithm in distributed 
systems. 
We argue that our framework can provide support for the
implementation of such protocols and their extensions.

Finally, section~\ref{sec:conclusion} discusses the possibility for
future work and concludes the article.
	

\section{Asynchronous Unreliable Broadcast Session Calculus}
\label{sec:broadcast-calculus}

In this Section, we define the syntax and the semantics
for the Unreliable Broadcast Session Calculus.
The semantics are extensively demonstrated using several
examples.

\subsection{Syntax.}
\label{subsec:syntax}

Assume the following disjoint sets of names/variables:
$\mathcal C$ is a countable set of \emph{shared channels}
ranged over by $a,b, \dots$;
$\mathcal S$ is a countable set of \emph{session channels}
ranged over by $s,s',\dots$, where each session channel has two
distinct endpoints $s$ and $\aggr s$
(we write $\kappa$ to denote either $s$ or $\aggr s$);
$\mathcal V$ is a countable set of \emph{variables} ranged
over by $x,y,z,\dots$;
and $\mathsf{Lab}$ is a countable set of labels ranged over
by $\ell, \ell', \dots$.
We let $n$ range over
shared channels or sessions.
We write $k$ to denote either $\kappa$ or $x$
or $\aggr x$, where $\aggr x$ is used to
distinguish a variable used as a $\aggr s$-endpoint.

Let $c$ to range over natural numbers, $\mathbb{N}$ and let
\true and \false be the boolean values.
Let $\mathcal E$ be a non-empty set of {\em expressions}
ranged over by $e,e',\dots$. Elements of $\mathcal E$
contain natural numbers and boolean values, and may contain variables.
Function $\fv{e}$ returns the variables in expression $e$. 
Expressions that do not contain variables, i.e.~$\fv{e} = \emptyset$, are called closed.
%
Assume a binary operation
$\mult$ on $\mathcal E$ called {\em aggregation} operator,
and an element $\unit$ of $\mathcal E$ called {\em unit}.
We define an evaluation operator $\eval{}: \mathcal E \to \mathcal E$
from closed expressions to single value expressions (natural numbers, boolean values, \unit value, etc.). 
Let $\mathcal F \subseteq \mathcal E$ be a non-empty set
of {\em conditions} ranged over by \econd. Closed conditions
are evaluated to boolean values $\eval{\econd} = \true$ or $\eval{\econd} = \false$.
%
%
%
We use metavariable $u$ to denote either
shared names, session names, expressions or variables.

\begin{figure}
	\[
		\arraycolsep=3pt
		\begin{array}{rcll@{\hskip 25pt}rcll}
			a, b, \dots &\in& \mathcal{C}	& \SharedC &		s, s', \dots &\in& \mathcal{S} & \SessionC
			\\
			\ell, \ell', \dots &\in& \mathsf{Lab} & \Labels &		x, y, \dots &\in& \mathcal{V}	&	\Variables
			\\
			e, e', \dots &\in& \mathcal{E} & \Expressions &		c, c', \dots &\in& \mathbb{N}	&	\Counter
			\\[2mm]
			\kappa	&\bnfis& \aggr s \bnfbar s &	\Endpoint 	&k 	&\bnfis&	\kappa \bnfbar x & \Session
			\\
			v &\bnfis& a \bnfbar \kappa \bnfbar e	& \Identifier
			\\[2mm]
			P, Q, R	&\bnfis& \nil & \Inact 				&	&\bnfbar&	\sel{k}{\ell} P & \Selection
			\\	&\bnfbar& \request{a}{\aggr x} P & \Request	&	&\bnfbar&	\branchDef{k}{\ell_i: P_i}{R}_{i \in I} & \Branching
			\\	&\bnfbar& \accept{a}{x}{P} & \Accept		&	&\bnfbar&	\ndet{P}{P} & \NDet
			\\	&\bnfbar& \pout{k}{e} P & \Send			&	&\bnfbar&	\appl{D}{\tilde{v}} & \PVar
			\\	&\bnfbar& \pdin{k}{x}{e} P & \Rcv		&	&\bnfbar&	\cond{\econd}{P}{P} & \Conditional
			\\	&\bnfbar& \multicolumn{5}{l}{\Def{\set{\abs{D_i}{\tilde{x}_i} \defeq P_i}_{i \in I}}{P}} & \Recursion
			\\[4mm]
			N, M
				&\bnfis&	\net{P \pll B} &\Node 		&m	&\bnfis&	e \bnfbar \ell	& \Message
			\\	&\bnfbar& 	N \npll M &\Parallel 		&h	&\bnfis&	(c, e)	& \TMessage
			\\	&\bnfbar&	\newn{n} N& \Restriction &
			B	&\bnfis& 	\ebuffer \bnfbar B \pll \squeue{s}{c}{\pol{m}} \bnfbar B \pll \squeue{\aggr s}{c}{\pol{h}} & \Buffer
		\end{array}
	\]
	\caption{Syntax of Processes, Buffers, and Networks \label{fig:syntax}}
\end{figure}

The syntax of processes, $P, Q, R \in \Proc$, buffers, $B \in \Buf$, 
and networks, $N \in \Net$, is then defined in Figure~\ref{fig:syntax}.
Functions returning the set of free names, $\fn{P}$,
bound names, $\bn{P}$, and free and bound names, $\nam{P}$
are defined in the expected way.
Terms \Request and \Accept bind $\aggr x$ and $x$ in $P$ respectively,
and term \Rcv binds $x$ in $P$.
In term \Recursion, $\Def{\set{\abs{D_i}{\tilde{x}_i} \defeq P_i}_{i \in I}}{P}$,
$\abs{D_i}{\tilde{x}_i}$ binds $\tilde{x}_i$ in $P_i$.
Moreover, terms $\set{D_i}_{i \in I}$ 
are bound in $P$.
%
Term \Inact is the inactive term.
Terms \Request and \Accept express the processes that
are ready to initiate
a fresh session on a shared channel $a$ via a request/accept
interaction, respectively.
%
Term \Send defines a prefix ready to send an expression on session $k$.
Term \Rcv defines a prefix ready to receive a message on session
$k$ and substitute it on $x$. The prefix also provides with an
expression $e$, called default expression, that will be
substituted on $x$ in the case of recovery. 
We often write $\pin{k}{x} P$ for $\pdin{k}{x}{\unit} P$.
The select prefix, defined by term \Selection, is ready to send a
label $\ell$ over session $k$.
Dually, the branch prefix, defined by term \Branching,
is ready to receive a label from a predefined set of label
$\set{\ell_i}_{i \in I}$ on session $k$. Moreover,
the branch prefix defines a default label, $\default$, with
a process $R$ used for in the case of recovery.

In a highly dynamic unreliable environment, it is convenient
to consider non-deterministic choice --- term \NDet.
We write $\sum_{1 \leq i \leq n} P_i$ for
process $P_1 \Sum \dots \Sum P_n$.
Term \Conditional is a standard conditional term.
Finally, terms \Recursion and \PVar
express a named recursive process definition with parameters,
cf.~\cite{Honda1998}.
We assume a standard variable substitution over processes $P \subst{e}{x}$,
inductively defined to include a standard variable substitution over expressions, $e'\subst{e}{x}$.
Moreover, the process variable substitution $P \subst{\abs{P'}{\tilde{x}}}{D}$
is defined inductively with
$\appl{D}{\tilde{v}} \subst{\abs{P}{\tilde{x}}}{D} = P \subst{\tilde{v}}{\tilde{x}}$
as the basic definition case.
%
Concurrency is introduced at the network level,
rather than at process level.

Term $B$, is a parallel composition of session
buffers, that are used to store messages and keep track
of the session state via natural number $c$.
Buffer terms are used to model a form of asynchrony that
preserves the order of received messages, as required by S2.
The purpose of counter $c$ is to keep track of the
session state in the presence of communication failure,
and to synchronise the interaction between session prefixes,
as required by S4.
Message loss in an unreliable setting leads to
session endpoints that are not synchronised with the
overall protocol, as expected by S3.
To ensure correctness, many frameworks and algorithms
that operate in an unreliable setting use message tagging
or state counting; for example, the TCP/IP protocol tags
packets with unique sequential numbers to maintain consistency
in the case of packet loss.
In our setting, state counting is necessary
to maintain the correct semantics within a session.
The type system in Section~\ref{sec:session-types}
provides with static guarantees for a session despite 
the dynamic nature of session reduction.

Buffer terms on $s$-endpoints store {\em messages} $m$ that range
over expressions $e$ and labels $\ell$. 
Buffer terms on $\aggr s$-endpoints store messages,
$h$, which are expressions tagged with a session counter, $h = (c, e)$.
The session counter in $h$ distinguishes the session
state, at which the expression $e$ needs to be received.

Network \Node consists of a process $P$,
and the necessary buffer
terms, $B$, used for asynchronous session communication.
A process may participate in several sessions, and therefore,
more than one buffer term may be present in a node.
The type system ensures that there is no more than one
buffer term on the same session in each network node.
We write $\net{P}$ for node
$\net{P \pll \ebuffer}$.


A network is a parallel composition of nodes --- term \Parallel.
We write $\Par{i}{I}{N_i}$ for the parallel composition of
$N_1 \npll \cdots \npll N_n$
for (possibly empty) $I = \set{1, \dots, n}$. 
Network \Restriction binds both session and shared channels.
We write $\newn{\seq n} N$ for the network $\newn{n_1} \dots \newn{n_m} N$,
where the sequence $\seq n$ may be empty.
We also extend the $\fn{\cdot}$ function to networks.

\subsection{Operational Semantics.}
\label{subsec:os}
The operational semantics are defined as a reduction relation
on networks with the use of a standard structural congruence relation.

\begin{figure}
	\[
	\begin{array}{c}
		P_1 \Sum P_2 \equiv P_2 \Sum P_1
		\andalso \andalso
		(P_1 \Sum P_2) \Sum P_3 \equiv P_1 \Sum (P_2 \Sum P_3)
		\andalso \andalso
		P \equiv_\alpha P'
		\\
		\Def{\set{\abs{D_i}{\tilde{x}_i} \defeq P_i}_{i \in I}}{P} \equiv \Def{\set{\abs{D_i}{\tilde{x}_i} \defeq P_i}_{i \in I}}{P \subst{\abs{P_k}{\tilde{x}_k}}{D_k}} \quad k \in I
		\\[6mm]
		B \pll B' \equiv B' \pll B
		\andalso \andalso
		(B \pll B') \pll B'' \equiv B \pll (B' \pll B'')
		\\[6mm]
		N_1 \npll N_2 \equiv N_2 \npll N_1
		\andalso \andalso
		(N_1 \npll N_2) \npll N_3 \equiv N_1 \npll (N_2 \npll N_3)
		\andalso \andalso
		N \equiv \net{\nil} \npll N
		\\[2mm]
		\newn{n} \newn{m} N \equiv \newn{m} \newn{n} N
		\andalso \andalso
		\newn{n} N \npll M \equiv \newnp{n}{N \npll M} \quad \IF n \notin \fn{M}
		\\[2mm]
		\net{P \pll B} \equiv \net{P' \pll B'} \quad \IF P \equiv P' \AND B \equiv B'
		\andalso \andalso
		N \equiv_\alpha N'
	\end{array}
	\]
	\caption{Structural Congruence for Processes, Buffers, and Networks \label{fig:structural-congruence}}
\end{figure}

\subsubsection*{Structural Congruence}
The structural congruence on processes, resp.~buffers and networks,
is defined to be the least congruence relation satisfying the 
rules in Figure~\ref{fig:structural-congruence}.
Structural congruence on processes considers commutativity
and associativity of the \Sum operator, and includes the
unfolding of definitions and alpha-conversion.
Named definition substitution is defined up-to structural
congruence.
%
%
The parallel composition is commutative and associative for
buffer terms and for network terms, with $\net{\nil}$ as the
unit for network terms.
Name restriction order is irrelevant, and moreover,
the scope of restricted channels can be extruded.
The clause for network nodes simply bridges the buffer and process
congruences with the structural congruence for network. Finally,
structural congruence allows alpha renaming for networks.

The operational semantics is defined as the least relation on networks,
$N \reduces N'$, satisfying the rules given in  Figure~\ref{fig:reduction-networks} (Network Semantics),
Figure~\ref{fig:reduction-relation}
(Process Communication Semantics), and Figure~\ref{fig:reduction-recovery} (Recovery Semantics).


\begin{figure}
	\[
	\def\arraystretch{1.2}
	\begin{array}{c}
		\tree {
			\net{P_1 \pll B} \npll N
			\reduces
			\net{P' \pll B'} \npll N'
		}{
			\net{P_1 \Sum P_2 \pll B} \npll N
			\reduces
			\net{P' \pll B'} \npll N'
		}{\NonDet}
		\\[8mm]
		\tree[\RDef] {
			\net{P \pll B} \npll N
			\reduces
			\net{P' \pll B'} \npll N'
		}{
			\net{\Def{\set{\abs{D_i}{\tilde{x}_i} \defeq P_i}_{i \in I}}{P} \pll B} \,\,\npll\,\, N
			\reduces
			\net{\Def{\set{\abs{D_i}{\tilde{x}_i} \defeq P_i}_{i \in I}}{P'} \pll B'} \,\,\npll\,\, N'
		}{}
		\\[12mm]
		\tree[\RPar] {
			N \reduces N'
		}{
			N \npll M \reduces N' \npll M
		}{}
		\andalso
		\tree[\RCong] {
			N \equiv N_1
			\andalso
			N_1 \reduces N_2
			\andalso
			N_2 \equiv N'
		}{
			N \reduces N'
		}{}
		\andalso
		\tree[\RRes] {
			N \reduces N'
		}{
			\newn{n} N
			\reduces
			\newn{n} N'
		}{}
	\end{array}
	\]
        \caption{Reduction rules for networks.}
        \label{fig:reduction-networks}
\end{figure}

\subsubsection{Operational Semantics for Networks}
In Figure~\ref{fig:reduction-networks} rules
\NonDet, \RDef, \RPar, \RRes, and \RCong are standard
congruence rules for operator \Sum, named definition,
parallel composition, name restriction,
and structural congruence, respectively.


\begin{figure}
	\[
	\def\arraystretch{1.2}
	\begin{array}{cl}
		\tree {
			s \text{ fresh}
		}{
			\net{\request{a}{\aggr x} P \pll B} \,\,\npll\,\,
			\Par{i}{I}{\net{\accept{a}{x} P_i \pll B_i}}
			\\
			\qquad\qquad
			\reduces
			\newnp{s}{\net{P \subst{\aggr s}{\aggr x} \pll B \pll \squeue{\aggr s}{0}{\equeue} } \,\,\npll\,\,
			\Par{i}{I}{\net{P_i \subst{s}{x} \pll B_i \pll \squeue{s}{0}{\equeue} }}}
		}{}
		& \Connect
		\\[11mm]
		\begin{array}{l}
			\tree {
				\eval{e} = e'
			}{
			\net{\pout{\aggr s}{e} P \pll B \pll \squeue{\aggr s}{c}{\pol{m}} } \,\,\npll\,\,
			\Par{i}{I}{\net{P_i \pll B_i \pll \squeue{s}{c}{\pol{m}_i}}}
			\\
			\qquad\qquad 
			\reduces
			\net{P \pll B \pll \squeue{\aggr s}{c + 1}{\pol{m}}} \,\,\npll\,\,
			\Par{i}{I}{\net{P_i \pll B_i \pll \squeue{s}{c + 1}{\pol{m}_i \cat e'} } }
			}{}
		\end{array}
		& \Broadcast
		\\[11mm]
		\tree {
			c_1 \geq c_2
			\andalso
			\eval{e} = e'
		}{
			\net{\pout{s}{e} P_1 \pll B_1 \pll \squeue{s}{c_1}{\pol{m}}} \,\,\npll\,\,
			\net{P_2 \pll B_2 \pll \squeue{\aggr s}{c_2}{\pol{h}} }
			\\
			\qquad\qquad
			\reduces
			\net{P_1 \pll B_1 \pll \squeue{s}{c_1 + 1}{\pol{m}} } \,\,\npll\,\,
			\net{P_2 \pll \squeue{\aggr s}{c_2}{\pol{h} \cat (c_1, e')} }
		}{}
		& \Unicast
		\\[11mm]
		\net{\pdin{s}{x}{e'} P \pll B \pll \squeue{s}{c}{e \cat \pol{m}}}
		\reduces
		\net{P \subst{e}{x} \pll B \pll \squeue{s}{c}{\pol{m}}}
		& \Receive
		\\[5mm]
		\tree {
			\pol{h}' = \newbuffer{\pol{h}}{c}
			\andalso
			e = \gthr{\pol{h}}{c}
		}{
			\net{\pin{\aggr s}{x} P \pll B \pll \squeue{\aggr s}{c}{ \pol{h} }}
			\reduces
			\net{P \subst{e}{x} \pll B \pll \squeue{\aggr s}{c + 1}{ \pol{h}' }}
		}{}
		& \Gather
		\\[9mm]
		\begin{array}{l}
			\net{\sel{\aggr s}{\ell} P \pll B \pll \squeue{\aggr s}{c}{\pol{m}}} \,\,\npll\,\,
			\Par{i}{I}{\net{P_i \pll B_i \pll \squeue{s}{c}{\pol{m}_i}}}
			\\
			\qquad\qquad 
			\reduces
			\net{P \pll B \pll \squeue{\aggr s}{c + 1}{\pol{m}}} \,\,\npll\,\,
			\Par{i}{I}{\net{P_i \pll B_i \pll \squeue{s}{c + 1}{\pol{m}_i \cat \ell} }}
		\end{array}
		& \Select
		\\[7mm]
		\tree {
			k \in I
		}{
			\net{\branchDef{s}{\ell_i: P_i}{R}_{i \in I} \pll B \pll \squeue{s}{c}{\ell_k \cat \pol{m}}}
			\reduces
			\net{P_k \pll B \pll \squeue{s}{c}{\pol{m}}}
		}{}
		& \Branch
	\end{array}
	\]
        \caption{Reduction rules for Broadcast and Gather Operations.}
        \label{fig:reduction-relation}
\end{figure}

\subsubsection{Operational Semantics for Broadcast and Gather Operations}

Figure~\ref{fig:reduction-relation} defines the reduction semantics for
the broadcast and gather operations.
Rule \Connect establishes a session between a request network node and
several accept network nodes. It is a broadcast (one-to-many) communication
between the request network node and an arbitrary (possibly empty) set
of accepting network nodes described by $I$. Unreliability is
achieved because set $I$ is chosen arbitrarily from a set of parallel
nodes and then using the reduction rule \RPar to close the reduction of
the entire network.
The interaction creates a fresh session $s$ with a unique $\aggr s$-endpoint
in the request side and a shared $s$-endpoint on the accept sides.
A corresponding session buffer term is created in all connected network nodes.

There are two kind of session communication distinguished by the interaction
between the $\aggr s$-endpoint and $s$-endpoint: broadcast communication; and
unicast communication.
Rule \Broadcast defines the asynchronous broadcasting semantics;
the $\aggr s$-endpoint broadcasts evaluated message $e'$, with the
message being enqueued to the buffers of the $s$-endpoint,
thus modelling asynchrony.
The crucial condition for a broadcast interaction is that
all participating nodes are synchronised in the same session
state $c$. After the interaction all
participating nodes update their state to $c+1$.
Unreliability is modelled by the fact that the, possibly empty,
set $I$ is chosen arbitrarily from a set of parallel nodes,
and then using reduction rule \RPar to close the reduction
for the entire network.

Rule \Unicast defines the case where an $s$-endpoint enqueues an
evaluated message $e'$ in the buffer of the unique $\aggr s$-endpoint.
The message $e'$, when enqueued, is tagged with the sender's session
state $c$, as $h = (c, e')$, since all the messages on the same
state will be gathered by the $\aggr s$-endpoint (rule \Gather).
The session state counter of the $s$-endpoint will be updated,
in contrast to the state counter of the $\aggr s$-endpoint.
The $\aggr s$-endpoint remains in the same state since
it needs to continue the \Unicast interactions with other
network nodes.
Increasing the state counter does not disallow the $s$-endpoint
to be involved in a subsequent \Unicast interaction, i.e~the
$s$-endpoint is not considered non-synchronised.
This is captured by condition $c_1 \geq c_2$ that ensures that a
$s$-endpoint exists in the same or in a later session state from
the $\aggr s$-endpoint, prior to interacting.
The \Unicast semantics follow the practice of ad-hoc and sensor
networks, where a sender node cannot locally know whether the
receiving node has performed a gather prior to sending
a subsequent message.
For further intuition on the \Unicast rule, see Example~\ref{ex:reduction_heartbeat}.

Rule \Receive defines the interaction of a process with its $s$-endpoint
buffer; a \Rcv process on endpoint $s$ receives and substitutes on variable $x$
the next available expression, $e$, from the $s$-endpoint buffer. Session state $c$ is not updated,
because it was updated by the operation that stored expression $e$ in the buffer. 
Expression $e'$ is not used in this rule, since it is a default value used
for recovery in the case where the session endpoint becomes non-synchronised.
The case for recovery is described by recovery rule \Recover defined in Section~\ref{subsubsec:recovery}.

Rule \Gather defined the interaction between a process and its $\aggr s$-endpoint
buffer, where it gathers, via the \mult operator, all the expression messages that
are tagged with state, $c$, of the state of the $\aggr s$-endpoint. After
the reduction the state of the $\aggr s$-endpoint increases by $1$.
The rule uses auxiliary operations \gthr{\pol{h}}{c} and \newbuffer{\pol{h}}{c}:
\[
\arraycolsep=3pt
\begin{array}{rclcrclcrclr}
	\gthr{(c, e) \cat \pol{h}}{c} &=& e \mult \gthr{\pol{h}}{c}
	& &
	\gthr{\equeue}{c} &=& \unit
	&&
	\gthr{(c', e) \cat \pol{h}}{c} &=& \gthr{\pol{h}}{c} & \IF c' \not= c
	\\[1mm]
	\newbuffer{(c, e) \cat \pol{h}}{c} &=& \newbuffer{\pol{h}}{c}
	&\ &
	\newbuffer{\equeue}{c} &=& \equeue
	& &
	\newbuffer{(c', e) \cat \pol{h}}{c} &=& (c', e) \cat \newbuffer{\pol{h}}{c} & \IF c' \not= c
\end{array}
\]
Operation \gthr{\pol{h}}{c} goes through the messages $\pol{h}$ and returns,
up to operator \mult, all expressions $e$ tagged with session state $c$, $(c, e)$.
Operation \newbuffer{\pol{h}}{c} returns a new $\pol{h}'$ by removing
all messages tagged with session state $c$, $(c, e)$.
The \Gather semantics also describes the case where the $\aggr s$-endpoint
gathers no messages, capturing the case where all message where either lost
or not delivered yet; operator \gthr{\pol{h}}{\pol{s}} will
return unit, \unit, if there are no messages tagged with state $c$ in \pol{h}.
The gather pattern is common in ad-hoc networks; see, for example, the RIME
communication stack~\cite{Dunkels2007} for wireless sensor networks.

Rules \Select and \Branch are similar to rule \Broadcast and \Receive;
the $\aggr s$-endpoint selects and broadcasts a label to the corresponding
$s$-endpoint buffer terms, and dually the $s$-endpoints receive a label 
from its session buffer and proceeds accordingly.
The dual case where multiple $s$-endpoint select a label
is not defined, since gathering (i.e.~branching on) multiple
labels on the $\aggr s$-endpoint makes no sense in session
types semantics.
Rule \Branch also defines a default label, \default,
with a process $R$, used when there is a need to
recover. Recovery on process $R$ is described by recover rule \BRecover defined in Section~\ref{subsubsec:recovery}.

An implementation of the above semantics in real systems,
e.g.~wireless ad-hoc networks, requires that sent messages are
tagged with the the session state of the sender.
This allows a prospective receiver to check the session state
conditions of the interaction and act accordingly,
e.g.~a receiver will check its session state against
the message tag and only then will store a broadcast message.


\begin{figure}
	\[
	\def\arraystretch{1.2}
	\begin{array}{cl}
		\tree{
			\eval{e}=e'
		}{
			\net{\pdin{s}{x}{e} P \pll B \pll \squeue{s}{c}{\equeue}}
			\reduces
			\net{P \subst{e'}{x} \pll B \pll \squeue{s}{c+1}{\equeue}}
		}{}
		& \Recover
		\\[8mm]
		\tree {
			\fs{B} = \fs{R}
		}{
			\net{\branchDef{s}{\ell_i: P_i}{R}_{i \in I} \pll B \pll B' \pll \squeue{s}{c}{\equeue}}
			\reduces
			\net{R \pll B}
		}{}
		& \BRecover
		\\[8mm]
%
%
%
%
%
		\net{\pout{s}{e} P \pll B \pll \squeue{s}{c}{\pol{m}}}
		\reduces
		\net{P \pll B \pll \squeue{s}{c + 1}{\pol{m}}}
		& \Loss
		\\[5mm]
		\tree {
			\eval{\econd} = \true
			\andalso
			\fs{B} = \fs{P_1}
		}{
			\net{\cond{\econd}{P_1}{P_2} \pll B \pll B'}
			\reduces
			\net{P_1 \pll B}
		}{}
		& \True
		\\[7mm]
		\tree {
			\eval{\econd} = \false
			\andalso
			\fs{B} = \fs{P_2}
		}{
			\net{\cond{\econd}{P_1}{P_2} \pll B \pll B'}
			\reduces
			\net{P_2 \pll B}
		}{}
		& \False
		\\[7mm]
	\end{array}
	\]
        \caption{Reduction rules for recovery.}
        \label{fig:reduction-recovery}
\end{figure}

\subsubsection{Operational Semantics for Recovery}
\label{subsubsec:recovery}
The semantics of the calculus offer flexibility, and recovery
in a context of asynchrony and unreliability.
Figure~\ref{fig:reduction-recovery} defines the semantics
for these mechanisms. Crucially, the recovery semantics depend
on conditions that are checked locally within a network node, 
thus following the typical behaviour of ad-hoc and sensor
networks.
%

Rule \Recover describes the recovery conditions for a
$s$-endpoint input prefix.
Lack of messages in the corresponding $s$-buffer may trigger
recovery, where the evaluation of the default expression is substituted
in the continuation of the process. The interaction results
in an increase of the $s$-endpoint session state counter, aligning
the $s$-endpoint with the overall session interaction.

Rule \BRecover describes the conditions for recovering from a
branch prefix. Lack of messages in the corresponding $s$-buffer
may trigger a recovery, by continuing with the default label 
process $R$.
A branch recover is hard; the reduction drops the corresponding
$s$-endpoint and $s$-buffer. 
%
A process $R$ cannot implement a behaviour for the $s$-endpoint
because it cannot possibly know the corresponding choice from the
$\aggr s$-endpoint, in the case recovery results in a synchronised
$s$-endpoint.



Reduction rules \Recover and \BRecover describe
recovery from a situation where a $s$-endpoint might not be
able to progress.
For example, it may be the case that the $s$-endpoint is non-synchronised,
or the network node using the $\aggr s$-endpoint is deadlocked due to
the interleaving with other behaviour.
However, due to assumptions A2, A4 and A6, a network node cannot
possibly have the knowledge to decide whether a local session
endpoint can progress or not.
Assumption A4 implies that a node cannot know whether a send
message was received. Assumption A2 implies that a network node
cannot know whether an expected message was lost and, moreover,
due to assumption A6, if an expected message was not send yet.

Therefore, reduction rules \Recover and \BRecover allow nodes to act
autonomously, which means that network nodes can recover even
if their recovery is not necessary to achieve progress.
For example, the following reduction on rule \Recover
\[
	\net{\pout{\aggr s}{e} \nil \pll \squeue{\aggr s}{0}{\equeue}}
	\npll
	\net{\pin{ s}{x} \nil \pll \squeue{\aggr s}{0}{\equeue}}
	\reduces
	\net{\pout{\aggr s}{e} \nil \pll \squeue{\aggr s}{0}{\equeue}}
	\npll
	\net{\nil \pll \squeue{\aggr s}{1}{\equeue}}
\]
is not necessary to achieve progress, because the $\aggr s$-endpoints
and $s$-endpoint are in a state that can eventually interact.

In practice communicating nodes require necessary but not
sufficient conditions together with mechanisms, such as time-outs,
to approximate lack of progress.
This means that, in practice, there is always the possibility to
recover even if endpoint interaction is eventually possible.
We could define, for example, semantics that use global information
to recover but then we would not respect assumptions A4 and A6.
For example, rule \Recover might be defined as:
\[
	\tree {
		c < c'
	}{
		\net{\pdin{s}{x}{e} P \pll B \pll \squeue{s}{c}{\equeue}} \npll \net{P' \pll B' \pll \squeue{\aggr s}{c'}{\tilde{h}}}
		\reduces
		\net{P \subst{e}{x} B \pll \squeue{s}{c+1}{\equeue}} \npll \net{P' \pll B' \pll \squeue{\aggr s}{c'}{\tilde{h}}}
	}{}
\]
The rule uses global information between network nodes.
Condition $c < c'$ checks that the $s$-endpoint exists
in an earlier state than the $\aggr s$-endpoint. The
recovery takes place only after ensuring that the
$s$-endpoint is non-synchronised.
%
%
Rule \BRecover can also be accommodated in a similar fashion.
%

Rule \Loss allows for message loss on the $s$-endpoint;
a process simply proceeds by dropping its sending prefix.
Similarly with rule \Recover, the session state is updated/increased,
allowing the $s$-endpoint to re-synchronise with the $\aggr s$-endpoint.
%
Rules \Loss and \Unicast realistically model unreliable unicast
communication since messages can always be lost and, moreover, due
to assumption A4
a node cannot possibly know if a message was passed to the
receiver and should always update its state.
The application of rule \Loss implies two cases:
i) the $s$-endpoint exists in a later state than the state
of the $\aggr s$-endpoint, in which case the $s$-endpoint
is considered synchronised, similarly to rule \Unicast;
ii) the $s$-endpoint prior to \Loss exists in an earlier
state than the $\aggr s$-endpoint, in which case the application
of the \Loss rule is the only interaction that can be done
towards recovery and synchronisation of the $s$-endpoint. 

Rules \True and \False offer the programmer additional flexibility
when handling shared session endpoints. The conditional process gives
the ability to discontinue using some sessions in its branches. Thus, a
session can be dropped based on the truth evaluation of condition $\econd$.
When the process proceeds to a branch it must drop the buffers not used
by the continuation (condition $\fs{B'} = \fn{P_i} \setminus \fn{P_j}, i \not= j$).
The case where no sessions are dropped, corresponds to standard conditional
semantics.
Explicitly dropping a communication channel is a common pattern in the context of
unreliable communication.

\subsection{Encoding Node Failure and Recovery Process}
The \UBSC semantics are powerful enough to express
sophisticated patterns of network node failure and
recovery.
We provide the semantics for a recovery
pattern, typically found in ad-hoc and sensor
network, through an encoding in the \UBSC.
The encoding is useful to avoid long and tedious
descriptions of recovery interaction.

The recovery pattern takes advantage of the 
interplay between rules \Recover, \True, and \False,
to describe a mechanism that recovers by proceeding to a
recovery process.
We defined the \UBSC with recovery terms
by extending the terms of \UBSC in Figure~\ref{fig:syntax} as:
\[
	P \bnfis \dots \bnfbar P \recover R \ \SRecover
\]
Additionally,
in the case of a branch prefixed process we write 
$\branch{k}{l_i: P_i}_{i \in I} \recover R$, instead of
$\branchDef{k}{l_i: P_i}{R'}_{i \in I} \recover R$.

The semantics of the processes of the form $P \recover R$
are then defined through a syntactic encoding into
the terms of \UBSC (syntax of Figure~\ref{fig:syntax}):
%
\[
\begin{array}{rcl}
	\map{\nil \recover R} &\defeq& \nil
	\\
	\map{\appl{D}{\tilde{v}} \recover R} &\defeq& \appl{D}{\tilde{v}}
	\\
	\map{\pin{k}{x} P \recover R} &\defeq& \pdin{k}{\exception}{x} \cond{x \not= \exception}{\map{P \recover R}}{R}
	\\
	\map{\branch{k} {l_i: P_i}_{i \in I} \recover R} &\defeq& \branchDef{k}{l_i: (\map{P_i \recover R})}{R}_{i \in I}
	\\
	\map{(\Def{\set{\abs{D_i}{\tilde{x}_i} \defeq P_i}_{i \in I}}{P}) \recover R} &\defeq&
	\Def{\set{\abs{D_i}{\tilde{x}_i} \defeq \map{P_i\recover R}}_{i \in I}}{\map{P\recover R}} 
\end{array}
\]
and homomorphic for the rest of the syntax of the \UBSC with recovery processes.
The encoding assumes the existence of an {\em exception} value, \exception,
which 
is used internally for the purpose
of detecting recovery.
Encoding for terms \Inact and \PVar
is the identity.
The recovery behaviour of the encoding 
for terms \Rcv and \Branching is based on
reduction rules
\Recover, \BRecover, \True, and \False;
whenever a process exists in a $s$-endpoint input prefix (prefixes
receive and branch) and the recovery conditions for rules
\Recover and \BRecover hold, the network node can recover
by continuing to a recovery process $R$. Note that sessions can be
dropped following the semantics of rules \BRecover, \True, and \False.
Finally, term \Recursion is defined inductively both on the named
definitions and on the process body.

\subsection{Reduction Semantics Examples}
The next two examples demonstrate some rules and
basic intuition of the operational semantics.
The first example demonstrates the semantics for
rule \Connect, 
and the  interplay between the rules \Unicast, \Loss, and
\Gather.

\begin{exa}[A Heartbeat Protocol]
	\label{ex:reduction_heartbeat}
	Consider a variant of the Heartbeat protocol,
	introduced in Example~\ref{ex:intro_heartbeat},
	where a node periodically gathers
	heartbeat messages from nodes within the network.
	\[
	\begin{array}{rcl}
		\Heartbeat = &&	\net{\request{a}{\aggr{y}} \pin{\aggr y}{x_1} \pin{\aggr y}{x_2} P_0}
			\\	&\npll&
				\net{\accept{a}{y} \pout{y}{\heartbeat_1} \pout{y}{\heartbeat_1} P_1}
				\,\,\npll\,\,
				\net{\accept{a}{y} \pout{y}{\heartbeat_2} \pout{y}{\heartbeat_2} P_2}
	\end{array}
	\]
	Node $\net{\request{a}{\aggr{y}} \pin{\aggr y}{x} \pin{\aggr y}{x} P_0}$
	requests a new session on shared name $a$.
	After the establishment of the new session each accepting
	node will periodically send a heartbeat message to the requestor node.
	Consider now the interaction:
	\[
	\begin{array}{rccl}
		\Heartbeat &\reduces &&
				\newnp{s}{
					\net{\pin{\aggr s}{x_1} \pin{\aggr s}{x_2} P_0 \pll \squeue{\aggr s}{0}{\equeue}}
				\\	&&\npll&
					\net{\pout{s}{\heartbeat_1} \pout{s}{\heartbeat_1} P_1 \pll \squeue{s}{0}{\equeue}}
					\,\,\npll\,\,
					\net{\pout{s}{\heartbeat_2} \pout{s}{\heartbeat_2} P_2 \pll \squeue{s}{0}{\equeue}}
				}
		\\[2mm]
		&\reduces &&
				\newnp{s}{
					\net{\pin{\aggr s}{x_1} \pin{\aggr s}{x_2} P_0 \pll \squeue{\aggr s}{0}{(0, \heartbeat_2)}}
				\\	&&\npll&
					\net{\pout{s}{\heartbeat_1} \pout{s}{\heartbeat_1} P_1 \pll \squeue{s}{0}{\equeue}}
					\,\,\npll\,\,
					\net{\pout{s}{\heartbeat_2} P_2 \pll \squeue{s}{1}{\equeue}}
				}
		\\[2mm]
		&\reduces &&
				\newnp{s}{
					\net{\pin{\aggr s}{x_1} \pin{\aggr s}{x_2} P_0 \pll \squeue{\aggr s}{0}{(0, \heartbeat_2), (1, \heartbeat_2)}}
				\\	&&\npll&
					\net{\pout{s}{\heartbeat_1} \pout{s}{\heartbeat_1} P_1 \pll \squeue{s}{0}{\equeue}}
					\,\,\npll\,\,
					\net{P_2 \pll \squeue{s}{2}{\equeue}}
				}
		\\[2mm]
		&\reduces &&
				\newnp{s}{
					\net{\pin{\aggr s}{x_1} \pin{\aggr s}{x_2} P_0 \pll \squeue{\aggr s}{0}{(0, \heartbeat_2), (1, \heartbeat_2), (0, \heartbeat_1)}}
				\\	&&\npll&
					\net{\pout{s}{\heartbeat_1} P_1 \pll \squeue{s}{1}{\equeue}}
					\,\,\npll\,\,
					\net{P_2 \pll \squeue{s}{2}{\equeue}}
				}
		\\[2mm] &=&& \Heartbeat_1
	\end{array}
	\]
	The first interaction is an instance of rule \Connect 
	and has lead to the establishment of new session
	involving all the network nodes.
	The second interaction uses an instance of the rule \Unicast
	where network node
	$\net{\pout{s}{\heartbeat_2} \pout{s}{\heartbeat_2} P_2 \pll \squeue{s}{0}{\equeue}}$
	sends a heartbeat message to the $\aggr s$-endpoint network node.
	Observe that the session state of the sender node has increased
	by one.
	The third interaction is also an instance of rule \Unicast
	and yet another heartbeat message from the same network node
	is unicast to the $\aggr s$-endpoint network node. The node can unicast
	the heartbeat message even if it is in a later session state.
	This is because messages are gathered at each session state.
	In the last interaction, network node
	$\net{\pout{s}{\heartbeat_1} \pout{s}{\heartbeat_1} P_1 \pll \squeue{s}{0}{\equeue}}$
	also unicasts a heartbeat message.
	Consider then that the $\aggr s$-endpoint gathers the heartbeat messages:

	\[
	\begin{array}{rccl}
		\Heartbeat_1 
			&\reduces &&
					\newnp{s}{
						\net{\pin{\aggr s}{x_2} P_0 \subst{\heartbeat_1 \mult \heartbeat_2}{x_1} \pll \squeue{\aggr s}{1}{(1, \heartbeat_2)}}
					\\	&&\npll&
						\net{\pout{s}{\heartbeat_1} P_1 \pll \squeue{s}{1}{\equeue}}
						\,\,\npll\,\,
						\net{P_2 \pll \squeue{s}{2}{\equeue}}
					}
		\\[2mm]
			&\reduces &&
					\newnp{s}{
						\net{P_0 \subst{\heartbeat_1 \mult \heartbeat_2}{x_1} \subst{\heartbeat_2}{x_2} \pll \squeue{\aggr s}{2}{\equeue}}
					\\	&&\npll&
						\net{\pout{s}{\heartbeat_1} P_1 \pll \squeue{s}{1}{\equeue}}
						\,\,\npll\,\,
						\net{P_2 \pll \squeue{s}{2}{\equeue}}
					}
		\\[2mm]	&=&& \Heartbeat_2
	\end{array}
	\]
	Two instances of the rule \Gather allow for the
	$\aggr s$-endpoint to consume all the messages
	from the $\aggr s$-buffer. The first interaction
	gathers all heartbeat messages tagged with state $0$, 
	whereas the second interaction gathers all the
	heartbeat messages tagged with state $1$.
	Each interaction updates the state counter
	of the $\aggr s$-endpoint.
	After the last two interactions the $s$-endpoint
	$\net{\pout{s}{\heartbeat_1} P_1 \pll \squeue{s}{1}{\equeue}}$
	is found in a non-synchronised state.
	However an instance of rule \Loss can be observed
	that will re-synchronise the non-synchronised $s$-endpoint:
	\[
		\begin{array}{rccl}
			\Heartbeat_2
				&\reduces &&
						\newnp{s}{
							\net{P_0 \subst{\heartbeat_1 \mult \heartbeat_2}{x_1} \subst{\heartbeat_2}{x_2} \pll \squeue{\aggr s}{2}{\equeue}}
						\\	&&\npll&
							\net{P_1 \pll \squeue{s}{2}{\equeue}}
							\,\,\npll\,\,
							\net{P_2 \pll \squeue{s}{2}{\equeue}}
						}
		\end{array}
	\]
\qed
\end{exa}

A second example demonstrates a typical pattern of interaction
found ad-hoc and sensor networks when running consensus protocols.
In this example, rules \True/\False are used to drop connections.
\begin{exa}[Dropping Connections]
	It is typical in consensus algorithms
	for a node to establish multiple connections
	with other nodes but maintaining active
	only the one with the highest id number,
	e.g.~the Paxos consensus algorithm~\cite{lamport1998paxos,lamport2001paxos}
	(also see Section~\ref{sec:paxos}).
	For example, consider network:
	\[
	\begin{array}{rcll}
		\Network &=&& \net{\request{a}{w} \pout{\aggr w}{id_2} P_0}
		\,\,\npll\,\,
		\newnp{s}{\net{\pout{\aggr s}{id_1} P_1 \pll \squeue{\aggr s}{0}{\equeue}}
	\\	&&\npll&
		\net{\pin{s}{x} (P \Sum \accept{a}{w} \pin{w}{y} \cond{x > y}{P}{P'}) \pll \squeue{s}{0}{\equeue}}}
	\end{array}
	\]
	\noindent
	with $s \notin \fn{P'}$. 
	Node
	$\net{\pin{s}{x} (P \Sum \accept{a}{w} \pin{w}{y} \cond{x > y}{P}{P'}) \pll \squeue{s}{0}{\equeue}}$
	has already established a connection on channel $s$ and awaits for
	a connection id number; a \Broadcast and a \Receive
	interaction will result in:
	\[
	\begin{array}{rcll}
		\Network &\reduces \reduces&& \net{\request{a}{w} \pout{\aggr w}{id_2} P_0}
		\,\,\npll\,\,
		\newnp{s}{\net{P_1 \pll \squeue{\aggr s}{1}{\equeue}}
	\\	&&\npll&
		\net{P \Sum \accept{a}{w} \pin{w}{y} \cond{id_1 > y}{P}{P'} \pll \squeue{s}{1}{\equeue}}}
		\\[2mm]
		&&=& \Network_1			 
	\end{array}
	\]
	The resulting network has the option to either
	continue interaction on the $s$-endpoint via
	process $P$, or establish a new connection
	on shared channel $a$. 
	Assume that the latter
	interaction takes place:
	\[
	\begin{array}{rcll}
		\Network_1 &\reduces&& \newnp{s, s'}{\net{\pout{\aggr s'}{id_2} P_0 \pll \squeue{\aggr s}{0}{\equeue}}
		\,\,\npll\,\,
		\net{P_1 \pll \squeue{\aggr s}{1}{\equeue}}
	\\	&&\npll&
		\net{\pin{s'}{y} \cond{id_1 > y}{P}{P'} \pll \squeue{s}{1}{\equeue} \pll \squeue{s'}{0}{\equeue}}}
		\\[2mm]
		&&=& \Network_2
	\end{array}
	\]
	The $s'$-endpoint also receives a connection id number, via reduction rules
	\Broadcast and \Receive: 
	\[
	\begin{array}{rcll}
		\Network_2 &\reduces \reduces&& \newnp{s, s'}{\net{P_0 \pll \squeue{\aggr s}{1}{\equeue}}
		\,\,\npll\,\,
		\net{P_1 \pll \squeue{\aggr s}{1}{\equeue}}
	\\	&&\npll&
		\net{\cond{id_1 > id_2}{P}{P'} \pll \squeue{s}{1}{\equeue} \pll \squeue{s'}{1}{\equeue}}}
		\\[2mm]
		&&=& \Network_3
	\end{array}
	\]
	The receiving node will then compare the two connection
	id numbers and decide to drop the session with the
	smallest corresponding id number and continue with the
	corresponding process; it will continue the interaction
	on the $s$-endpoint in process $P$ 
	in the case where the newest connection
	is dropped, or, otherwise, it will proceed with the interaction
	on the $s'$-endpoint via process $P'$.
	For example, if $id_2 > id_1$ the network will reduce
	using an instance of rule \False and result as in:
	\begin{align*}
		\Network_3 &\reduces&& \newnp{s}{\net{P_1 \pll \squeue{\aggr s}{1}{\equeue}}}
		\,\,\npll\,\, \newnp{s'}{\net{P_0 \pll \squeue{\aggr s}{1}{\equeue}}
		\,\,\npll\,\,
		\net{P' \pll \squeue{s'}{1}{\equeue}}} \tag*{\qed}
	\end{align*}
\end{exa}

The next example demonstrates a simple recursive behaviour,
where a $\aggr s$-endpoint recurses or terminates an session interaction
based on the acknowledgements it receives by the corresponding $s$-endpoints.
\begin{exa}[Recursive Interaction]
\label{ex:recursive_interaction}
	The following presents an example where
	the $\aggr s$-endpoint sends a message
	and then requires from
	the corresponding $s$-endpoints to 
	reply with an acknowledgement.
	By inspecting the acknowledgements,
	the $\aggr s$-endpoint decides to terminate
	the interaction or to reiterate.
	Consider the network:
	
	\[
	\begin{array}{rcll}
		P			&&=&		\Def{
					\\		&&& \quad
								\begin{array}{rcl}
									\abs{Q}{w}
									&\defeq&
										\pout{\aggr w}{v}
										\pin{\aggr w}{\set{x_i}_{i \in I}}
									\\
									&&	\If\ \mathsf{cond}(\set{x_i}_{i \in I})\ \Then \ 
										\sel{\aggr w}{\acceptLabel} \nil
									\\
									&&	\Else \ 
										\sel{\aggr w}{\restart} \appl{P}{w}
									\\[2mm]
								\end{array}
						\\		&&& }{ \quad \appl{Q}{s} }
						
		\\[2mm]
		P_k			&&=&		\Def{
								\abs{R_k}{w}
								\defeq
									\pin{w}{x} \pout{w}{\mathsf{ack_k}} \branch{ w } { \acceptLabel: \nil, \restart: \appl{Q}{w}, \default: \nil }
					\\		&&& }{ \appl{R_k}{s} }
		\\[2mm]
		
		\Recursive &&=& \net{P \pll \squeue{\aggr s}{0}{\equeue}}
		\,\,\npll\,\,
		\prod_{J \in J} \net{P_j \pll \squeue{s}{0}{\equeue}}
	\end{array}
	\]
	where operation \mult aggregates all
	acknowledgements as a set $\set{\mathsf{ack}_i}_{i \in I}$.
	\noindent
	A \Broadcast followed by a series of \Receive and \Unicast interactions,
	and finally a \Gather operation, results in
	(the recursive process definition is ommitted):
	\[
	\begin{array}{rcl}
		\Recursive &\reduces^*& \Recursive_1
		\\ &=&
		\net{ \If\ \mathsf{cond}(\set{x_i}_{i \in I})\ \Then \ \sel{\aggr s}{\acceptLabel} \nil
										\ \Else \ \sel{\aggr s}{\restart} \appl{Q}{s} \pll \squeue{\aggr s}{2}{\equeue} }
		\\
		&\npll&
		\prod_{k \in K_1} \net{ \branch{ s } { \acceptLabel: \nil, \restart: \appl{R_k}{s}, \default: \nil } \pll \squeue{s}{2}{\equeue}}
		\\ &\npll&
		\prod_{k \in K_2} \net{ P_k \pll \squeue{s}{0}{\equeue}}
	\end{array}
	\]
	with $K_1 \cup K_2 = J$ and $K_1 \cap K_2 = \emptyset$.
	If the condition on $\set{\mathsf{ack}}_{i \in I}$ is true
	then after a \Select on label \acceptLabel
	followed by a series of \Branch interactions the network may result in:
	\[
		\begin{array}{rcl}
			\Recursive_1 &\reduces^*&
			\net{\nil \pll \squeue{\aggr s}{3}{\equeue}} \,\,\npll\,\, \prod_{k \in K_1} \net{  \nil \pll \squeue{s}{3}{\equeue}} \,\,\npll\,\,
		\prod_{k \in K_2} \net{ Q_k \pll \squeue{s}{0}{\equeue}}
		\end{array}
	\]
	Note that each node in network $\prod_{k \in K_2} \net{ Q \pll \squeue{s}{0}{\equeue}}$
	can terminate following a sequence of \Recover, \Loss and \BRecover interactions.
	If the condition on $\set{\mathsf{ack}}_{i \in I}$ is false
	then after a \Select on label \restart
	followed by a series of \Branch interactions the network may result in:
	\[
		\begin{array}{rcl}
			\Recursive_1 &\reduces^*&
			\net{P \pll \squeue{\aggr s}{3}{\equeue}} \,\,\npll\,\, \prod_{k \in K_1} \net{  Q \pll \squeue{s}{3}{\equeue}} \,\,\npll\,\,
		\prod_{k \in K_2} \net{ Q_k \pll \squeue{s}{0}{\equeue}}
		\end{array}
	\]
	where the network reiterates. Note that
	subnetwork $\prod_{k \in K_2} \net{ Q \pll \squeue{s}{0}{\equeue}}$
	will not participate in the next iteration, since
	it can only terminate its interaction by recovering following a sequence of \Recover, \Loss and \BRecover interactions.
	\qed
\end{exa}


\section{Session Types}
\label{sec:session-types}

We now introduce the session type system for the \UBSC.
The type system
combines ideas from~\cite{DBLP:conf/ecoop/HuKPYH10,dkphdthesis}
to type buffer terms.
A novel notion is the notion of
endpoint synchronisation used to cope with 
the presence of non-synchronised session endpoints 
and with recovery semantics.

\subsection{Type Syntax}

%
%

The session types syntax follows the standard binary session types syntax
as introduced by Honda et al.~\cite{Honda1998}.  
However, we have no session delegation since channel delegation is
not found in systems with unreliable communication such as sensor
networks.
\begin{defi}[Session Type]
	Let $\mathcal B$ be a set of \emph{base types} ranged over by~$\beta$.
	Session types are inductively defined by the following grammar:
	\[
		T	\DEF	\tout \beta T
			\OR		\tin \beta T
			\OR		\tselI{\ell}{T}
			\OR		\tbranchI{\ell}{T}
			\OR		\tend
			\OR		\tvar t
			\OR		\trec t T
	\]
\end{defi}
Type $\tout \beta T$ describes the sending of a value of type $\beta$
and then proceeding with type $T$. Type $\tin \beta T$ describes the reception
of a value with type $\beta$ and then proceeding with type $T$. Type
$\tselI{\ell}{T}$ selects a label
from the set of labels $\set{\ell_i}_{i \in I}$ and then proceeds 
with the corresponding type $\set{T_i}_{i \in I}$. Dually, type
$\tbranchI{\ell}{T}$ branches on the set of labels $\set{\ell_i}_{i \in I}$
and then proceeds with the corresponding type $\set{T_i}_{i \in I}$.
Type $\tend$ is the inactive type, whereas $\tvar t$ is the recursive
variable. Finally, type $\trec t T$ is the recursive type, which
binds free occurrences of $\tvar t$ in $T$.
We define a capture avoiding substitution on types $T \subst{T'}{t}$ in
the usual way. We assume equi-recursive types, $\trec t T = T \subst{\trec t T}{t}$.

Our calculus does not incorporate session delegation. Session delegation is rather
unnatural in an unreliable setting supporting broadcasting and gather semantics,
and sharing of channel resources.
Moreover, session delegation is valid only for $\aggr s$-endpoints, which are linear,
and it would require additional syntax and reduction semantics beyond broadcast and
gather operations.

The duality operator also follows the standard binary session type
definition~\cite{Honda1998}.
A simple inductive definition is enough to capture session endpoint duality
because of the lack of delegation~\cite{BDGK14}.
\begin{defi}[Type Duality]
\[
	\begin{array}{rcl c rcl c rcl c rcl c rcl}
		\tdual\tend	&=&	\tend
		&&
		\tdual{\tout\beta T}	&=&	\tin\beta \tdual T
		&&
		\tdual{\tin\beta T}	&=&	\tout\beta \tdual T
		&&
		\tdual{\tvar t}			&=&	\tvar t 
		&&
		\tdual{\trec t T}		&=&	\trec t \tdual T
		\\[2mm]
		\multicolumn{9}{c}{\tdual{\tselI{\ell}{T}} =\tbranchI{\ell}{\tdual{T}}}
		&&
		\multicolumn{9}{c}{\tdual{\tbranchI{\ell}{T}} = \tselI{\ell}{\tdual{T}}}
	\end{array}
\]
\end{defi}
Two types $T_1$ and $T_2$ are dual if $\tdual T_1 = T_2$.
Note $\tdual{\tdual{T}} = T$ for any $T$.
Next, we
define a buffer type syntax used
to type message buffers (cf.~\cite{DBLP:conf/ecoop/HuKPYH10,dkphdthesis}).
\begin{defi}[Buffer Types]
	Buffer types are inductively defined by the following grammar:
	\[
		M \bnfis \tempty \bnfbar \mtout \beta. M \bnfbar \msel \ell. M
	\]
\end{defi}
Buffer types are used to type buffer terms; they describe the types
of the values, $\mtout \beta. M$, or labels, $\msel \ell. M$, in a session buffer term.

The \sessionop operator
is used to combine session types and buffer types
(cf.~\cite{DBLP:conf/ecoop/HuKPYH10,dkphdthesis}).
\begin{defi}[Operator \sessionop]
	\label{def:combine_context}
	\[
		\begin{array}{l} 
			
			\tin \beta T \sessionop \mtout \beta. M = T \sessionop M
			\qquad \qquad
			\tbranchOn{l_i: T_i}_{i \in I} \sessionop \msel \ell_k. M = T_k \sessionop M
			\\[1mm]
			\trec t T \sessionop M = T \subst{\trec t T}{\tvar t} \sessionop M
			\qquad
			T \sessionop \tempty = T
		\end{array}
	\]
\end{defi}

The \sessionop operator combines a session type with a buffer type 
and returns a session type (cf.~\cite{DBLP:conf/ecoop/HuKPYH10,dkphdthesis}).
It works inductively by removing
the prefix from the input session when the prefix of the buffer
type is dual.
%
The intuition for the \sessionop operator considers that messages
within a session buffer have already been received from a node
(see typing rule \TNode in Figure~\ref{fig:network_typing}), so
the operator consumes a buffer type against a session type.

%
%

\subsection{Typing System}

We define the typing contexts used by the typing system.
\begin{defi}[Typing Context]
	We define $\Gamma$, $\Delta$, and $\Theta$ typing contexts:
	\[
		\begin{array}{rclcrcl}
			\Gamma &\bnfis& \econtext \bnfbar \Gamma, x: \beta \bnfbar \Gamma, a: T \bnfbar \Gamma, \abs{D}{\tilde{x}}: (\Gamma; \Delta)
		\\[1mm]
			\Delta &\bnfis& \econtext \bnfbar \Delta, k : T \bnfbar \kappa: \ctype{c}{T}
		\\[1mm]
			\Theta &\bnfis& \econtext \bnfbar \Theta, \kappa: \ctype{c}{M}
		\end{array}
	\]
\end{defi}
Context $\Gamma$ is called shared context and maps variables to ground types,
shared names to session types,
and recursive variables to typing contexts $(\Gamma;\Delta)$.
Context $\Delta$ is called linear context and maps session names and session variables
to session types, and session names to tuples, $\ctype{c}{T}$, that
combine an integer value, $c$, that corresponds to session state together with a session type, $T$.
Similarly, context $\Theta$ is called buffer context and maps session names
to tuples, $\ctype{c}{M}$, that
combine an integer value, $c$, that corresponds to session state together with a buffer type, $M$.

Contexts are treated as sets. We write $\Gamma, \Gamma'$ to denote the union of contexts $\Gamma$ and
$\Gamma'$. We also write $\Delta,\Delta'$, resp.~$\Theta, \Theta'$,
for the disjoint union of contexts $\Delta$ and $\Delta'$, resp.~$\Theta$ and $\Theta'$.
We define the domain of $\dom{\cdot}$ 
of contexts $\Gamma$, $\Delta$, and $\Theta$ in the expected way.

The \sessionop operator is lifted to combine
contexts $\Delta$ and $\Theta$:
\[
	\Delta \sessionop \Theta = \set{\kappa: \ctype{c}{T \sessionop M} \setbar \kappa : T \in \Delta \land \kappa : \ctype{c}{M} \in \Theta}
\]
The result of the \sessionop operator on contexts is a new linear context.

\begin{figure}
	\[
	\begin{array}{c}
		\Gamma; \econtext \types \ebuffer \quad \BEmp
		\andalso \andalso
%
			\Gamma; \Theta, \kappa: \ctype{c}{\tempty} \types \squeue{\kappa}{c}{\equeue} \quad \SEmp
%
		\andalso \andalso
		\tree {
			\set{\Gamma; \Theta_i \types B_i}_{i \in \set{1, 2}}
		}{
			\Gamma; \Theta_1, \Theta_2 \types B_1 \pll B_2
		}{\BPar}
		\\[8mm]
		\tree {
			\Gamma; \Theta, \ctype{c}{M} \types \squeue{s}{c}{\pol{m}}
			\andalso
			\Gamma \types e : \beta
		}{
			\Gamma; \Theta, s: \ctype{c}{M. \mtout \beta} \types \squeue{s}{c}{\pol{m} \cat e}
		}{\SExp}
		\andalso \andalso
		\tree {
			\Gamma; \Theta, \ctype{c}{M} \types \squeue{s}{c}{\pol{m}}
		}{
			\Gamma; \Theta, s: \ctype{c}{M. \msel \ell} \types \squeue{s}{c}{\pol{m} \cat \ell}
		}{\SLab}
		\\[8mm]
		\tree {
			\pol{h}' = \newbuffer{\pol{h}}{c}
			\andalso
			\Gamma \types \gthr{\pol{h}}{c} : \beta
			\andalso
			\Gamma; \Theta, \aggr s: \ctype{c}{M} \types \squeue{\aggr s}{c'}{\pol{h}'}
		}{
			\Gamma; \Theta, \aggr s: \ctype{c+1}{M. \mtout \beta} \types \squeue{\aggr s}{c'}{ \pol{h} }
		}{\LExp}
	\end{array}
	\]
	\caption{Typing rules for session buffers \label{fig:buffer_typing}}
\end{figure}

The typing judgement for expressions is
defined as the least relation $\Gamma \types e: \beta$
that respects: i) the types of value and unit expressions (e.g., $\nat, \bool \in B$);
and
ii) the types of variables in expressions,
e.g. whenever $x \in \fv{e}$, then $\Gamma \types e$
is defined if $x: \beta \in \Gamma$ for some $\beta$. 

The typing judgement for session buffers, $\Gamma; \Theta \types B$,
is defined as the  least relation that satisfies the rules
in \Figure{buffer_typing}.
Rule \BEmp types the empty buffer term with an empty $\Theta$ context.
Rule \SEmp types an empty $\kappa$-endpoint buffer. It maps, 
in context $\Theta$,
the $\kappa$-endpoint to the state of the $\kappa$-buffer together
with the empty buffer type.
Rule \BPar types a parallel composition of session buffer terms, 
with the disjoint union of the two respective $\Theta$ environments.
Buffers for $s$-endpoints are typed with rules \SExp and \SLab;
the type of an expression (respectively, label) is appended
at the end of the buffer type of $s$ within the $\Theta$ context.

Rule \LExp types the $\aggr s$-endpoints buffers.
The rule works by reconstructing the messages within the
$\aggr s$-buffer using operation $\pol{h}' = \newbuffer{\pol{h}}{c}$,
with $h'$ appearing in the premise, and $h$ appearing in
the conclusion.
Rule \LExp appends to the type of $\aggr s$ within context $\Theta$,
the type of value $\gthr{\pol{h}}{c}$, which
is \mult-composition of all expressions associated with
session state $c$, that appears at the type of the premise. The
session state is updated at the conclusion.

\begin{figure}
	\[
	\begin{array}{c}
		\Gamma; \econtext \types \nil\ \TInact
		\andalso \andalso
		\tree {
			\Gamma; \Delta \types P
		}{
			\Gamma; \Delta, k : \tend \types P
		}{\SWk}
%
%
%
		\andalso \andalso
		\tree {
			\Gamma, a : T; \Delta, \aggr x: \tdual T \types P
		}{
			\Gamma, a : T; \Delta \types \request{a}{\aggr x} P
		}{\TReq}
		\\[6mm]
		\tree {
			\Gamma, a : T;\Delta, x: T \types P
		}{
			\Gamma, a : T; \Delta \types \accept a{x}{P}
		}{\TAcc}
		\andalso \andalso \andalso
		\tree {
			\Gamma;\Delta, k: T \types P
			\andalso
			\Gamma \types e : \beta
		}{
			\Gamma;\Delta, k : \tout \beta T \types \pout{k}{e} P
		}{\TSnd}
		\\[8mm]
%
%
%
		\tree {
			\Gamma \types e: \beta \andalso \Gamma, x : \beta; \Delta, k: T \types P
		}{
			\Gamma;\Delta, k : \tin \beta T \types \pdin{k}{x}{e} P
		}{\TRcv}
		\andalso \andalso
		\tree[] {
			\Gamma; \Delta \types P_i
			\andalso
			i \in \set{1, 2}
		}{
			\Gamma; \Delta \types P_1 \Sum P_2
		}{\TSum}
		\\[8mm]
		\tree[] {
			\Gamma;\Delta, k: T_j \types P
			\andalso
			j \in I
			\andalso
			(\exists s \in \mathcal{S}, k = \aggr s) \lor (\exists x \in \mathcal{V}, k = \aggr x)  
		}{
			\Gamma;\Delta, k: \tselI{\ell}{T} \types \sel{k}{\ell_j} P 
		}{\TSel}
		\\[8mm]
		\tree[] {
			\forall i \in I, \Gamma; \Delta, \Delta', k : T_i \types P_i
			\andalso
			(\exists s \in \mathcal{S}, k = s) \lor (\exists x \in \mathcal{V}, k = x)  
			\andalso
			\Gamma; \Delta' \types R
			\\
			\forall s \in \mathcal{S}, \aggr s \notin \dom{\Delta}
		}{
		    \Gamma; \Delta, \Delta', k : \tbranchOn{\ell_i:T_i}_{i \in I} \types \branchDef{k}{\ell_i:P_i}{R}_{i \in I}
		}{\TBr}
		\\[11mm]
		\tree {
			\set{\Gamma; \Delta, \Delta_i \types P_i
			\andalso
			\forall s \in \mathcal{S}, \aggr{s} \notin \dom{\Delta_i}}_{i \in \set{1,2}} 
		}{
			\Gamma; \Delta, \Delta_1, \Delta_2 \types \cond{\econd}{P_1}{P_2}
		}{\TCond}
		\\[8mm]
		\tree {
			\Gamma' \types u_i: \beta_i \text{ iff } u_i \in \tilde{v}
			\andalso
			\tilde{y} = \dom{\Delta'}
			\andalso
			k_i: T_i \in \Delta \text{ iff } y_i: T_i \in \Delta'

		}{
			\Gamma, \abs{D}{\tilde{x}, \tilde{y}}: (\Gamma';\Delta'); \Delta \types \appl{D}{\tilde{v}, \tilde{k}}
		}{\TVar}
		\\[8mm]
		\tree {
			\forall i \in I, \abs{D_i}{\tilde{x}_i}: (\Gamma_i, \Delta_i) \in \Gamma \land \Gamma, \Gamma_i; \Delta_i \types P_i
			\andalso
			\Gamma; \Delta \types P
		}{
			\Gamma; \Delta \types \Def{\set{\abs{D_i}{\tilde{x}_i} \defeq P_i}_{i \in I}}{P}
		}{\TRec}
	\end{array}
	\]
	\caption{Typing rules for processes \label{fig:process_typing}}
\end{figure}

The typing judgement for processes, $\Gamma; \Delta \types P$,
is defined as the least
relation that satisfies the rules in \Figure{process_typing}.
Rule \TInact is standard for typing
the inactive process.
Rule \SWk defines standard session type weakening for 
context $\Delta$.
The next three rules are standard session type rules (cf.~\cite{Honda1998})
for typing session request, rule \TReq, session accept, rule \TAcc, and session
send, rule \TSnd, respectively. 

Rule \TRcv is used to type the \Rcv process, which requires for
the default expression $e$ to have the same type as variable $x$.
Rule \TSum types non-deterministic behaviour. The rule
expects the same type for each process combined by the \TSum operator.

The next two rules are standard rules for typing session select, rule \TSel,
and session branch, rule \TBr, prefixes, respectively.
The extra conditions on rules \TSel and \TBr ensure
correctness of the selection/branching interaction
as an interaction between $\aggr s$-endpoint and $s$-endpoint, respectively;
only $\aggr s$-endpoints and $\aggr x$ variables are typed
by rule \TSel, and similarly only $s$-endpoints and $x$ variables
are being typed by rule \TBr.
The recovery process in the branch prefixed process
is used to drop session endpoints including the $s$-endpoint,
thus typing rule \TBr ensures that the recovery process $R$
is typed with a subset of the linear context, excluding
the $s$-endpoint. Moreover, the rule ensures that only $s$-endpoints
are be dropped.

A novel rule is \TCond, that types the conditional process;
since the conditional process is used to drop session endpoints,
the typing rule splits the contexts $\Delta_i$ of the two branches
into common and non-common, i.e.~dropped, session endpoints.
The rule also ensures that only $s$-endpoints can be dropped.

Rules \TVar, and \TRec are standard. 
Rule \TVar requires that recursive variables are mapped in the
$\Gamma$ context and checks whether the arguments application
has the correct type with respect to context $\Gamma$.
Similarly rule \TRec requires to checks context $\Gamma$
and the recursive definition for type consistency.

\begin{figure}
	\[
		\begin{array}{c}
		\tree[] {
			\Gamma; \Theta \types B
			\andalso
			\Gamma;\Delta \types P
			\andalso
			\dom{\Delta} = \dom{\Theta}
			\andalso
		}{
			\Gamma; \Delta \sessionop \Theta \types \net{P \pll B}
		}{\TNode}
		\\[8mm]
%
%
		\tree[] {
			\Gamma, a : T; \Delta \types N
		}{
			\Gamma;\Delta \types \newn{a} N
		}{\TCRes}
		\tree {
			\set{\Gamma; \Delta, \Delta_i \types N_i}_{i \in \set{1, 2}}
			\andalso
			\forall s \in \mathcal{S}, \aggr s \notin \dom{\Delta}
		}{
			\Gamma;\Delta, \Delta_1, \Delta_2 \types N_1 \npll N_2
		}{\TPar}
		\\[8mm]
		\tree[] {
			\Gamma; \Delta, \Delta', \aggr s: \ctype{c}{\dual{T}} \types N
			\andalso
			\Delta' = s: \ctype{c}{T}
			\lor
			\Delta' = \econtext
			\andalso
			s \notin \dom{\Delta}
		}{
			\Gamma;\Delta \types \newn{s} N
		}{\TSRes}
		\end{array}
	\]
	\caption{Typing rules for networks \label{fig:network_typing}}
\end{figure}

The typing judgement for networks, $\Gamma; \Delta \types N$,
is defined as the least relation that satisfies the rules
in \Figure{network_typing}.
Rule \TNode type a network node. The rule combines the contexts
of the process, $P$, and the session buffer terms, $B$ using operator \sessionop.
Rule \TNode requires that $\dom{\Delta} = \dom{\Theta}$ implying that:
i) all the session names that appear
in the network process have their corresponding session buffer present
in the network node; and ii) process $P$ does not contain
free session variables, i.e.~$x: T \notin \Delta$.
%

Rule \TPar requires that parallel components of a network should
share the same type for common $s$-endpoints.
The next rule is \TCRes, which restricts shared name $a$
by removing it from $\Gamma$.
Finally, rule \TSRes captures the interaction intuition between
the $\aggr s$-endpoint and the $s$-endpoints.
A session name $s$ can be restricted whenever its two endpoints
are dual when in the same state $c$, or if only the
$\aggr s$-endpoint is present in the typing context.
The latter condition captures the case where the $s$-endpoints
were not created due to message loss or lost due to recovery.

The following example gives the typing for the \Recursive network in Example~\ref{ex:recursive_interaction}. 

\begin{exa}[Typing for Example~\ref{ex:recursive_interaction}]
	Given $\Gamma$
	such that $\Gamma \types v: \beta$ and
	for all $i \in I, \Gamma \types \mathsf{ack_i}: \mathsf{ackType}$,
	and moreover
	$
		T = \rec{t} \tout{\beta} \tin{\mathsf{ackType}} \tsel{\acceptLabel: \tend, \restart: \tvar{t}}
	$,
	the reader can verify that
	$
		\Gamma; \aggr s: T , s: \dual{T} \types \Recursive
	$.
	\qed
\end{exa}

\subsubsection*{Typing System for Runtime Processes}
The typing system presented so far can only type networks with
$s$-endpoints that are in the same state with the corresponding
$\aggr s$-endpoint. This is demonstrated with the following
example.
\begin{exa}[Heartbeat at Runtime]
	\label{ex:typing_synch}
	Consider an instance of Example~\ref{ex:intro_heartbeat}
	with two $s$-endpoints:
	\[
		\Heartbeat = \net{\pout{\aggr s}{\heartbeat} P_0 \pll \squeue{\aggr s}{0}{\equeue}}
		\,\,\npll\,\,
		\net{\pin{s}{x} P_1 \pll \squeue{s}{0}{\equeue}}
		\,\,\npll\,\,
		\net{\pin{s}{x} P_2 \pll \squeue{s}{0}{\equeue}}
	\]
	All the endpoints in the above network are in the same state.
	Assuming that value \heartbeat has type \theartbeat,
	the \Heartbeat network can by typed as
	\[
		\heartbeat: \theartbeat; \aggr{s}: (0, \tout{\theartbeat} \tend), s: (0, \tin{\theartbeat} \tend) \types \Heartbeat
	\]
	Following reduction rule \Broadcast we can observe an unreliable broadcast operation that results in
	\begin{eqnarray*}
		\Heartbeat \reduces \Heartbeat'
		= \net{P_0 \pll \squeue{\aggr s}{1}{\equeue}}
		\,\,\npll\,\,
		\net{\pin{s}{x} P_1 \pll \squeue{s}{1}{\heartbeat}}
		\,\,\npll\,\,
		\net{\pin{s}{x} P_2 \pll \squeue{s}{0}{\equeue}}
	\end{eqnarray*}
	The typing system presented so far
	cannot type the networks that may arise at runtime
	where some $s$-endpoints are not synchronised with
	the corresponding $\aggr s$-endpoint.
	In network $\Heartbeat'$, the second network node
	is typed as
	$\heartbeat: \theartbeat; s: (1, \tend) \types \net{\pin{s}{x} P_1 \pll \squeue{s}{1}{\heartbeat}}$
	and the third network node is typed as  
	$\heartbeat: \theartbeat; s: (0, \tin{\theartbeat} \tend) \types \net{\pin{s}{x} P_2 \pll \squeue{s}{0}{\equeue}}$.
	Thus, we cannot apply rule \TPar that requires the
	same type for the two $s$-endpoints.
\end{exa}

We can achieve typing by using the session state information
to construct the type information that was lost due to
unreliable communication.
The key is to use a typing rule to synchronise the
the $s$-endpoints that are not synchronised with the
$\aggr s$-endpoint.

We define type advancement as a transition relation on types.
The type advancement relation is used to define the notion
of endpoint synchronisation.
%
\begin{defi}[Type Advancement]
	\label{def:type-advancement}
	Relation $T \tadvance^n T'$ is defined as:
	\[
	\begin{array}{c}
		\tin \beta T \tadvance^1 T
		\qquad \qquad
		\tout \beta T \tadvance^1 T
		\qquad \qquad
		\infrule{k \in I}{\tselI{\ell}{T} \tadvance^1 T_k}
		\qquad \qquad
		\infrule{k \in I}{\tbranchI{\ell}{T} \tadvance^1 T_k}
		\\[6mm]
		T \tadvance^0 T
		\qquad \qquad
		\infrule{T \tadvance^n T'' \quad\ T'' \tadvance^1 T'}{T \tadvance^{n+1} T'}
	\end{array}
	\]
\end{defi}
It is useful to distinguish output advance by writing
$\tout \beta T \tadvance_o T$,
extended to $T \tadvance_o^{n} T$ in the standard way.
Intuitively, $T \tadvance^n T'$ says that
$T'$ is reached in $n$ advancements from $T$.
Type advancement for recursive types is obtained by expansion.

Next, we use type advancement (\Definition{def:type-advancement})
and the session state information within a linear context to
define the linear context synchronisation relation over
linear contexts.
\begin{defi}[Linear Context Synchronisation]
	\label{def:synchr}
	We define the relation $\synchronise{\Delta}{ \Delta' }$ inductively as follows.
	\[
		\synchronise{\econtext}{\econtext}
		\andalso \qquad
		\tree {
			\synchronise{\Delta}{\Delta'} \andalso 
			T \tadvance^{m - n} T' \andalso 
			n \leq m
		}{
			\synchronise{\Delta, s: \ctype{n}{T}}{\Delta', s: \ctype{m}{T'}}
		}{}
		\andalso \qquad
		\tree {
			\synchronise{\Delta}{\Delta'} \andalso 
			T' \tadvance_{o}^{n - m} T \andalso 
			n \geq m
		}{
			\synchronise{\Delta, s: \ctype{n}{T}}{\Delta', s: \ctype{m}{T'}}
		}{}
	\]
\end{defi}
Linear context synchronisation is crucial for the correctness of the type system
and it used for achieving $s$-endpoint synchronisation.
The middle rule states that the type of an $s$-endpoint can use type
advancement to proceed its state within a linear context.
The right rule describes the opposite case where the type
of an $s$-endpoint uses output type advancement, to reconstruct
a previous session state.
The requirement for output type advancement
is aligned with the fact that a non-synchronised $s$-endpoint may
use reduction rule \Loss, that drops output session prefixes.
Finally the left rule is a basic case rule for the
linear context.

\begin{figure}
	\[
	\begin{array}{c}
		\tree[] {
			\Gamma; \Delta' \types N
			\andalso
			\synchronise{\Delta'}{\Delta}
		}{
			\Gamma; \Delta \types N
		}{\TSynch}
	\end{array}
	\]
	\caption{Runtime Typing Rule \label{fig:typing_runtime}}
\end{figure}

Figure~\ref{fig:typing_runtime} defines rule \TSynch.
Rule \TSynch, is used to provides static guarantees
of type duality between all non-synchronised sessions
endpoints that may arise during execution.
The rule uses linear context synchronisation,
$\synchronise{\Delta}{ \Delta'}$, to synchronise
non-synchronised $s$-endpoints.

Rule \TSynch types only networks that result at runtime,
where due to unreliability, session endpoints become
unsynchronised, and it is important to prove the properties
of progress (Theorem~\ref{thm:progress}) and recovery (Theorem~\ref{thm:session_recovery})
in Section~\ref{sec:soundness}.
The main use of the rule is to align non-synhronised
$s$-endpoints in a linear context in order to apply typing
rule $\TPar$, as well as enforcing the notion of duality
between $\aggr s$-endpoint and $s$-endpoint as in typing
rule $\TSRes$.
Inspecting rule \TSynch from the point of view of single
network node it seems that \TSynch adds a degree of
non-determinism during runtime typing.
However, at the network level, the state counter of the
$\aggr s$-endpoint can deterministically guide the computation of the
linear context synchronisation for the corresponding 
$s$-endpoints within a network, i.e., an implementation
of runtime checking will seek, during linear context synchornisation
to align the counter of the $s$-endpoints with the counter
of the $\aggr s$-endpoint. 

%
\begin{exa}[Typing Heartbeat at Runtime]
	For example we can use \TSynch to synchronise the
	type of network node $\net{\pin{s}{x} P_2 \pll \squeue{s}{0}{\equeue}}$
	in Example~(\ref{ex:typing_synch}).
	\[
		\tree[] {
			\heartbeat: \theartbeat; s: (0, \tin{\theartbeat} \tend) \types \net{\pin{s}{x} P_2 \pll \squeue{s}{0}{\equeue}}
			\qquad
			\synchronise{s: (0, \tin{\theartbeat} \tend)}{s: (1, \tend)}
		}{
			\heartbeat: \theartbeat; s: (1, \tend) \types \net{\pin{s}{x} P_2 \pll \squeue{s}{0}{\equeue}}
		}{\TSynch}
	\]
	This leads to typing judgement
	\[
		\heartbeat: \theartbeat; \aggr{s}: (1, \tend), s: (1, \tend) \types \Heartbeat'
	\]
\end{exa}
%
%
%

A network $N$, resp.~process $P$, is called well-typed,
whenever $\Gamma;\Delta \types N$, resp.~$\Gamma;\Delta \types P$,
for some contexts $\Gamma$ and $\Delta$. 

\subsection{Typing Derivation Example}
We present the typing derivation for the  $\Heartbeat_1$
network in Example~\ref{ex:reduction_heartbeat}.
The example demonstrates linear context synchronisation
which is a main notion of the typing system.
\begin{exa}[Typing derivation for the Heartbeat Protocol]
\label{ex:heartbeat_typing}
	Recall, network $\Heartbeat_1$ in Example~\ref{ex:reduction_heartbeat}:
	\[
		\begin{array}{rcl}
			\Heartbeat_1 &=&  
			\newnp{s}{
					\net{\pin{\aggr s}{x} \pin{\aggr s}{x} P_0 \pll \squeue{\aggr s}{0}{(0, \heartbeat_2), (1, \heartbeat_2), (0, \heartbeat_1)}}
				\\	&\npll&
					\net{\pout{s}{\heartbeat_1} P_1 \pll \squeue{s}{1}{\equeue}}
					\,\,\npll\,\,
					\net{P_2 \pll \squeue{s}{2}{\equeue}}
				}
		\end{array}
	\]
	Also, consider that the heartbeat message has type \theartbeat.
	We first type network node
	\[
		\net{\pin{\aggr s}{x} \pin{\aggr s}{x} P_0 \pll \squeue{\aggr s}{0}{(0, \heartbeat_2), (1, \heartbeat_2), (0, \heartbeat_1)}}
	\]
	We give the type derivation for buffer
	$\squeue{\aggr s}{0}{(0, \heartbeat_2), (1, \heartbeat_2), (0, \heartbeat_1)}$.
	\begin{eqnarray}
		\label{ex:heart_beat_typing:buffer_derivation}
		\btree {
			\btree {
				\Gamma; \aggr s: \ctype{0}{\tempty} \types \squeue{\aggr s}{0}{\equeue}\ \SEmp
				\andalso \andalso
				\equeue = \newbuffer{(0, \heartbeat_2), (0, \heartbeat_1)}{0}
				\\[1mm]
				\Gamma \types \gthr{(0, \heartbeat_2), (0, \heartbeat_1)}{0} : \theartbeat
				\\[1mm]
			}{
				\Gamma; \aggr s: \ctype{1}{\mtout{\theartbeat}} \types \squeue{\aggr s}{0}{(0, \heartbeat_2), (0, \heartbeat_1)}
			}{\LExp}
			\\[9mm]
			(0, \heartbeat_2), (0, \heartbeat_1) = \newbuffer{(0, \heartbeat_2), (1, \heartbeat_2), (0, \heartbeat_1)}{1}
			\\[1mm]
			\Gamma \types \gthr{(0, \heartbeat_2), (1, \heartbeat_2), (0, \heartbeat_1)}{1} : \theartbeat
			\\[1mm]
		}{
			\Gamma; \aggr s: \ctype{2}{\mtout{\theartbeat}. \mtout{\theartbeat}} \types \squeue{\aggr s}{0}{(0, \heartbeat_2), (1, \heartbeat_2), (0, \heartbeat_1)}
		}{\LExp}
	\end{eqnarray}
	The typing for the buffer constructs a derivation using typing rules:
	i) \SEmp to initially type the empty buffer;
	ii) \LExp to  buffer $\squeue{\aggr s}{0}{(0, \heartbeat_2), (0, \heartbeat_1)}$ that contains
	only the heartbeat messages send at session state $0$ of the protocol; and
	iii) \LExp again to type the entire buffer.
	The instances of the \LExp typing rule make use of operations $\newbuffer{\cdot}{c}$
	and $\gthr{\cdot}{c}$ to type the messages at each session state.
	The network node is then typed using an instance of rule \TNode:
	\begin{eqnarray}
		\label{ex:heart_beat_typing:node_0_derivation}
		\btree{
			\btree{
				\Gamma; \aggr s: \dual{T} \types P_0
			}{
				\Gamma; \aggr s: \tin{\theartbeat} \tin{\theartbeat} \dual{T} \types \pin{\aggr s}{x} \pin{\aggr s}{x} P_0
			}{\TRcv}
			\\[6mm]
			\Gamma; \aggr s: \ctype{2}{\mtout{\theartbeat}. \mtout{\theartbeat}} \types \squeue{\aggr s}{0}{(0, \heartbeat_2), (1, \heartbeat_2), (0, \heartbeat_1)} \quad \ref{ex:heart_beat_typing:buffer_derivation}
			\\[1mm]
			\aggr s: \tin{\theartbeat} \tin{\theartbeat} \dual{T} \sessionop \aggr s: \ctype{2}{\mtout{\theartbeat}. \mtout{\theartbeat}} = \aggr s: \ctype{2}{\dual{T}}
			\\[1mm]
		}{
			\Gamma; \aggr s: \ctype{2}{\dual{T}} \types \net{\pin{\aggr s}{x} \pin{\aggr s}{x} P_0 \pll \squeue{\aggr s}{0}{(0, \heartbeat_2), (1, \heartbeat_2), (0, \heartbeat_1)}}
		}{\TNode}
	\end{eqnarray}
	The rule uses the \sessionop operator to combine
	tuple $\ctype{2}{\mtout{\theartbeat}. \mtout{\theartbeat}}$,
	derived from typing derivation~\ref{ex:heart_beat_typing:buffer_derivation},
	and session type $\tin{\theartbeat} \tin{\theartbeat} \dual{T}$,
	derived from typing rule \TRcv. The result is
	tuple $\ctype{2}{\dual{T}}$, denoting that the $\aggr s$-endpoint
	is in state $2$ and has type $\dual{T}$.

	We continue by typing network node
	$\net{P_2 \pll \squeue{s}{2}{\equeue}}$ using an instance of typing rule \TNode:
	\begin{eqnarray}
		\label{ex:heart_beat_typing:node_2_derivation}
		\btree {
			\Gamma; s: \ctype{2}{\tempty} \types \squeue{s}{2}{\equeue} \  \SEmp
			\andalso \ 
			\Gamma; s: T \types P_2
			\andalso \ 
			s: T \sessionop s: \ctype{2}{\tempty} = s: \ctype{2}{T}
		}{
			\Gamma; s: \ctype{2}{T} \types \net{P_2 \pll \squeue{s}{2}{\equeue}}
		}{\TNode}
	\end{eqnarray}
	Similarly to the previous derivation,
	the \sessionop operator is used to combine
	tuple $\ctype{2}{\tempty}$,
	derived from rule \SEmp,
	and session type $T$,
	derived from typing judgement $\Gamma; s: T \types P_2$.
	The result is
	tuple $\ctype{2}{T}$,
	which means that the $s$-endpoint used by
	the network node 
	is in state $2$ and has type $T$.
	
	The network node $\net{\pout{s}{\heartbeat_1} P_1 \pll \squeue{s}{1}{\equeue}}$
	is also typed using an instance of rule \TNode:
	\begin{eqnarray}
		\label{ex:heart_beat_typing:node_1_derivation}
		\btree {
			\btree[\TSnd] {
				\Gamma; s: T \types P_1
				\andalso
				\Gamma \types \heartbeat_1: \theartbeat
			}{
				\Gamma; s: \tout{\theartbeat} T \types \pout{s}{\heartbeat_1} P_1
			}{}
			\ \ 
			\begin{array}{l}
				\Gamma; s: \ctype{1}{\tempty} \types \squeue{s}{1}{\equeue} \quad \SEmp
				\\
				s: \tout{\theartbeat} T \sessionop s: \ctype{1}{\tempty} = s: \ctype{1}{\tout{\theartbeat} T}
			\end{array}
			\\[9mm]
		}{
			\Gamma; s: \ctype{1}{\tout{\theartbeat} T} \types \net{\pout{s}{\heartbeat_1} P_1 \pll \squeue{s}{1}{\equeue}}
		}{\TNode}
	\end{eqnarray}
	The $s$-endpoint used by the network node 
	is in state $1$ and has type $\tout{\theartbeat} T$.
	However, it is not in the same state with the
	$\aggr s$-endpoint in derivation~\ref{ex:heart_beat_typing:node_0_derivation},
	therefore we apply rule
	\TSynch to achieve endpoint synchronisation.
	\begin{eqnarray}
		\label{ex:heart_beat_typing:synch_derivation}
		\btree {
			\Gamma; s: \ctype{1}{\tout{\theartbeat} T} \types \net{\pout{s}{\heartbeat_1} P_1 \pll \squeue{s}{1}{\equeue}}
			\  \ref{ex:heart_beat_typing:node_1_derivation}
			\andalso
			\tree{
				\tout{\theartbeat} T \tadvance^{2-1} T
			}{
				\synchronise{s: \ctype{1}{\tout{\theartbeat} T}}{s: \ctype{2}{T}}
			}{}
		}{
			\Gamma; s: \ctype{2}{T} \types \net{\pout{s}{\heartbeat_1} P_1 \pll \squeue{s}{1}{\equeue}}
		}{\TSynch}
	\end{eqnarray}
	The \TSynch rule uses linear context synchronisation, Definition~\ref{def:synchr},
	to synchronise type $\tout{\theartbeat} T$ from state $1$ to state $2$ and
	get $\ctype{2}{T}$.

	We then apply rule \TPar twice to get:
	\begin{eqnarray}
		\label{ex:heart_beat_typing:parallel_derivation}
		\btree {
			\Gamma; s: \ctype{2}{T} \types \net{\pout{s}{\heartbeat_1} P_1 \pll \squeue{s}{1}{\equeue}}
			\ \ref{ex:heart_beat_typing:synch_derivation}
			\andalso \andalso
			\Gamma; s: \ctype{2}{T} \types \net{P_2 \pll \squeue{s}{2}{\equeue}}
			\ \ref{ex:heart_beat_typing:node_2_derivation}
		}{
			\Gamma; s: \ctype{2}{T} \types \net{\pout{s}{\heartbeat_1} P_1 \pll \squeue{s}{1}{\equeue}} \npll \net{P_2 \pll \squeue{s}{2}{\equeue}}
		}{\TPar}
	\end{eqnarray}
	and also
	\begin{eqnarray}
		\label{ex:heart_beat_typing:network_derivation}
		\arraycolsep=1pt
		\btree {
			\begin{array}{cl}
				\Gamma; s: \ctype{2}{T} \types \net{\pout{s}{\heartbeat_1} P_1 \pll \squeue{s}{1}{\equeue}} \npll \net{P_2 \pll \squeue{s}{2}{\equeue}}
				& \quad \ref{ex:heart_beat_typing:parallel_derivation}
				\\[1mm]
				\Gamma; \aggr s: \ctype{2}{\dual{T}} \types \net{\pin{\aggr s}{x} \pin{\aggr s}{x} P_0 \pll \squeue{\aggr s}{0}{(0, \heartbeat_2), (1, \heartbeat_2), (0, \heartbeat_1)}}
				& \quad \ref{ex:heart_beat_typing:node_0_derivation}
			\end{array}
			\\[4mm]
		}{
			\Gamma; \aggr s: \ctype{2}{\dual{T}}, s: \ctype{2}{T} \types
			\begin{array}{rcl}
				&&\net{\pin{\aggr s}{x} \pin{\aggr s}{x} P_0 \pll \squeue{\aggr s}{0}{(0, \heartbeat_2), (1, \heartbeat_2), (0, \heartbeat_1)}}
				\\&\npll&
				\net{\pout{s}{\heartbeat_1} P_1 \pll \squeue{s}{1}{\equeue}} \npll \net{P_2 \pll \squeue{s}{2}{\equeue}}
			\end{array}
		}{\TPar}
	\end{eqnarray}
	Finally, we use rule \TSRes to restrict session $s$ and
	provide with the typing derivation
	for network $\Heartbeat_1$:
	\[
		\arraycolsep=1pt
		\btree {
			\Gamma; \aggr s: \ctype{2}{\dual{T}}, s: \ctype{2}{T} \types
			\begin{array}{rcl}
				&&\net{\pin{\aggr s}{x} \pin{\aggr s}{x} P_0 \pll \squeue{\aggr s}{0}{(0, \heartbeat_2), (1, \heartbeat_2), (0, \heartbeat_1)}}
				\\&\npll&
				\net{\pout{s}{\heartbeat_1} P_1 \pll \squeue{s}{1}{\equeue}} \npll \net{P_2 \pll \squeue{s}{2}{\equeue}}
			\end{array}
			\quad \ref{ex:heart_beat_typing:network_derivation}
			\\[4mm]
		}{
			\Gamma; \econtext \types \Heartbeat_1
		}{\TSRes}
	\]
	Both endpoints are in the same state, $2$,
	and have dual types, thus safe interaction
	endpoint interaction respects session duality
	and session $s$ can be restricted.
	\qed
\end{exa}


\section{Type Soundness, Type Safety, and Progress}
\label{sec:soundness}

In this Section we prove that the proposed type system is sound 
via a type preservation theorem and safe via a type safety theorem.
Before we proceed with the main results we deploy the necessary
technical machinery and auxiliary results.

%

We define the notion of the well-formed Linear Context.
\begin{defi}[Well-formed Linear Context]
	\label{def:well-formed}
	A context $\Delta$ is {\em well-formed} whenever
	$\aggr s: \ctype{c}{T} \in \Delta$ implies
	either
	\begin{itemize}
		\item	$s: \ctype{c}{\dual{T}} \in \Delta$; or
		\item	$s \notin \dom{\Delta}$.
	\end{itemize}
\end{defi}
Well-formed linear context $\Delta$ requires that whenever
the $\aggr s$-endpoint is present in $\Delta$,
then the corresponding $s$-endpoint, if present in $\Delta$,
needs to be syncrhonised with the $\aggr s$-endpoint and,
additionally,
have dual type with respect to the $\aggr s$-endpoint.
The case where a corresponding $s$-endpoint is not present in
$\Delta$ captures the case where no $s$-endpoints were
created or when all $s$-endpoints were dropped.




The next definition captures the interaction of processes
at the type level.
\begin{defi}[Linear Context Advancement]
	\label{def:lin_context_adv}
	Advancement relation, $\tadvance$, over linear contexts is defined as:
	\begin{itemize}
		\setlength\itemsep{0mm}
		\item	$\Delta, \aggr s: \ctype{c}{\tout{\beta} T_1}, s: \ctype{c}{\tin{\beta} T_2} \tadvance \Delta, \aggr s: \ctype{c+1}{T_1}, s: \ctype{c+1}{T_2}$.
		\item	$\Delta, \aggr s: \ctype{c}{\tsel{l_i: T_i}}, s: \ctype{c}{\tbranchOn{l_i: T_i'}} \tadvance \Delta, \aggr s: \ctype{c+1}{T_k}, s: \ctype{c+1}{T_k'}$.
		\item	$\Delta, s: \ctype{c}{\tout{\beta} T_1}, \aggr s: \ctype{c}{\tin{\beta} T_2} \tadvance \Delta, s: \ctype{c+1}{T_1}, \aggr s: \ctype{c+1}{T_2}$
		\item	$\Delta, \set{s: \ctype{c}{T_i}}_{i \in I} \tadvance \Delta$
	\end{itemize}
\end{defi}

The first two cases of linear context advancement define
session interaction at type level and align with broadcasting
and selection interactions.
The third case aligns with the session gather interaction. 
The fourth case 
indicates that  a linear context advances by reducing in size.
The definition aligns with the fact that processes
may drop $s$-endpoints due to rules \Recover, \BRecover, \True, and \False.

The next lemma shows that the well-formedness of $\Delta$
is preserved by linear context inclusion and linear context advancement.

\begin{lem}
	Let $\Delta$ be well-formed.
			If $\Delta \tadvance \Delta'$ then $\Delta'$ is well-formed.
\end{lem}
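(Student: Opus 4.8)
The plan is to argue by a straightforward case analysis on which of the four clauses of \Definition{def:lin_context_adv} justifies $\Delta \tadvance \Delta'$. In the first three clauses the set of names declared by the context is unchanged and the ``spectator'' part of the context is untouched, so it suffices to verify \Definition{def:well-formed} at the single advancing $s/\aggr s$ pair; in the fourth clause names are only removed. Throughout I use that $\tdual{\cdot}$ commutes with the relevant top-level constructor, i.e. $\tdual{\tout{\beta} T} = \tin{\beta}\tdual{T}$, $\tdual{\tin{\beta} T} = \tout{\beta}\tdual{T}$, and $\tdual{\tselI{\ell}{T}} = \tbranchI{\ell}{\tdual{T}}$.

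For the first clause, $\Delta_0, \aggr s: \ctype{c}{\tout{\beta} T_1}, s: \ctype{c}{\tin{\beta} T_2} \tadvance \Delta_0, \aggr s: \ctype{c+1}{T_1}, s: \ctype{c+1}{T_2}$, well-formedness of the source forces $\tin{\beta} T_2 = \tdual{\tout{\beta} T_1} = \tin{\beta}\tdual{T_1}$, hence $T_2 = \tdual{T_1}$; so in the target the $\aggr s$-entry is again matched, in the same state $c+1$, by an $s$-entry carrying the dual type $\tdual{T_1}$, and $\Delta'$ is well-formed. The second clause, advancing a select/branch pair to its $k$-th branch with $k\in I$, is handled identically: well-formedness of the source forces $T_k' = \tdual{T_k}$ from the duality of $\tselI{\ell}{T}$ and $\tbranchI{\ell}{\tdual{T}}$. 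The third clause, $\Delta_0, s: \ctype{c}{\tout{\beta} T_1}, \aggr s: \ctype{c}{\tin{\beta} T_2} \tadvance \Delta_0, s: \ctype{c+1}{T_1}, \aggr s: \ctype{c+1}{T_2}$, is the same computation read off from the $\aggr s$-entry: its well-formedness witness $s: \ctype{c}{\tout{\beta}\tdual{T_2}}$ forces $T_1 = \tdual{T_2}$, so the advanced pair is again in a common state with dual types.

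The fourth clause, $\Delta_0, \set{s_i: \ctype{c_i}{T_i}}_{i\in I} \tadvance \Delta_0$, deletes a collection of $s$-endpoints, and this is where I expect the only (still mild) subtlety to lie. Fix $\aggr s: \ctype{c}{T} \in \Delta_0$. By well-formedness of the source, either $s \notin \dom{\Delta_0, \set{s_i: \ctype{c_i}{T_i}}_{i\in I}}$, whence $s \notin \dom{\Delta_0}$ and the second disjunct of \Definition{def:well-formed} holds; or the source contains $s: \ctype{c}{\tdual{T}}$, and then either this entry lies in $\Delta_0$ and survives in $\Delta'$, or it is one of the deleted $s_i$, in which case --- the union being disjoint --- $s \notin \dom{\Delta_0}$, so the second disjunct holds again. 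The point worth stating explicitly is precisely this: dropping an $s$-endpoint whose partner $\aggr s$ is retained does not break well-formedness, because \Definition{def:well-formed} explicitly allows the $s$-endpoint to be absent --- which mirrors the operational behaviour of rules \Recover, \BRecover, \True, and \False, under which $s$-endpoints are discarded unilaterally. The first three clauses, by contrast, are entirely routine: they amount to the single observation that ``same state, dual types'' is preserved when passing to the continuation of a prefix or to a selected branch.
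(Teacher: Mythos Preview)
Your proof is correct and proceeds exactly as one would expect: a case analysis on the four clauses of \Definition{def:lin_context_adv}, checking in each case that the condition of \Definition{def:well-formed} is preserved. The paper itself states this lemma without proof, so your argument simply fills in the straightforward details that the authors elide; there is no alternative approach to compare against.
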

%
%

The next Lemma is a consequence of the typing system.
\begin{lem}
	\label{lem:send_prefix}
	Consider networks
	\begin{itemize}
		\item	If $N_1 \equiv \net{ \pout{\aggr{s}}{e} P \pll B \pll \squeue{\aggr s}{c}{\pol{h}}}$
			and $\Gamma; \Delta \types N_1$ with $\Delta$ well-formed, then $\pol{h} = \equeue$.
		\item	If $N_2 \equiv \net{ \pout{s}{e} P \pll B \pll \squeue{s}{c}{\pol{m}}}$
			and $\Gamma; \Delta \types N_2$ with $\Delta$ well-formed, then $\pol{m} = \equeue$.
		\item	If $N_3 \equiv \net{ \sel{\aggr{s}}{l} P \pll B \pll \squeue{s}{c}{\pol{m}}}$
			and $\Gamma; \Delta \types N_3$ with $\Delta$ well-formed, then $\pol{m} = \equeue$.
	\end{itemize}
\end{lem}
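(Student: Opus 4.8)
The plan is to prove all three statements uniformly by inverting the typing derivation of the node, the decisive ingredient being the \emph{partiality} of the $\sessionop$ operator of \Definition{def:combine_context}: it has no defining clause for $T \sessionop M$ when $T$ is an output type $\tout{\beta} T'$ or a selection type $\tselI{\ell}{T}$ and $M$ is a non-empty buffer type. Since each of $N_1, N_2, N_3$ is a bare node, the only network typing rules that can derive $\Gamma; \Delta \types N_i$ are \TNode and \TSynch; as the conclusion to be proved never mentions $\Delta$ and \TSynch leaves the network unchanged, I would argue by an easy induction on the derivation and reduce to the case where it ends with \TNode, appealing to the induction hypothesis on the premise in the \TSynch case. (In fact \TSynch cannot apply to $N_1$ or $N_3$ at all, since their contexts contain an $\aggr s$-endpoint while the synchronisation relation of \Definition{def:synchr} never relates contexts that mention $\aggr s$-endpoints.) Incidentally, the well-formedness hypothesis on $\Delta$ is not used in the argument below; it is carried along as the standing invariant of this section.

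For the first statement, inversion of \TNode yields $\Gamma; \Delta_P \types \pout{\aggr s}{e} P$ and $\Gamma; \Theta \types B \pll \squeue{\aggr s}{c}{\pol{h}}$ with $\dom{\Delta_P} = \dom{\Theta}$ and $\Delta = \Delta_P \sessionop \Theta$. Inverting the process judgement --- a stack of \SWk applications, which only add $\tend$-entries, over the unique \TSnd typing the send prefix --- gives $\aggr s : \tout{\beta} T \in \Delta_P$. Inverting the buffer judgement --- a stack of \BPar applications down to the component typing $\squeue{\aggr s}{c}{\pol{h}}$ --- gives $\aggr s : \ctype{c'}{M} \in \Theta$; since $\aggr s$-buffers are typed only by \SEmp (forcing $M = \tempty$ and $\pol{h} = \equeue$) or by \LExp (whose conclusion always carries a buffer type of shape $M'.\mtout{\beta'}$), I get that $\pol{h} \neq \equeue$ forces $M \neq \tempty$. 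But then $\Delta_P \sessionop \Theta$ would have to assign $\aggr s$ the type $\ctype{c'}{\tout{\beta} T \sessionop M}$ with $\tout{\beta} T \sessionop M$ undefined, so $\Delta$ would not be a well-defined linear context, contradicting $\Gamma; \Delta \types N_1$; hence $\pol{h} = \equeue$.

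The second statement is the same argument with $s$ for $\aggr s$: \TSnd gives $s : \tout{\beta} T \in \Delta_P$, the buffer judgement gives $s : \ctype{c}{M} \in \Theta$ with $M \neq \tempty$ whenever $\pol{m} \neq \equeue$ (a non-empty $s$-buffer is typed by \SExp or \SLab, with buffer types $M'.\mtout{\beta'}$, resp. $M'.\msel{\ell'}$), and $\tout{\beta} T \sessionop M$ is again undefined unless $M = \tempty$. The third is the same once the buffer is read as the $\aggr s$-buffer carried by the selecting node: \TSel gives $\aggr s : \tselI{\ell}{T} \in \Delta_P$, and $\tselI{\ell}{T} \sessionop M$ is undefined unless $M = \tempty$, which again forces the buffer to be $\equeue$. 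I expect the main obstacle to be purely the inversion bookkeeping --- checking that interleaved \SWk steps never disturb the relevant endpoint and that \TSynch at the root is cleanly absorbed by the induction --- since the conceptual point, that a derivable judgement presupposes every $\sessionop$ occurring in its derivation is defined, makes the contradiction immediate.
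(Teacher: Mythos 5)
Your proof is correct and takes essentially the same route as the paper's: the paper's entire argument is ``the requirement on rule \TNode plus the fact that $\tout{\beta} T \sessionop \mtout{\beta'}. M$ is undefined'', which is exactly your undefinedness-of-$\sessionop$ contradiction, with the inversion bookkeeping through \SWk, \BPar, \SEmp/\SExp/\SLab/\LExp and the absorption of \TSynch made explicit. Your reading of the third case as the $\aggr s$-buffer of the selecting node matches the paper's evident intent (compare the shape of $\Nsel^c$ in Definition~\ref{def:errornetwork}), and the parenthetical claim that \TSynch can never apply to contexts containing $\aggr s$-endpoints is debatable (the appendix freely synchronises such contexts reflexively) but harmless, since your induction handles the \TSynch case anyway.
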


\begin{proof}
	The proof follows the requirement
	on rule \TNode and the fact that operations:
	i) $\tout{\beta} T \sessionop \mtout{\beta'}. M$;
	$\tout{\beta} T \sessionop \msel{\ell}. M$;
	$\tsel{\ell: T} \sessionop \mtout{\beta'}. M$; and
	$\tsel{\ell: T} \sessionop \msel{\ell}. M$
	are undefined.
\end{proof}

The above lemma states a standard property for
asynchronous session type systems (cf.~\cite{DBLP:journals/jfp/GayV10}), which
requires that in a session typed setting
whenever a network node has a send prefix,
the corresponding session buffer is necessarily empty.
This is because correct send/receive interaction will
consume any messages in the session buffer prior to the
send prefix.

We can now state and prove the Typing Preservation Theorem.

\begin{thm}[Typing Preservation]
	\label{thm:type-preservation}
	If	$\Gamma; \Delta \types N$ and $N \reduces N'$
	and $\Delta$ well-formed,
	then there exist well-formed $\Delta'$
	such that
		$\Delta \tadvance \Delta'$
	and
		$\Gamma;\Delta' \types N'$.
\end{thm}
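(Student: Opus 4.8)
The plan is to prove the statement by induction on the derivation of $N \reduces N'$, with a case analysis on the last reduction rule. The contextual rules \NonDet, \RDef, \RPar, \RRes, \RCong are dispatched by the induction hypothesis plus some context bookkeeping; the base rules of Figures~\ref{fig:reduction-relation} and~\ref{fig:reduction-recovery} are each handled directly. In every case the task is to exhibit $\Delta'$ with $\Delta \tadvance \Delta'$ (one clause of Definition~\ref{def:lin_context_adv}) and rebuild the derivation of $\Gamma;\Delta' \types N'$; the well-formedness-preservation lemma then guarantees $\Delta'$ stays well-formed, which is what keeps the induction going through \TSRes.

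\textbf{Auxiliary results to set up first.} \emph{(i) \TSynch-normalisation / inversion.} Since \TSynch may sit anywhere in a derivation, I would first show that any derivation of $\Gamma;\Delta \types N$ can be rearranged so that \TSynch is used only directly above \TNode, exposing, for every node $\net{P\pll B}$, a process typing $\Gamma;\Delta_P \types P$ and a buffer typing $\Gamma;\Theta \types B$ with $\dom{\Delta_P}=\dom\Theta$, whose combination $\Delta_P \sessionop \Theta$ synchronises to the state assignment used in $\Delta$. This relies on $\synchronise{\cdot}{\cdot}$ being reflexive and transitive and compatible with the context splittings of \TPar, \TCond, \TBr, and on the two clauses of Definition~\ref{def:synchr} advancing an $s$-endpoint and its dual $\aggr s$ consistently, so that the dual-or-absent side condition of \TSRes survives synchronisation. \emph{(ii) Algebra of $\sessionop$ and $s$-buffers.} $\sessionop$ is a partial function; $\tout\beta T \sessionop \mtout{\beta'}.M$ and $\tselI{\ell}{T}\sessionop\msel{\ell'}.M$ are undefined (this is what backs Lemma~\ref{lem:send_prefix}); appending to a buffer type consumes one extra leading prefix of the session type, i.e. $T\sessionop M = \tin\beta T'$ implies $T\sessionop(M.\mtout\beta)=T'$, and similarly for branches; and by \SEmp an $s$-buffer's typing state equals its physical counter, so $\squeue{s}{c}{\tilde m}$ may be retyped at any counter $c'$ with the same buffer type by rebuilding the derivation from \SEmp. \emph{(iii) $\aggr s$-buffer algebra.} For \Gather I need: if $\Gamma;\Theta,\aggr s{:}\ctype{c}{\tin\beta T}\types\squeue{\aggr s}{c}{\tilde h}$ then $\Gamma\types\gthr{\tilde h}{c}:\beta$ and $\Gamma;\Theta,\aggr s{:}\ctype{c+1}{T}\types\squeue{\aggr s}{c+1}{\newbuffer{\tilde h}{c}}$. \emph{(iv) Framing.} Every clause of Definition~\ref{def:lin_context_adv} has the shape $\Delta,\Xi\tadvance\Delta,\Xi'$, so $\tadvance$ is stable under adding a disjoint frame; well-formedness is stable under sub-contexts (needed in \RPar); and $\synchronise{\cdot}{\cdot}$ threads through disjoint frames.

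\textbf{The base cases.} For \Broadcast and \Select: by Lemma~\ref{lem:send_prefix} the sender's $\aggr s$-buffer is empty, so its endpoint is $\ctype{c}{\tout\beta T}$ (resp. $\ctype{c}{\tselI{\ell}{T}}$) and after the step $\ctype{c+1}{T}$ (resp. $\ctype{c+1}{T_k}$); each receiver's $s$-buffer gains a trailing $\mtout\beta$ (resp. $\msel{\ell_k}$) and its counter moves to $c+1$, so by (ii) its combined type goes from $\ctype{c}{\tin\beta\tdual T}$ to $\ctype{c+1}{\tdual T}$ (resp. the $k$-th branch); hence $\Delta\tadvance\Delta'$ by the first (resp. second) clause of Definition~\ref{def:lin_context_adv}, and receivers outside $I$ are handled in the \RPar case, where \TSynch retypes their still-behind $s$-endpoints at the advanced state. \Connect adds a fresh dual pair $\aggr s{:}\ctype{0}{\tdual T},s{:}\ctype{0}{T}$ and closes the scope with \TSRes. \Unicast: the sending $s$-endpoint has empty buffer and type $\tout\beta T$, advances its counter while the $\aggr s$-endpoint keeps $c_2$ (the tagged message $(c_1,e)$ being absorbed later by \LExp, the gap $c_1\ge c_2$ being exactly what \LExp tolerates); $\Delta\tadvance\Delta'$ by the third clause. \Receive, \Branch are local: the leading buffer prefix cancels against the input/branch prefix of the process type, so the combined type is unchanged, $\Delta'=\Delta$, via the fourth clause with empty $I$; \Gather is the same using (iii). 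For recovery: \Loss fits the output-advancement (right) rule of Definition~\ref{def:synchr} by construction — keep $\Delta'=\Delta$ and retype $N'$ using that rule; \Recover is analogous for an input prefix, with $\Delta'=\Delta$ when the $s$-endpoint was typed (via \TSynch) strictly ahead of its physical counter, i.e. genuinely behind in the protocol, so that recovery only catches it up; \BRecover drops the $s$-endpoint and its buffer, and by \TBr the continuation $R$ is already typed without it, so $\Delta\tadvance\Delta'$ by the fourth clause; \True and \False drop the $s$-endpoints of the discarded block $B'$, with \TCond supplying the common/dropped split that makes this a fourth-clause step. The contextual cases: \RPar by the induction hypothesis on the reducing component (its sub-context well-formed by (iv)) plus framing and a \TSynch realignment of any shared advanced $s$-endpoint; \RRes by the induction hypothesis plus \TCRes/\TSRes, checking the \TSRes side condition survives $\tadvance$; \RCong using an auxiliary lemma that typing is invariant under $\equiv$ (routine, the only real clause being recursion unfolding, handled by \TRec); \NonDet, \RDef immediate via \TSum, \TRec.

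\textbf{Main obstacle.} The hard part is the omnipresence of \TSynch: formulating lemma (i) so that, after a node is retyped at an advanced state, one can still recover a physical-state typing of its process and buffer \emph{together with} a synchronisation witness, and then verifying that these witnesses compose correctly across \TPar's shared $s$-endpoints and \TSRes's duality condition. A secondary difficulty is the $\aggr s$-buffer accounting in \Gather/\Unicast, where the buffer's physical counter and the buffer type's state counter genuinely diverge and messages are grouped per protocol state by $\gthr{\cdot}{c}$ and $\newbuffer{\cdot}{c}$; getting lemma (iii) precise and matching it against the tagging done by \Unicast is where the calculation is most delicate.
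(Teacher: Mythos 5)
Your proposal follows essentially the same route as the paper's own proof: induction on the reduction derivation with a per-rule case analysis, inversion through \TPar/\TSynch/\TNode in each base case, the empty-buffer property (Lemma~\ref{lem:send_prefix}) for \Broadcast/\Unicast/\Select, substitution and congruence-invariance lemmas for \Connect and \RCong, and the same assignment of the clauses of Definition~\ref{def:lin_context_adv} (and of the output-advancement clause of Definition~\ref{def:synchr} for \Loss/\Recover) to the individual cases. The only substantive difference is presentational: you factor the handling of \TSynch into an explicit normalisation/inversion lemma with framing properties, whereas the paper performs exactly that inversion inline in every case.
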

\begin{proof}
	By induction on the depth of the derivation of $N \reduces N'$.
	For the full proof see Appendix~\ref{app:type_preservation} in page \pageref{app:type_preservation}.
\end{proof}

Type preservation theorem states that
a reduction maintains typing and well-formedness.

Towards the statement of a type safety theorem, we
proceed by defining the notion of the error network.
The class of error networks indicates all networks that
should not be typed with a well-formed linear context by the
proposed typing system.
We define the notion of an error network, cf.~\cite{YOSHIDA200773},
as the network that contains an invalid $s$-pair that cannot make a safe communication interaction.

\begin{defi}[Error Network]
\label{def:errornetwork}
	Let $s$-prefix be a network of the form
	\[
	\begin{array}{rclcrcl}
		\Nbrc^{c} &=& \net{\pout{\aggr{s}}{e} P_1 \pll \squeue{\aggr{s}}{c}{\equeue} \pll B}
		&\quad&
		\Ngth^c &=& \net{\pdin{\aggr{s}}{x}{e} P_2 \pll \squeue{\aggr{s}}{c}{\pol{h}} \pll B}
		\\
		\Nuni^c &=& \net{\pout{s}{e'} P_3 \pll \squeue{s}{c}{\equeue} \pll B}
		&&
		\Nrcv^c &=& \net{\pin{s}{x'} P_4 \pll \squeue{s}{c}{\pol{m}} \pll B}
		\\
		\Nsel^c &=& \net{\sel{\aggr{s}}{\ell} P_5 \pll \squeue{\aggr{s}}{c}{\equeue} \pll B}
		&&
		\Nbra^c &=& \net{\branchDef{s}{\ell_i: P_i}{R}_{i \in I} \pll \squeue{s}{c}{\pol{m}} \pll B}
	\end{array}
	\]
%
	An \emph{invalid} $s$-pair is one of the following
	parallel compositions of  $s$-prefixes:
	\[
	\begin{array}{rcl c  rcl c rcl c rcl cc}
		\Nbrc^{c} &\npll& \Nbrc^{c'}
		&\hspace{1em}&
		\Ngth^{c}&\npll&\Ngth^{c'}
		&\hspace{1em}&
		\Nsel^{c}&\npll&\Nsel^{c'}
		&\hspace{1em}
		\\
		\Nbrc^{c}&\npll&\Ngth^{c'}
		&&
		\Nbrc^{c}&\npll&\Nsel^{c'}
		&&
		\Nbrc^{c}&\npll&\Nuni^{c}
		&&
		\Nbrc^{c}&\npll&\Nbra^{c}
		\\
		\Ngth^{c}&\npll&\Nsel^{c'}
		&&
		\Ngth^{c}&\npll&\Nrcv^{c}
		&&
		\Ngth^{c}&\npll&\Nbra^{c}
		\\
		\Nsel^{c}&\npll&\Nrcv^{c}
		&&
		\Nsel^{c}&\npll&\Nuni^{c}
	\end{array}
	\]
	\noindent
	A network $N$ is called an \emph{error network} whenever there exists an
	invalid $s$-pair $M$ such that for some network $N'$ it holds that
	$
		N \equiv \newnp{\seq n}{N' \npll M}
	$.
\end{defi}
An invalid $s$-pair is formed by a parallel
composition of two $s$-prefixes. All other parallel compositions between
two $s$-prefixes are considered valid. An error network is any network
that composes in parallel at least one invalid $s$-pair.  
Equivalently, a valid network is a parallel composition of network terms of which
all $s$-prefixes that are in the same session state, consists of at most one
broadcast prefix (resp.~gather, selection) and many receive prefixes (resp. send,
branch).

An invalid $s$-pair is either a composition of a $\aggr s$-prefix and a $s$-prefix that cannot
safely interact, or a composition of two $\aggr s$-prefixes that violates linearity
conditions.
In the latter case the composition of two $\aggr s$-endpoint
is always an error regardless of the state they are in.
In the former case we require that the two $s$-prefixes are in the same state $c$,
because
for such $s$-redexes not in the same state, reduction semantics will never allow an
interaction, other than recovery (reduction rules \Recover, \BRecover, \Loss).
The next example presents instances of error networks.

\begin{exa}[Error Network]
	\label{ex:error_network}
	As a first example, consider an instance of error network $\Nbrc^c\npll\Nbra^c$
	\[
		\net{\pout{\aggr{s}}{e} P \pll \squeue{\aggr{s}}{c}{\equeue}}
		\,\,\npll\,\,
		\net{\branchDef{s}{\ell_i: P_i}{R}_{i \in I} \pll \squeue{s}{c}{\pol{m}}}
	\]
	The network cannot perform a safe interaction on channel
	$s$, because the $\aggr s$-endpoint does not select a
	continuation on the $s$-endpoint, therefore
	the above network cannot observe a reduction. 
	The network is not typable, because two session endpoints ($\aggr s$-endpoint and $s$-endpoint)
	do not have dual types.

	A more interesting example is given by an instance of the error network
	$N \npll \Nbrc^c \npll \Nbrc^c$
	\[
		\net{\pout{\aggr{s}}{e} \nil \pll \squeue{\aggr{s}}{c}{\equeue}}
		\,\,\npll\,\,
		\net{\pout{\aggr{s}}{e} \nil \pll \squeue{\aggr{s}}{c}{\equeue}}
		\,\,\npll\,\,
		\net{\pin{s}{x} \nil \pll \squeue{s}{c}{\equeue}}
	\]
	The network cannot perform a safe interaction on channel
	$s$, because the $s$-endpoint can interact with either
	of the $\aggr s$-endpoints.
	Network $N$ is not typable, because of the requirement on
	rule \TPar that the $\aggr s$-endpoint is linear. 
\end{exa}

We also do not consider as invalid $s$-pair the case where two $s$-endpoints are
composed, i.e.~redexes of the form $\Nrcv^c \npll \Nuni^c, \Nrcv^c \npll \Nbra^c$ and $\Nuni^c \npll \Nbra^c$.
Such redexes may be well-typed, due to the requirement for linear context synchronisation on the \TSynch rule.
However, 
composing them with a corresponding $\aggr s$-endpoint 
results in an non well-typed network.
%
Disallowing such $s$-redexes would require more complicated semantics
that record interaction sequences instead of session state.
The next example clarifies this last intuition.

\begin{exa}
	Consider for example an instance of network $\Nrcv^0 \npll \Nuni^0$:
	\[
		\net{\pin{s}{x} \nil \pll \squeue{s}{0}{\equeue}}
		\,\,\npll\,\,
		\net{\pout{s}{e} \nil \pll \squeue{s}{0}{\equeue}}
	\]
	The network may be well-typed, due to linear context synchronisation,
	with well-formed context $\Delta = s:\ctype{1}{s:\tend}$ using derivation:
	\begin{eqnarray}
		\label{ex:error_network1}
		\btree {
			\btree {
				\Gamma; s: \tin{\beta_1}{\tend} \types \pin{s}{x} \nil
				\andalso
				\Gamma; s: \ctype{0}{\tempty} \types \squeue{s}{0}{\equeue}
				\\
				s: \tin{\beta_1}{\tend} \sessionop s: \ctype{0}{\tempty} = s: \ctype{0}{\tin{\beta_1}{\tend}}
			}{
				\Gamma; \ctype{0}{\tin{\beta_1}{\tend}} \types \net{\pin{s}{x} \nil \pll \squeue{s}{0}{\equeue}}
			}{\TNode}
			\\[8mm]
			\synchronise{s: \ctype{0}{\tin{\beta_1}{\tend}}}{s:\ctype{1}{\tend}}
		}{
			\Gamma; s:\ctype{1}{s:\tend} \types \net{\pin{s}{x} \nil \pll \squeue{s}{0}{\equeue}}
		}{\TSynch}
	\end{eqnarray}
	\begin{eqnarray}
		\label{ex:error_network2}
		\btree {
			\btree {
				\Gamma; s: \tout{\beta_2}{\tend} \types \pout{s}{e} \nil
				\andalso
				\Gamma; s: \ctype{0}{\tempty} \types \squeue{s}{0}{\equeue}
				\\
				s: \tout{\beta_2}{\tend} \sessionop s: \ctype{0}{\tempty} = s: \ctype{0}{\tout{\beta_1}{\tend}}
			}{
				\Gamma; s: \ctype{0}{\tout{\beta_1}{\tend}} \types \net{\pout{s}{e} \nil \pll \squeue{s}{0}{\equeue}}
			}{\TNode}
			\\[8mm]
			\synchronise{s: \ctype{0}{\tout{\beta_2} \tend}}{s: \ctype{1}{\tend}}
		}{
			\Gamma; s: \ctype{1}{\tend} \types \net{\pout{s}{e} \nil \pll \squeue{s}{0}{\equeue}}
		}{\TSynch}
	\end{eqnarray}
	that allows the usage of rule \TPar to get
	\begin{eqnarray*}
		\btree{
			\begin{array}{ll}
				\Gamma; s: \ctype{1}{\tend} \types \net{\pin{s}{x} \nil \pll \squeue{s}{0}{\equeue}}
				& \ref{ex:error_network1}
				\\[1mm]
				\Gamma; s: \ctype{1}{\tend} \types \net{\pout{s}{e} \nil \pll \squeue{s}{0}{\equeue}}
				&\ref{ex:error_network2}
			\end{array}
			\\[4mm]
		}{
			\Gamma; s: \ctype{1}{\tend} \types
			\net{\pin{s}{x} \nil \pll \squeue{s}{0}{\equeue}}
			\,\,\npll\,\,
			\net{\pout{s}{e} \nil \pll \squeue{s}{0}{\equeue}}
		}{\TPar}
	\end{eqnarray*}
%
%
	However, attempting to add a network node that 
	implements the $\aggr s$-endpoint would result in an non well-typed network.
	For example, networks:
	\[
		\begin{array}{rcl}
			N_1 &=& \net{\pout{\aggr s}{e} \nil \pll \squeue{\aggr s}{0}{\equeue}}
			\,\,\npll\,\,
			\net{\pin{s}{x} \nil \pll \squeue{s}{0}{\equeue}}
			\,\,\npll\,\,
			\net{\pout{s}{e} \nil \pll \squeue{s}{0}{\equeue}}
			\\
			N_2 &=& \net{\pin{\aggr s}{x} \nil \pll \squeue{\aggr s}{0}{\equeue}}
			\,\,\npll\,\,
			\net{\pin{s}{x} \nil \pll \squeue{s}{0}{\equeue}}
			\,\,\npll\,\,
			\net{\pout{s}{e} \nil \pll \squeue{s}{0}{\equeue}}
		\end{array}
	\]
	are both non well-typed.
	In network $N_1$,
	the $\aggr s$-endpoint is typed
	as $\Gamma; \aggr s: \ctype{0}{\tout{\beta} \tend} \types \net{\pout{\aggr s}{e} \nil \pll \squeue{\aggr s}{0}{\equeue}}$
	and cannot be synchronised with the type of network node
	$\net{\pout{s}{e} \nil \pll \squeue{s}{0}{\equeue}}$.
	Similarly in network $N_2$,
	the $\aggr s$-endpoint is typed
	as $\Gamma; \aggr s: \ctype{0}{\tin{\beta} \tend} \types \net{\pin{\aggr s}{x} \nil \pll \squeue{\aggr s}{0}{\equeue}}$
	and cannot be synchronised with the type of network node
	$\net{\pin{s}{x} \nil \pll \squeue{s}{c}{\equeue}}$.
\end{exa}

The next theorem shows that our framework enjoys type safety.
\begin{thm}[Type Safety]
	\label{thm:type_safety}
	Let network $N$ such that
	$\Gamma; \Delta \types N$ for some $\Gamma$ and some well-formed $\Delta$.
	If there exists network $N'$ such that $N \reduces^* N'$,
	then network $N'$ is \emph{not} an error network.
\end{thm}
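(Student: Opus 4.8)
The plan is to combine iterated type preservation with an impossibility result: no error network can be typed under a well-formed linear context. First I would propagate well-typedness along the reduction sequence, by induction on the length of $N \reduces^* N'$. The base case is the hypothesis. For $N \reduces^* N_0 \reduces N'$, the induction hypothesis gives $\Gamma;\Delta_0 \types N_0$ with $\Delta_0$ well-formed; Theorem~\ref{thm:type-preservation} then yields a $\Delta'$ with $\Delta_0 \tadvance \Delta'$ and $\Gamma;\Delta' \types N'$, and the lemma that linear-context advancement preserves well-formedness makes $\Delta'$ well-formed. So $N'$ is typable under a well-formed context.

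Next I would argue by contradiction: assume $N'$ is an error network, so by Definition~\ref{def:errornetwork} $N' \equiv \newnp{\seq n}{N'' \npll M}$ for some invalid $s$-pair $M$. Using invariance of typing under structural congruence and then inverting the rules \TCRes and \TSRes (each of which, at most, adjoins to the context a pair of dual endpoints synchronised at the same state, hence preserves well-formedness), \TPar (which passes to a sub-context of a well-formed context, again preserving well-formedness), and accounting for the applications of \TSynch (which, acting only on contexts of $s$-endpoints, preserve well-formedness), I obtain a derivation $\Gamma_M;\Delta_M \types M$ with $\Delta_M$ well-formed. It therefore suffices to show that no invalid $s$-pair of Definition~\ref{def:errornetwork} is typable under a well-formed linear context.

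The core of the proof is the case analysis on the shape of $M$. If $M$ is a parallel composition of two $\aggr s$-prefixes (the pairs $\Nbrc^c \npll \Nbrc^{c'}$, $\Ngth^c \npll \Ngth^{c'}$, $\Nsel^c \npll \Nsel^{c'}$, $\Nbrc^c \npll \Ngth^{c'}$, $\Nbrc^c \npll \Nsel^{c'}$, $\Ngth^c \npll \Nsel^{c'}$), then rule \TNode puts the endpoint $\aggr s$ in the linear context of each component, whereas rule \TPar admits only $s$-endpoints in the context shared by its two operands; hence $\aggr s$ would have to occur in two disjoint residual contexts, which is impossible. If $M$ is a parallel composition of one $\aggr s$-prefix and one $s$-prefix (the pairs $\Nbrc^c \npll \Nuni^c$, $\Nbrc^c \npll \Nbra^c$, $\Ngth^c \npll \Nrcv^c$, $\Ngth^c \npll \Nbra^c$, $\Nsel^c \npll \Nrcv^c$, $\Nsel^c \npll \Nuni^c$), then by rule \TNode, with Lemma~\ref{lem:send_prefix} (and its analogue for selection) emptying the send-side buffers, the $\aggr s$-endpoint's type at the shared state $c$ is exactly the polarity of its prefix (output for $\Nbrc^c$, input for $\Ngth^c$, selection for $\Nsel^c$), and this is unaffected by \TSynch since linear context synchronisation acts only on $s$-endpoints. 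Well-formedness of $\Delta_M$ then forces the $s$-endpoint to carry, at the same state $c$, the dual type (input, output, resp.\ branching). But rule \TNode, together with the partiality of the $\sessionop$ operator on mismatched polarities, pins the $s$-endpoint's type at state $c$ to the polarity of the $s$-prefix, which in each listed pair is incompatible with that dual; and the two clauses of Definition~\ref{def:synchr} cannot bridge the difference, since the state-increasing clause leaves the type at state $c$ untouched and the state-decreasing clause only prepends output prefixes. Each case thus contradicts typability under a well-formed $\Delta$, completing the proof.

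The hard part will be the bookkeeping around rule \TSynch and type advancement: proving that synchronisation can never be exploited to repair the polarity clash between the two endpoints of an invalid pair, and that the inversion through \TPar and \TSRes really does preserve well-formedness of the residual contexts in the presence of interleaved \TSynch steps; the auxiliary buffer-typing rules \LExp, \SExp and \SLab also require care in the gather and select cases, to confirm that the contents of the $\aggr s$-buffer give the $\aggr s$-endpoint no extra freedom at the relevant state.
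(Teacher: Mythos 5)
Your proposal takes essentially the same route as the paper: its proof of Theorem~\ref{thm:type_safety} is precisely the combination of Theorem~\ref{thm:type-preservation} (iterated along the reduction sequence, with well-formedness preserved under context advancement) with the asserted fact that an error network cannot be typed under a well-formed linear context. Your case analysis of the invalid $s$-pairs (linearity of the $\aggr s$-endpoint under \TPar, and the polarity clash that \TSynch cannot repair at the same state) merely fills in detail that the paper leaves implicit, illustrating it only in Example~\ref{ex:error_network}; the care you flag for the gather case with a non-empty $\aggr s$-buffer is likewise left unaddressed by the paper itself.
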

\begin{proof}
	From Theorem~\ref{thm:type-preservation} and the
	fact that an error network is not well-typed
	with a well-formed linear context.
\end{proof}
Type safety states that a network typed with
a well-formed context can always interact
safely, i.e.~never reduce to an error.

\subsection{Progress}

We prove that a network typed with a well-formed session
environment can ensure strong progress properties.
We formally introduce the notion of a deadlocked network.

\begin{defi}[Shared name prefixed process]
	A process is called {\em shared input prefixed process}
	if it has 
	the form $\accept{a}{x} P$.
	A process is called 
	{\em shared output prefixed process}
	if it has the form $\request{a}{\aggr x} P$.
\end{defi}

\begin{defi}[Deadlocked Network]
	A network $N$ is called {\em deadlocked} whenever
	\begin{eqnarray*}
		\textstyle
		N \equiv \newnp{\tilde{n}}{\Par{i}{I}{\net{\sum_{j \in J_i} P_j \pll B_i}}}
	\end{eqnarray*}
	and
			for all $i \in I$ we have that for all $j \in J_i$, $P_j$
			is prefixed with an input shared name.
\end{defi}
It is easy to show that whenever a network $N$ is a deadlocked
network, it holds that there exists no $N'$ such that $N \reduces N'$.
We can now formulate a basic progress result:
\begin{thm}[Progress]
	\label{thm:progress}
	Let $N$ be a network such that $\Gamma; \Delta \types N$
	for some $\Gamma$ and some well-formed $\Delta$.
	Then either
	\begin{itemize}
		\item	$N \equiv \Par{i}{j}{\net{\nil}}$; or
		\item	$N$ is a deadlocked network; or
		\item	$N \reduces N'$ for some $N'$.
	\end{itemize}
\end{thm}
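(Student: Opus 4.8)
The plan is to prove the trichotomy by a structural analysis of $N$ modulo structural congruence, invoking the reduction rules of Section~\ref{subsec:os} to exhibit a reduction whenever $N$ is neither congruent to a parallel composition of $\net{\nil}$'s nor deadlocked. First I would use scope extrusion and the congruence rules for $\npll$ to write $N \equiv \newnp{\seq n}{\Par{i}{I}{\net{P_i \pll B_i}}}$ with all restrictions pulled to the front; by \RRes it then suffices to establish the trichotomy for $\Par{i}{I}{\net{P_i \pll B_i}}$, and by \RPar a single reducible node is enough. Throughout I would lean on rule \TNode, which (via ``only $\kappa$ in $\dom{\Delta}$'' and $\dom{\Delta}=\dom{\Theta}$) guarantees that in each node every session endpoint occurring in $P_i$ is a genuine endpoint $\kappa$ and has its buffer present in $B_i$.

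Next I recurse on the structure of each $P_i$, using the $\defin$-unfolding congruence and commutativity/associativity of \Sum to expose, for every summand, a head constructor among $\nil$, $\accept{a}{x}Q$, $\request{a}{\aggr x}Q$, a send/receive/select/branch prefix on an endpoint $\kappa$, a conditional, and a named-process application (handled by \RDef after unfolding). If in every node every such summand is either $\nil$ or of the form $\accept{a}{x}Q$ --- equivalently, if each $P_i$ is, up to $\equiv$ and unfolding of $\defin$'s, a possibly empty \Sum of shared-input-prefixed processes, the empty sum being $\nil$ --- then $N$ is deadlocked, a case that also subsumes $N \equiv \Par{i}{I}{\net{\nil}}$. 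Otherwise some summand $Q$ of some $P_i$ has an \emph{active} head, and it suffices to show that the single node $\net{Q \pll B_i}$ reduces; lifting through \Sum by \NonDet, through $\defin$ by \RDef, and through the rest of the network and the restrictions by \RPar, \RCong, and \RRes then yields a reduction of $N$.

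For the active-head cases the argument is uniform, and the key observation is that none of the relevant reduction rules constrains the state counter of the buffer they act on, so a redex is always available: a $\pout{\aggr s}{e}$ or $\sel{\aggr s}{\ell}$ prefix fires by \Broadcast resp.\ \Select with the empty receiver set $I=\emptyset$; $\request{a}{\aggr x}$ fires by \Connect with empty $I$; $\pout{s}{e}$ fires by \Loss; $\pin{\aggr s}{x}$ fires by \Gather, which is always enabled (returning $\unit$ when nothing is tagged with the current state); $\pdin{s}{x}{e}$ fires by \Receive when its $s$-buffer is non-empty and by \Recover when empty; a branch on $s$ fires by \Branch when its $s$-buffer is non-empty --- here I use that, by the definition of \sessionop inside \TNode, the head of that buffer (if any) must be one of the branch labels $\ell_k$ with $k\in I$ --- and by \BRecover when empty; and a conditional fires by \True or \False according to whether $\condtrue\econd$. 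Note that by \TSel no well-typed process selects on an $s$-endpoint, so that case does not arise.

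The main obstacle is discharging the side conditions of the recovery and conditional rules --- the equalities $\fs{B}=\fs{P_1}$, $\fs{B}=\fs{P_2}$, $\fs{B}=\fs{R}$ and the demand that certain buffers $B'$ be dropped --- since those rules require the node to be presented in a specific shape. I would read this off the typing derivation: rules \TCond and \TBr split the linear context so that the droppable endpoints form a subset of $\dom{\Delta}$ consisting only of $s$-endpoints, while \TNode forces $\dom{\Theta}=\dom{\Delta}$; hence the buffers in $B_i$ can always be partitioned (using commutativity/associativity of $\pll$ with $\ebuffer$ as unit) into the part whose subjects lie in $\fs{P_1}$, resp.\ $\fs{P_2}$, resp.\ $\fs{R}$, and the remainder, matching the rule exactly. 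The residual bookkeeping --- that \NonDet needs only one reducible summand, that \RDef threads through $\defin$, and that, under the standard assumptions on recursion (guardedness and absence of free process variables), unfolding a named application exposes a head prefix rather than looping silently --- is routine.
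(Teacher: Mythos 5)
Your proposal is correct and follows essentially the same route as the paper's (much terser) proof: the paper simply observes that every well-typed session-prefixed process can always fire some rule (broadcast/select/connect with an empty receiver set, \Loss, \Gather, or the recovery rules \Recover/\BRecover, with \True/\False for conditionals), and that shared-output-prefixed processes can always connect, so only shared-input-prefixed sums can block --- which is exactly your case analysis, with the side conditions $\fs{B}=\fs{R}$ etc.\ discharged from the typing derivation as you describe. Your write-up is just a fully spelled-out version of that argument.
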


\begin{proof}
	The proof is straightforward since all instances of a
	session prefixed processes in a network
	node can perform an interaction, e.g.~due to recovery semantics,
	whenever they are well-typed.
	Processes that are prefixed with an output on shared name 
	in a network can also perform an interaction,
	whenever they are well-typed.
\end{proof}

The progress result states that unless a deadlocked occurs
due to shared names, a network term is always able to progress.
Indeed, due to the recovery semantics sessions can progress
even if they are interleaved. If we combine the progress
result with the result for \ref{thm:type_safety} we can also
deduce that a network progresses in a session safe manner.

The progress result, however, does not indicate the nature of
progress within a session, nor it provides additional insight
on the nature of the recovery semantics in the progress within
a session. 
We state two results on session progress and on session recovery.
We first establish the notion of a simple network,
following the notion of the simple process in~\cite{DBLP:conf/popl/HondaYC08}.

\begin{defi}
	\label{def:simple_network}
	A network $N$ is called {\em simple} if
	it is well-typed with a type derivation
	where the linear environment in the premise and the
	conclusion for each prefix rule in Figure~\ref{fig:process_typing}
	has at most one element.	
\end{defi}

Intuitevely, a simple network is a network whose composing
processes do not implement more than one active session name
during their interaction, e.g.~session interaction does not
interleave. For example, network
$
	N \equiv \net{\accept{a}{x} \pout{s}{\heartbeat} \pout{x}{\heartbeat} \nil \pll \squeue{s}{0}{\ebuffer}}  
$
is not simple because session name $s$ and session variable $x$
(which will be substistuted with a fresh session at runtime) will be
both active at runtime. 

The next theorem states a form of session fidelity where
a session within a simple network is always able to progress
without observing a recovery interaction.
\begin{thm}[Session Progress]
	\label{thm:session_progress}
	Let $N$ be a simple network such that $\Gamma; \Delta \types N$
	for some $\Gamma$ and some well-formed $\Delta$.
	If
	\[
		\textstyle
		N \equiv
		\newnp{\tilde n}{\net{P \pll \squeue{\aggr s}{c}{\tilde{m}} \pll B} \npll \Par{i}{I}{\net{P_i \pll \squeue{s}{c}{\tilde{m}_i} \pll B_i}} \npll M}
	\]
	with i) $\aggr s \in \fs{P}$,
	ii) for all $i \in I$ it holds that $s \in \fs{P_i}$,
	and iii) $s \notin \fn{M}$,
	then there exists $N'$ such that $N \reduces^* N'$ with
	\begin{itemize}
		\item	$N' \equiv \newnp{\tilde n'}{\net{P' \pll \squeue{\aggr s}{c + 1}{\tilde{m}'} \pll B'} \npll \Par{j}{J}{\net{P_j \pll \squeue{s}{c + 1}{\tilde{m}_j} \pll B_j}} \npll M'}$;
		\item	$\reduces^*$ does not involve reduction rules \Recover, \BRecover, and \Loss;
		\item	$J \subseteq I$.
	\end{itemize}
\end{thm}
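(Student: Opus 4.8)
The plan is to argue directly, exhibiting an explicit recovery-free reduction sequence $N \reduces^* N'$, and using Theorem~\ref{thm:type-preservation} (together with the preceding well-formedness lemma) to keep every reached network well-typed with a well-formed linear context, and Lemma~\ref{lem:send_prefix} to know that the relevant buffer is empty under a send prefix.

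The first step is to exploit simplicity. Since $N$ is simple and $\aggr s \in \fs P$, every prefix rule of Figure~\ref{fig:process_typing} used in typing $P$ has at most one element in its linear context(s), so $P$ engages no linear channel other than $\aggr s$ during its entire lifetime; in particular $P$ has no shared-channel request or accept prefix guarding the $\aggr s$-action, since the premise of \TReq, resp.~\TAcc, would then carry two linear components. Hence the only reductions $P$ must perform before its head prefix becomes an action on $\aggr s$ are administrative: unfolding of recursive definitions and reductions beneath a definition binder (rules \RCong and \RDef), \NonDet, and \True/\False --- and for a simple $\aggr s$-node the side-conditions $\fs{B} = \fs{P_j}$ of \True and \False hold trivially, because only $s$-endpoints may be dropped and none occurs here, so a conditional reduces discarding no buffer. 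Under the standard contractiveness assumption on recursion, and noting that the type of $\aggr s$ is not $\tend$ (since $\aggr s \in \fs P$), so no administrative path can end in $\nil$, this phase terminates and reduces the $\aggr s$-node to one whose process has a head prefix on $\aggr s$; by the process type combined with the $\aggr s$-buffer type through \TNode and by the duality forced by \TSRes, and since \TBr forbids a branch on an $\aggr s$-endpoint while \TCond forbids dropping it, that prefix is $\pout{\aggr s}{e}Q'$ (broadcast), $\sel{\aggr s}{\ell}Q'$ (select), or $\pin{\aggr s}{x}Q'$ (gather).

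It remains to dispatch the three cases. For $\pout{\aggr s}{e}Q'$, Lemma~\ref{lem:send_prefix} gives that the $\aggr s$-buffer is $\equeue$, so rule \Broadcast fires with the full index set $I$: the $\aggr s$-buffer advances to state $c+1$ and every $s$-buffer advances to $c+1$ with $e$ appended, and we take $J = I$; the $\sel{\aggr s}{\ell}Q'$ case is identical with \Select. For $\pin{\aggr s}{x}Q'$ I first treat each $s$-node: applying the same simplicity argument to $P_i$ (which engages only $s$), I administratively reduce it --- draining its own $s$-buffer by \Receive and \Branch along the way, which is permitted since these are not recovery rules --- until it either (a) reaches a head output $\pout{s}{e_i}Q_i$ on $s$ (the case forced by duality, dual to gather, when the $s$-endpoint is not dropped, and then its $s$-buffer is $\equeue$ by Lemma~\ref{lem:send_prefix}), or (b) drops the $s$-endpoint through \True or \False, so the node becomes $s$-free and is collected into $M'$. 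Letting $J \subseteq I$ index the case-(a) nodes, I fire \Unicast for each $j \in J$ (its side-condition $c_1 \geq c_2$ holds since both endpoints are still in state $c$), advancing each such $s$-buffer to $c+1$ and enqueuing $(c,e_j)$ in the $\aggr s$-buffer; finally \Gather fires, advancing the $\aggr s$-buffer to $c+1$ and consuming the state-$c$ messages. In every case the reached $N'$ has the stated shape, the sequence uses no recovery rule (in fact none of \Recover, \BRecover, \Loss, nor \Connect), and $J \subseteq I$.

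The step I expect to be the main obstacle is the buffer bookkeeping that legitimates the communication steps: showing that the administrative phase genuinely uncovers the session prefix --- this is exactly where simplicity and contractiveness are used, and it requires checking that no non-prefix rule of Figure~\ref{fig:process_typing} can conceal a linear channel other than $\aggr s$ in a continuation --- and verifying the side-conditions of \Unicast and \Gather, in particular that each relevant buffer is empty at a send prefix (Lemma~\ref{lem:send_prefix} applied to the network reached so far, which is still well-typed with a well-formed context by Theorem~\ref{thm:type-preservation}) and that the $s$- and $\aggr s$-states stay aligned at $c$ until the steps that advance them.
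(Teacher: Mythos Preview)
Your proposal is correct and follows essentially the same approach as the paper's (very brief) proof, which simply identifies the same three interaction patterns---a \Broadcast\ step (possibly followed by \Receive s), a \Select\ step (possibly followed by \Branch es), or several \Unicast\ steps followed by a \Gather---while you supply the administrative-reduction and buffer-draining details that the paper leaves implicit. One minor correction: duality of the $\aggr s$- and $s$-endpoint types is guaranteed directly by well-formedness of $\Delta$ (Definition~\ref{def:well-formed}), not specifically by \TSRes, since $s$ need not be among the restricted names $\tilde n$.
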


\begin{proof}
	The proof comes from the fact that
	$\Gamma; \Delta \types N$ and $\Delta$ well-formed.
	The typing judgement implies that sub-network:
	\[
		\textstyle
		N' \equiv
		\newnp{\tilde n}{\net{P \pll \squeue{\aggr s}{c}{\tilde{m}} \pll B} \npll \Par{i}{I}{\net{P_i \pll \squeue{s}{c}{\tilde{m}_i} \pll B_i}}}
	\]
	is also typed with a well-typed linear context and
	from the fact that it is a simple network we can show
	that it can perform the desired reduction using
	either:
	i) \Broadcast reduction possibly followed by several
	\Receive reductions;
	ii) or a \Select reduction possible followed
	by several \Branch reductions;
	iii) or several \Unicast reductions
	followed by a \Gather reduction. 
\end{proof}

The next theorem demonstrates the ability of $s$-endpoints
that are not synchronised with the corresponding $\aggr s$-endpoint
to recover, given that the entire network is a simple network.
\begin{thm}[Session Recovery]
	\label{thm:session_recovery}
	Let $N$ be a simple network 
	such that $\Gamma; \Delta \types N$ for some
	$\Gamma$ and some well-formed $\Delta$.
	If
	\[
		\textstyle
		N \equiv
		\newnp{\tilde n}{\net{P \pll \squeue{\aggr s}{c}{\tilde{m}} \pll B} \npll \Par{i}{I}{\net{P_i \pll \squeue{s}{c_i}{\tilde{m}_i} \pll B_i}} \npll M}
	\]
	and for all $i \in I, c_i < c$,
	then there exists $N'$ such that $N \reduces^* N'$ with
	\[
		N' \equiv
		\newnp{\tilde n}{\net{P \pll \squeue{\aggr s}{c}{\tilde{m}} \pll B} \npll \Par{j}{J}{\net{P_j' \pll \squeue{s}{c}{\tilde{m}_j'} \pll B_j'}} \npll \Par{k}{K}{\net{P_k' \pll B_k'}} \npll M}
	\]
	with $J, K \subseteq I$ and $J \cap K = \emptyset$.
\end{thm}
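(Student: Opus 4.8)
The plan is to recover the $s$-endpoint nodes $\Par{i}{I}{\net{P_i \pll \squeue{s}{c_i}{\tilde{m}_i} \pll B_i}}$ one at a time, using only reductions that act inside a single node, and then to concatenate the resulting sequences. The rules needed are \Receive, \Branch, \Recover, \BRecover, \Loss, \True, \False, \NonDet and \RDef (plus $\equiv$ for unfolding definitions): none of them touches the $\aggr s$-node $\net{P \pll \squeue{\aggr s}{c}{\tilde{m}} \pll B}$, nor $M$, nor the other members of the family, and none creates a fresh restricted name, so the per-node sequences are mutually non-interfering and may be run in any order while leaving the rest of $N$ untouched. Hence it suffices to show that each node $\net{P_i \pll \squeue{s}{c_i}{\tilde{m}_i} \pll B_i}$ reduces either to a node whose $s$-buffer is at state $c$ (placing $i$ in $J$) or to a node from which the $s$-endpoint and its buffer have vanished (placing $i$ in $K$).

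First I would read off the typing. Inverting $\Gamma; \Delta \types N$ through \TSRes, \TCRes and the uses of \TPar, and using well-formedness of $\Delta$ (which forces the common $s$-endpoint type to be $\dual{T}$ at state $c$, where $T$ types $\aggr s$ at state $c$), each node of the family is well-typed; since its buffer sits at state $c_i < c$, the derivation must invoke \TSynch there, so there is a type $T_i$ with $T_i \tadvance^{c - c_i} \dual{T}$ --- the middle rule of \Definition{def:synchr}, as $c_i < c$ --- together with a \TSynch-free derivation assigning $s$ the type $T_i$ at state $c_i$. Simplicity of $N$ enters twice: every prefix rule has a one-element linear context, which forces the continuation $R$ of each branch and both branches of each conditional to carry an empty linear context; consequently a node owning an $s$-buffer cannot expose a shared prefix $\request{a}{\aggr x}$ or $\accept{a}{x}$ on the way to its next action on $s$ (such a prefix would need a two-element linear context), and by \TSel it never selects on $s$. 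Moreover $T_i \neq \tend$ whenever $c_i < c$, since $\tend$ does not advance.

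Fixing $i$, I would recover the node by well-founded induction on the pair $(c - c_i,\ |\tilde{m}_i|)$ ordered lexicographically, interposing the finitely many administrative reductions (\NonDet, \True, \False, \RDef and unfolding) needed to expose a prefix --- these terminate by guardedness of recursion and the decreasing structure of \Sum and conditionals, and $|\tilde{m}_i|$ never grows since only \Broadcast and \Select append to an $s$-buffer. After the administrative steps the node has either already had its $s$-buffer discarded by \True or \False --- which happens exactly when the conditional evaluates into the branch that does not use $s$ (which, by simplicity, has empty linear context), and then $i$ goes to $K$ --- or it exposes a prefix on $s$, which by the previous paragraph is a send, a receive or a branch. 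If it is $\pout{s}{e}$ then by \lemref{send_prefix} the $s$-buffer is empty, \Loss applies, and $c - c_i$ drops; if it is $\pdin{s}{x}{e'}$ with non-empty buffer then \Receive applies and $|\tilde{m}_i|$ drops; with empty buffer \Recover applies and $c - c_i$ drops; for a branch prefix with a matching head label in the buffer \Branch applies and $|\tilde{m}_i|$ drops; for a branch prefix with empty buffer \BRecover applies (its side condition $\fs{B} = \fs{R}$ holding since $\fs{R} = \emptyset$) and $i$ goes to $K$. The step that keeps the recursion productive is that once the $s$-buffer is drained and $c - c_i > 0$, $T_i \neq \tend$ guarantees the continuation still owns a genuine $s$-prefix rather than finishing the session, so a gap-reducing or a $K$-step is always available; when $c - c_i$ reaches $0$, $i$ is placed in $J$. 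Theorem~\ref{thm:type-preservation} keeps $N$ well-typed with a well-formed context throughout, which licenses the repeated inversions and the appeals to \lemref{send_prefix}.

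The main obstacle is the typing bookkeeping: threading the derivation through \TSynch, \TPar and the restriction rules to isolate each node's type and extract the advancement witness $T_i \tadvance^{c - c_i} \dual{T}$ in the presence of recursive types (handled by expansion), and matching the shape of $T_i$ against the exposed $s$-prefix of $P_i$ so that the chosen reduction is always applicable and the side conditions of \BRecover, \True and \False are met. The concatenation step, although intuitively clear since every rule used is local to one $\net{\cdot}$ and introduces no fresh names, still needs its non-interference stated explicitly.
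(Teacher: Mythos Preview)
Your proposal is correct and follows essentially the same strategy as the paper: a double induction, first over the nodes in $I$ and then over the gap $c - c_i$, using simplicity to guarantee that each $P_i$ exposes an $s$-prefix on which one of the recovery rules \Recover, \BRecover, \Loss can fire (with \BRecover and the conditional rules yielding the $K$-partition).

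Your argument is in fact more careful than the paper's sketch in two respects. First, you add $|\tilde{m}_i|$ as a secondary component of the induction measure and explicitly invoke \Receive{}/\Branch{} to drain the buffer; this is needed because \Recover{} and \BRecover{} require an empty $s$-buffer, a point the paper's proof elides. Second, you spell out why simplicity rules out a blocking shared-name prefix and why $T_i \neq \tend$ keeps the loop productive, whereas the paper simply asserts that simplicity ``allows us to observe recovery rules''. These are refinements of, not departures from, the paper's approach.
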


\begin{proof}
	The proof is done by double induction;
	the first is an induction on the size of $I$
	and the second is an induction on size $c - c_i$.
	The fact that $N$ is simple allows us to observe
	recovery rules \Recover, \BRecover and \Loss on
	the session prefixes of processes $P_i$
	that will increase the counter
	of the $s$-endpoint until it reaches the value of counter $c$.
	However, in the case of a \Cond process term
	a $s$-endpoint might be dropped resulting in
	a partition of the $I$ set into sets $J$ and $K$.
\end{proof}




\section{The Paxos Consensus Protocol}
\label{sec:paxos}

We present a session safe implementation of
the Paxos consensus protocol using the \UBSC.
Paxos~\cite{lamport1998paxos,lamport2001paxos}
is a protocol for reaching consensus in a network that operates
under conditions of unreliability.
It ensures that network agents can agree on a single value in the presence
of failures. Despite being the standard consensus algorithm, Paxos is
notoriously difficult to understand, and real-world implementations
have brought forth many problems that are not taken into account by
the theoretical model of Paxos~\cite{chandra2007paxos}.

The session type representation of Paxos is not used to prove the
correctness of the protocol itself, but to check whether the Paxos
protocol enjoys session types properties, i.e., type preservation, type safety, and progress,
under the particular session specification.
Session types can also help to identify
subtle interactions such as branching or dropping sessions.
Furthermore, a session type representation allows for the basic algorithm
to be easily extended while still providing formal guarantees.


\subsection{The Paxos Protocol}
We implement the most basic protocol of the Paxos family as described
in~\cite{lamport2001paxos}. The network agents act autonomously and propose
values for consensus to the other agents within the network.
If eventually a majority of agents run for long enough without failing,
consensus on one of the proposed values is guaranteed.
A correct implementation of the protocol ensures that:
\begin{itemize}
	\item	Only a value that has been proposed for consensus is chosen.
	\item	The agents within the network agree on a single value.
	\item	An agent is never informed that a value is chosen for consensus,
		unless it has been chosen for consensus.
\end{itemize}
The Paxos setting assumes asynchronous non-Byzantine communication that
operates under the following assumptions:
\begin{itemize}
	\item	Messages can take arbitrarily long to be delivered,
		can be duplicated, but are not delivered corrupted.
	\item	Agents operate at an arbitrary speed, may stop operating
		and may restart. However, it is assumed that
		agents maintain persistent storage that survives
		crashes.  
\end{itemize}
Following the Paxos setting assumptions, an implementation
of the Paxos protocol in \UBSC
is ensured to be correct under the assumptions A1-A6.
%
%
Specifically:
requirement S2 ensures asynchronous communication;
requirement A4 ensures non-Byzantine communication;
requirement A5 ensures non-corrupted delivered messages;
requirement A6 ensures that agents operate at an arbitrary speed;
and
requirement A3 ensures that messages are never duplicated,
which is subsumed by the Paxos requirement that messages might
be duplicated, i.e., the properties of the Paxos protocol are
also ensured in a setting where
messages are never duplicated.
Moreover, requirement A6 can express the case where an agent
has failed by allowing the agent to take an arbitrary long
time to perform an action. If the agent eventually interacts
then it is considered to have restarted maintaining, through
persistent storage, all the information it had prior to the fail.

Paxos agents implement three roles:
i) a \emph{proposer} agent proposes values towards the network for
reaching consensus;
ii) an \emph{acceptor} accepts a value from those proposed, whereas
a majority of acceptors accepting the same value implies consensus
and signifies protocol termination;
and
iii) a \emph{learner} discovers the chosen consensus value.
The implementation of the protocol may proceed over several rounds.
A successful round has two phases: \PreparePh and \AcceptPh.




The protocol ensures that  in the case where a consensus value $v$
has already been chosen among the majority of the network agents,
broadcasting a new proposal request with a higher proposal number
will result in choosing the already chosen consensus value $v$. 
Following this fact, we assume for simplicity that a learner has
the same implementation as a proposer.


\subsection{Implementation of the Paxos protocol in \UBSC}

\begin{figure}
	\[
		\arraycolsep=2pt
		\begin{array}{rcl}
			\PaxosType	&=&	\tout{\prep} \tin{\prom}
						\oplus \left\{
						\begin{array}{rl}
							\acceptLabel: & \tout{\messageTuple} \tend,
							\\
							\restart: & \tend
						\end{array}
						\right\}
			\\[6mm]

			\PaxosNode^{\id}_{r, v}	&=&		\net{\Paxos^{\id}_{r, v}}
			\\[2mm]
			\Paxos^{\id}_{r, v}		&=&		\Def{
						\\		&& \quad
								\begin{array}{rcl}
									\ProposerDef{x, y}
									&\defeq&
										\request{a}{s}
										\pout{\aggr s}{x}
										\pin{\aggr s}{\set{(r_i, v_i)}_{i \in I}}
									\\
									&&	\If\ |I| > \frac{M}{2}\ \Then
										\\ && \quad
										\sel{\aggr s}{\acceptLabel}
										\pout{\aggr s}{(x, v = \chooseVal{\set{(r_i, v_i)}_{i \in I}, \mathsf{id}})}
										\\ && \quad \PaxosVar{x, v}
									\\
									&&	\Else
										\\ && \quad
										\sel{\aggr s}{\restart} \PaxosVar{x, y}
									\\[2mm]
									\AcceptorDef{x, y}
									&\defeq&
										\accept{a}{s}
										\pin{s}{x'}
										\\ &&
										\If\ (x' > x)\ \Then
										\\ & & \quad \pout{s}{x, y} s \triangleright
										\left\{
											\begin{array}{rl}
												\acceptLabel:	& \pin{s}{x', y'} \PaxosVar{x', y'},
												\\
												\restart:	& \PaxosVar{x, y}
											\end{array}
										\right\}
										\\&& \Else
										\\&& \quad \PaxosVar{x, y}

									\\[4mm]
									\PaxosDef{x, y}
									&\defeq&
										\ProposerVar{x+1, y} \Sum (\AcceptorVar{x, y} \recover \PaxosDef{x, y})
								\end{array}
						\\		&& }{
						\\		&& \quad \PaxosVar{r, v}}
		\end{array}
	\]
	\caption{Implementation of the Paxos consensus protocol \label{fig:paxos}}
\end{figure}

Figure~\ref{fig:paxos} describes the implementation of the Paxos
protocol in our framework. The implementation assumes that
expressions contain finite sets of integer tuples.
We use notation $\set{(r_i, v_i)_i}_{i \in I}$ for such a
finite set.
Moreover, we assume that the aggregation operation is defined
as a union of sets of integer tuples.

The interaction for establishing a consensus value takes place within
a single session that involves the nodes of the network.
The communication behaviour of the Acceptor is described by session type
\PaxosType, whereas the communication behaviour of the proposer is described
by the dual type $\tdual{\PaxosType}$.
The \PaxosType session type provides with a type level description of the
protocol in \cite{lamport2001paxos}.
The fact that the type is enforced to the implementation, through the
type system, together with
the fact that the underlying assumptions of our calculus are covered by the
assumptions of the Paxos setting, provide supporting evidence for
proving the correctness of the implementation.

A Paxos agent is described by network node
$\PaxosNode^{\id}_{r, v} = \net{\Paxos^\id_{r, v}}$ where $r$
is the number of the current proposal number, $v$ is the
consensus value that corresponds to proposal number $r$,
and \id is a unique node identity number.
A Paxos agent non-deterministically behaves either as a proposer,
(definition \ProposerDef{x, y}) or as an acceptor
(definition $\AcceptorDef{x, y} \recover \PaxosDef{x, y}$).
The definition of the acceptor requires that during computation
an acceptor agent recovers when an input endpoint does not
progress, by dropping all active sessions and proceeding to 
process \PaxosVar{r, v}.


%
 
If a Paxos agent decides to act as a proposer, it does so by increasing
its current proposal number and proceeding to process $\ProposerVar{r+1, v}$.
It then requests a new session and enters the \PreparePh phase.
All the Paxos agents that accept a session request act as acceptors.
The proposer then broadcasts towards the network a \textit{prepare}
message request, type \prep, that contains the proposal number $r$.

All the acceptors that received the \textit{prepare} message check
whether the proposal number is greater than the one they currently
have. If not, they drop the session and restart the computation
proceeding to process $\PaxosVar{r, v}$. Otherwise, they reply with
a \textit{promise} message, type \prom, not to respond to a prepare
message with a lower round number.
The promise message contains the current proposal number and the
current consensus value of the acceptor. If this is the first time
the acceptor is involved in a consensus round and it has no
information of consensus value then the promise message will
contain empty values $(\emptyval, \emptyval)$.
Here assume that for all proposal numbers $r$ it holds that $r > \emptyval$.


After all the involved acceptors reply with a promise message,
the protocol enters the \AcceptPh phase.
The proposer gathers all the promises as a set of promise messages,
$\set{(r_i, v_i)}_{i \in I}$, and then checks whether the majority
of acceptors have replied using condition $|I| \leq \frac{M}{2}$,
with $M$ being the number of the nodes in the network.
Note that for clarity, in the Proposer agent we use the set notation
$\set{(r_i, v_i)}_{i \in I}$ in place of variable in an input process.

If the check fails the proposer sends a restart label to all the
acceptors, and restarts its own computation by proceeding to process
$\PaxosVar{r, v}$.
All acceptors that receive label \restart also restart their
computation by proceeding to process $\PaxosVar{r, v}$.
%

If the majority check is passed, the proposer selects a value to submit
to the acceptors by inspecting the promises received and choosing the
value corresponding to the highest proposal number received in a promise
message.
If no value is received then it chooses its \id value.
This is expressed by computation
\[
	\begin{array}{rcl}
		v_k &=& \chooseVal{\set{(r_i, v_i)}_{i \in I}, \id} \text{ when } \forall i \in I, r_k \geq r_i
		\qquad \text{and}
		\\
		\id &=& \chooseVal{\set{(r_i, v_i)}_{i \in I}, \id} \text{ when } \forall i \in I, r_i = \emptyval
	\end{array}
\]
%
%
The proposer then broadcasts an \textit{accept} message, 
which is an \acceptLabel label followed by a tuple that contains
the current proposal number and the chosen highest value. Finally,
it updates its own proposal number and consensus value 
and proceed to process $\PaxosVar{r, v}$.
%
Moreover, all acceptors that receive the \textit{accept} message
update their proposal number and their consensus value, and proceed
to state $\PaxosVar{r, v}$.

%

Network node $\PaxosNode_{n, v}$ can be typed using the following
typing judgement:
\begin{eqnarray*}
	a: \PaxosType; \econtext \types \PaxosNode^{\id}_{n, v}
\end{eqnarray*}
Shared channel $a$ uses type \PaxosType, thus all establish
sessions of the computation follow the behaviour of the
Paxos protocol as described by the \PaxosType session type.
Subsequently, a network that describes a set of nodes
that run the Paxos protocol is defined as:
\[
	N = \Par{i}{I}{\PaxosNode^{i}_{\emptyval, \emptyval}}
\]
Network $N$ is typed using typing judgement
$a: \PaxosType; \econtext \types N$.
Typing is possible due to the typing of network node $\PaxosNode^{\id}_{n, v}$
and multiple applications of rule \TPar.

The paxos process $\Paxos^\id_{r, v}$ is recursive, where each
recursion indicates a new propose round that implies the creation
of a new session of type \PaxosType.
Each iteration creates a new session by necessity,
since if, due to reliability, a session within a propose
round does not include the majority of Paxos acceptors
then it is necessary to terminate the session and reiterate.
This is reflected by the fact that
type \PaxosType is not recursive.

Our typing framework guaranties that network $N$,
which is typed with a well-formed (empty) linear context,
will never reduce to an error state
(Theorem~\ref{thm:type-preservation}). Moreover, the
fact that a Paxos network is a simple network (Definition~\ref{def:simple_network})
and the progress results (Theorems~\ref{thm:session_progress} and~\ref{thm:session_progress}) 
ensure that the interaction takes place within
a session as defined by the session type.
This lifts the burden from the programmer to check
for deadlocks and type mismatches, and leaves only
the burden for implementing correctly the algorithmic
logic. If the algorithmic logic is implemented 
correctly, then the fact that the assumptions
of the \UBSC are covered by the execution context of Paxos
ensures the properties of the Paxos protocol.

More complicated Paxos networks can be described.
For example, we can allow a Paxos agent that acts as an acceptor
to establish multiple sessions with different proposers
during an execution and explicitly drop the session
with the lowest proposal number:
\[
\begin{array}{rcl}
	\AcceptorDef{x, y}
	&\defeq&
		\accept{a}{s}
		\pin{s}{x'} \AccVar{s, x, x', y} 
	\\[2mm]
	\AccDef{w, x, x', y} &\defeq&
		\If\ (x' > x)\ \Then
		\\&& \quad \pout{w}{x, y}
		\\&& \qquad
		\left(
		\begin{array}{ll}
			&  \AccPhVar{w, x, y}
			\\ \Sum & \accept{a}{s'} \pin{s'}{x''}
			\\ & \If\ (x'' > x')
			\\ & \quad \AccVar{s', x, x'', y}
			\\ & \Else
			\\ & \quad \AccPhVar{w, x, y}
		\end{array}
		\right)
		\\&& \Else
		\\&& \quad \PaxosVar{x, y}

	\\[2mm]
	\AccPhDef{w, x, y} &\defeq&
		w \triangleright
			\left\{
				\begin{array}{rl}
					\acceptLabel:	& \pin{w}{x', y'} \PaxosVar{x', y'},
					\\
					\restart:	& \PaxosVar{x, y}
				\end{array}
			\right\}

\end{array}	
\]
The definition of the acceptor has the option, expressed
as non-deterministic choice, to establish a
new session, i.e.~enter a prepare phase with a different
proposer, while in the accept phase of another proposal.
The computation then compares the proposal numbers from the
two sessions. Following the result of the comparison, the
computation will explicitly drop the session with the lowest
proposal number and proceed as described by the type of the
session with the highest proposal number. The above process
is well-typed; note that the application of rule \TSum
ensures that the two branches of the non-deterministic operator
have the same session type.

The new acceptor definition results in a network that
is not simple following Definition~\ref{def:simple_network}, therefore
the conditions for Session Progress (Theorem~\ref{thm:session_progress})
and Session Recovery (Theorem~\ref{thm:session_recovery}) do not hold.
Nevertheless, the network will never reduce to an error network (Definition~\ref{def:errornetwork})
and it can be proven that it still enjoys the progress property via Theorem~\ref{thm:progress}.
Moreover we can show that individual sessions can still progress and recover.

\section{Conclusion and Future Work}
\label{sec:conclusion}

%



This paper is motivated by the need to to sufficiently define
a session type framework for systems such as ad-hoc and
wireless sensor networks. The basic characteristic of these
networks is a lossy stateless communication medium shared among
an arbitrary number of agents.

To this end, we have introduce the asynchronous unreliable broadcast
calculus accompanied with a corresponding session type framework.
The semantics of the calculus are inspired by the practice
of ad-hoc and wireless sensor network and develop the necessary
mechanisms to support safe session interaction and recovery.
Asynchrony is achieved with the use of message buffers.
The main session communication operations include asynchronous
broadcast operation and the asynchronous gather communication
pattern in the presence of link failure and message loss.
Message loss may lead to session endpoints that are not synchronised
with the overall protocol. In such a case the semantics propose
the appropriate mechanisms for autonomous session recovery.
The recovery semantics find sufficient justification from the 
practice of networks that operate in an unreliable setting.

Our session type systems ensures safe and sound communication interaction
that respects the session types principles. A type preservation
theorem ensures the soundness of the system, whereas a type safety
theorem ensures that the system will never reduce an undesired/error
state. Finally, a set of progress properties ensure that 
sessions may always progress, and, in the case of non-synchronisation, 
a session can always safely recover.

%

%

The syntax and semantics of the \UBSC are expressive enough to
describe non-trivial
communication protocols that operate in an unreliable setting.
%
We used the \UBSC to describe a basic implementation of the
Paxos consensus algorithm, the standard protocol for achieving
consensus in an asynchronous unreliable setting.
Our framework satisfies the underlying operation
assumptions of the Paxos protocol. The computation
for reaching consensus takes place within a single session, therefore 
a single session type can describe the interaction of the Paxos agents.
The session typing system ensures that a Paxos agent
interaction follows the Paxos session protocol.
%

Moreover, this work follows the syntax of
binary session types, where a rather intrigate typing
system enforces the static properties of soundness,
safety, and deadlock freedom to a calculus that supports
realistic interactions of broadcast, gather, and recovery.
However, it is not in the aim (nor in the static nature)
of the \UBSC session type system to guaranty more dynamic
properties that have to do with the algorithm correctness,
e.g., to prove that the Paxos protocol will
eventually reach consensus.


\subsection*{Future Work}

Our system is open for further extensions in the future.
A first direction is to capture more complicated patterns beyond the
unreliable broadcast communication and the gather communication pattern.
We would also like to investigate more elaborate autonomous
recovery mechanisms, e.g.~to explicitly define time-out
and interrupt routines.

Another interesting direction is the extension of
the system with locations and mobility, cf.~\cite{Hennessy2002}, as this 
would allow the use of session types in a highly dynamic and complex
setting.
Location and node mobility is an important feature which allows for new nodes to
be introduced, disabled, or migrate between locations within a network.
Locality and/or mobility might also imply semantics that take into account
broadcast/communication range. The feature of communication range further
strengthens the need for the existence of autonomous recovery mechanisms.
%

%

An important direction for future research is to develop a
more robust session type system, based on multiparty session types, where network
agents can interact with multiple roles inside a session.
The research direction on a multiparty session type framework in the context
of unreliable broadcast communication can follow principles from the
work on parametrised multiparty session types~\cite{DBLP:journals/soca/NgY15},
where a session type system is developed for describing interaction among an
arbitrary number of agents. 

\subsection*{Acknowledgements}

    Kouzapas and Gay were supported by the {UK EPSRC} project {EP/K034413/1}
    ``From Data Types to Session Types: A Basis for Concurrency and
    Distribution''. Voinea was supported by an EPSRC PhD studentship. This research was supported by a
    Short-Term Scientific Mission grant from COST Action IC1201
    (Behavioural Types for Reliable Large-Scale Software Systems).
    Voinea and Gay were supported by the UK EPSRC project "Session Types for Reliable Distributed Systems" (EP/T014628/1 and EP/T014512/1) and by BehAPI (Behavioural Application Program Interfaces), an EU H2020 RISE programme (Marie Skłodowska-Curie grant agreement No 778233).


	\bibliographystyle{alphaurl}
	\bibliography{references}

	\newpage
	\appendix


\section{Proofs}

\subsection{Congruence invariance proof}

\begin{lem}[Congruence Invariance]
	\label{lem:congruence-invariance}
	\begin{itemize}
		\item	If	$P \equiv P'$, then
				$\Gamma;\Delta \types P$
			if, and only if,
				$\Gamma;\Delta \types P'$.

		\item	If	$N \equiv N'$, then
				$\Gamma;\Delta \types N$
			if, and only if,
				$\Gamma;\Delta \types N'$.
	\end{itemize}
\end{lem}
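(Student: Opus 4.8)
The plan is to proceed by rule induction on the derivation of the structural congruence. Since $\equiv$ is defined as the \emph{least} congruence relation closed under the axioms of Figure~\ref{fig:structural-congruence}, every derivation is built from those axioms together with reflexivity, symmetry, transitivity, and closure under each term constructor. Reflexivity is trivial; symmetry supplies the ``only if'' direction from the ``if'' direction (and lets one treat each axiom in a single orientation); transitivity composes two applications of the induction hypothesis; and each congruence-closure case follows by applying the induction hypothesis to the immediate subterm and re-applying the same typing rule, because every typing rule in Figures~\ref{fig:buffer_typing}--\ref{fig:network_typing} is syntax-directed on its principal subterm. I would organise the argument in a layered way: first the process statement, then an auxiliary statement for buffers ($\Gamma;\Theta \types B$ iff $\Gamma;\Theta \types B'$ when $B \equiv B'$), and finally the network statement, whose node-bridging axiom $\net{P \pll B} \equiv \net{P' \pll B'}$ invokes the previous two.

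For the base cases I would inspect the (essentially unique) typing derivation of the left-hand side and rebuild one for the right-hand side. The $\Sum$-commutativity and $\Sum$-associativity axioms are immediate from rule \TSum, which demands the same $\Gamma;\Delta$ for every summand; the buffer axioms follow from \BPar, since $\Theta$ is combined by disjoint union, which is commutative and associative (this also settles the auxiliary buffer statement). Alpha-conversion of processes and of networks is handled by the standard observation that typing is stable under renaming of bound names, using that $\request{a}{\aggr x}$, $\accept{a}{x}$, $\pdin{k}{x}{e}$, the $\mu$-binders and the restrictions all introduce their bound name with a chosen type that the rule then threads unchanged. The network axioms for $\npll$ follow from \TPar; the $\net{\nil}$-unit axiom from the fact that $\net{\nil}$ is typed with the empty linear context (rules \TInact, \BEmp, \TNode); the two restriction orderings and scope extrusion from \TCRes/\TSRes together with the side condition $n \notin \fn{M}$, which guarantees the context used for $M$ does not mention $n$, so the restriction can be permuted past the \TPar split; and the node-bridging axiom from the process induction hypothesis and the buffer statement.

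The two delicate cases are the following. First, the unfolding axiom $\Def{\set{\abs{D_i}{\tilde x_i} \defeq P_i}_{i \in I}}{P} \equiv \Def{\set{\abs{D_i}{\tilde x_i} \defeq P_i}_{i \in I}}{P \subst{\abs{P_k}{\tilde x_k}}{D_k}}$ requires a substitution lemma for process variables: replacing a recursion variable $D_k$, whose entry $\abs{D_k}{\tilde x_k}: (\Gamma_k;\Delta_k)$ is recorded in $\Gamma$ and whose body $P_k$ is well-typed as demanded by \TRec, by the abstraction $\abs{P_k}{\tilde x_k}$ inside a well-typed process again yields a well-typed process with the same $\Gamma;\Delta$. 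This is proved by a routine induction on the typing derivation of the enclosing process, the key case being rule \TVar at an occurrence $\appl{D_k}{\tilde v}$, where the instantiated body $P_k \subst{\tilde v}{\tilde x_k}$ must be shown well-typed via the usual value-substitution lemma; this is the main technical obstacle. Second, the associativity of $\npll$ under \TPar needs care because \TPar carries a shared component $\Delta$ of $s$-endpoints alongside the private components $\Delta_1,\Delta_2$; one must exhibit the correct re-bracketing of the linear context and check that the side condition ``only $s$-endpoints in $\dom{\Delta}$'' is preserved, but this is pure context bookkeeping. Once both are in hand, each axiom preserves typing in both directions, and the induction closes; the network statement then follows, completing the proof.
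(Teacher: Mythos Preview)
Your proposal is correct and follows exactly the approach the paper takes: the paper's proof is a one-line ``By induction on the structural congruence definition,'' and your write-up is a faithful elaboration of that induction, including the auxiliary buffer statement and the two cases (definition unfolding and $\npll$-associativity) that require the most bookkeeping.
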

\begin{proof}
    By induction on the structural congruence definition.
\end{proof}

\begin{lem}[Substitution]
	\label{lem:substitution}
	\begin{itemize}
		\item	Whenever
				$\Gamma \types e : \beta$
			and
				$\Gamma, x: \beta; \Delta \types P$,
			then
				$\Gamma; \Delta \types P\subst{e}{x}$.

		\item	Whenever
				$\Gamma; \Delta, x: T \types P$,
			then
				$\Gamma; \Delta, s: T \types P\subst{s}{x}$.
	\end{itemize}
\end{lem}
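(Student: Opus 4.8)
The plan is to prove both items simultaneously, by induction on the typing derivation of $P$; the two arguments run in parallel, differing only in which component of the shared/linear context the substitution rewrites.

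For the first item the only genuinely non-trivial base case is $\TSnd$ --- and, symmetrically, the side-condition on the default value in $\TRcv$ --- where I would appeal to the standard fact that expression typing is closed under substitution: from $\Gamma \types e:\beta$ and $\Gamma, x:\beta \types e':\beta'$ one obtains $\Gamma \types e'\subst{e}{x}:\beta'$, which is immediate from the definition of $\types$ on expressions together with the assumption that substitution on expressions is defined in the standard way. The binder cases $\TReq$, $\TAcc$, $\TRcv$, $\TRec$ would be handled by first $\alpha$-converting the bound name away from $x$ and from $\fn e$, and then applying the induction hypothesis to the premise under the correspondingly enlarged shared context (also using weakening of $\Gamma$ to carry $a:T$, resp.\ the received variable, resp.\ the $\tilde x_i$, through the side-conditions on $e$). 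The remaining cases ($\TInact$, $\SWk$, $\TSel$, $\TBr$, $\TSum$, $\TCond$, $\TVar$) follow by applying the induction hypothesis to each premise and reassembling the same rule; in $\TVar$ the argument tuple $\tilde v$ becomes $\tilde v\subst{e}{x}$ and the type side-conditions are re-established with the expression-substitution fact, noting that $D$, its formal parameters and the session arguments $\tilde k$ are untouched.

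For the second item I would first observe that $\subst{s}{x}$ rewrites only subject occurrences of the session variable $x$: by inversion $x:T$ lives in the linear context, so $x$ cannot occur inside any expression and every expression side-condition survives verbatim. The key cases are then $\TSnd$, $\TRcv$ and $\TBr$ with subject $k=x$, where I would apply the induction hypothesis to the premise typed under $\Delta, x:T'$ to get a derivation under $\Delta, s:T'$ and reapply the rule with subject $s$; note that $\TSel$ with subject $x$ simply cannot occur, since that rule demands subject $\aggr s$ or $\aggr x$, so $\subst{s}{x}$ never manufactures a select whose subject is a bare $s$. The case $\SWk$ splits on whether the weakened endpoint is exactly $x:\tend$ (then $x\notin\fn P$, so $P\subst{s}{x}=P$ and I re-weaken with $s:\tend$) or not (apply the induction hypothesis and re-weaken as before); the binder and structural cases are exactly as in the first item.

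I expect the only mildly delicate bookkeeping to be: the $\alpha$-renaming in $\TReq$, $\TAcc$, $\TRcv$, $\TRec$; the propagation of the linear-context splits in $\TBr$, $\TCond$ and $\TVar$; and, in the $\TVar$/$\TRec$ cases, checking that substitution commutes with the lookup of the recursive-definition entry $\abs{D}{\tilde x}:(\Gamma';\Delta')$ --- which it does because $D$ and $\tilde x$ are bound and may be chosen distinct from $x$, $s$ and $\fn e$. None of this is conceptually hard; the whole argument is the textbook substitution-lemma induction instantiated to the present rule set.
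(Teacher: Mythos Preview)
Your proposal is correct and follows essentially the same approach as the paper, which states only ``By induction on the definition of processes'' with no further detail. Your choice to induct on the typing derivation rather than on the process syntax is a minor (and arguably cleaner) variation, and your case analysis---in particular the observations that session variables never occur inside expressions, that $\TSel$ cannot have subject $x$, and the split in $\SWk$---is accurate and more thorough than what the paper provides.
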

\begin{proof}
	By induction on the definition of processes.
\end{proof}

\subsection{Typing Preservation proof}
\label{app:type_preservation}

Now we are ready to prove the typing preservation property of our system.

\begin{proof}[Proof of Theorem~\ref{thm:type-preservation}]
    By induction on the depth of derivation of $N \reduces N'$.

	\begin{itemize}
		\item	Rule \Connect

			\noindent
			Assume
			\[
				\begin{array}{l}
					\tree{
						s \text{ fresh}
					}{
						\net{\request{a}{\aggr x} P \pll B} \,\pll\,
						\Par{i}{I}{\net{\accept{a}{x} Q_i \pll B_i}}
						\\
						\qquad
						\reduces 
						\newnp{s}{\net{P \subst{\aggr s}{\aggr x} \pll B \pll \squeue{\aggr s}{0}{\equeue}}
						\npll
						\Par{i}{I}{\net{Q_i \subst{s}{x} \pll B_i \pll \squeue{s}{0}{\equeue}}}}
					}{}
				\end{array}
			\]
			and
			\[
				\Gamma; \Delta \types \net{\request{a}{\aggr x} P \pll B} \npll \Par{i}{I}{\net{\accept{a}{x} Q_i \pll B_i}}
			\]
			with $\Delta$ well-formed.
			From the latter typing judgement we get the derivation
			\[
				\tree {
					\Gamma; \Delta_0, \Delta_1 \types \net{\request{a}{\aggr x} P \pll B}
					\\
					\Gamma; \Delta_0, \Delta_2 \types \Par{i}{I}{\net{\accept{a}{x} Q_i \pll B_i}}
					\\
					\text{$\Delta_0$ only $s$-endpoints}
					\andalso
					\Delta = \Delta_0, \Delta_1, \Delta_2 
				}{
					\Gamma; \Delta \types \net{\request{a}{\aggr x} P \pll B} \npll \Par{i}{I}{\net{\accept{a}{x} Q_i \pll B_i}}
				}{\TPar}
			\]
			We continue the above derivation for node
			$\net{\request{a}{\aggr x} P \pll B}$
			\[
				\arraycolsep=3pt
				\btree {
					\begin{array}{ll}
						\btree {
							\tree {
								\Gamma; \Delta_1'', \aggr x: \tdual{T} \types P
								\andalso
								\Gamma \types a: T
							}{
								\Gamma; \Delta_1'' \types \request{a}{\aggr x} P
							}{\TReq}
							\\
							\Gamma; \Theta \types B
							\andalso
							\Delta_1' = \Delta_1'' \sessionop \Theta
						}{
							\Gamma; \Delta_1' \types \net{\request{a}{s} P \pll B}
						}{\TNode}
					&
						\synchronise{\Delta_1'}{\Delta_0, \Delta_1}
					\end{array}
					\\[12mm]
				}{
					\Gamma; \Delta_0, \Delta_1 \types \net{\request{a}{s} P \pll B}				
				}{\TSynch}
			\]
			For every node $\net{\accept{a}{x} Q_i \pll B_i}$
			we can derive
			\[
				\btree {
					\begin{array}{ll}
						\btree {
							\tree {
								\Gamma; \Delta_i'', x: T \types Q_i
								\andalso
								\Gamma \types a: T
							}{
								\Gamma; \Delta_i'' \types \accept{a}{x} Q_i
							}{\TAcc}
							\\
							\Gamma; \Theta_i \types B_i
							\andalso
							\Delta_i' = \Delta_i'' \sessionop \Theta_i
						}{
							\Gamma; \Delta_i' \types \net{\accept{a}{x} Q_i \pll B_i}
						}{\TNode}
						&
						\synchronise{\Delta_i'}{\Delta_0, \Delta_i}
					\end{array}
					\\[12mm]
				}{
						\Gamma; \Delta_0, \Delta_i \types \net{\accept{a}{x} Q_i \pll B_i}
				}{\TSynch}
			\]
			and also with multiple applications of the \TPar rule we get
			\[
				\tree {
					\set{\Gamma; \Delta_0, \Delta_i \types \net{\accept{a}{x} Q_i \pll B_i}}_{i \in I}
					\andalso
					\Delta_2 = \bigcup_{i \in I} \Delta_i
				}{
					\Gamma; \Delta_0, \Delta_2 \types \Par{i}{I}{\net{\accept{a}{x} Q_i \pll B_i}}
				}{\TPar}
			\]
			We can then combine the information from the
			latter transition to derive the type for the
			result of the reduction
			$\newnp{s}{\net{P \subst{\aggr s}{\aggr x} \pll B \pll \squeue{\aggr s}{0}{\equeue}}
			\npll
			\Par{i}{I}{\net{Q_i \subst{s}{x} \pll B_i \pll \squeue{s}{0}{\equeue}}}}$.
			We first type network node
			$\net{P \subst{\aggr s}{\aggr x} \pll B \pll \squeue{\aggr s}{0}{\equeue}}$:
			\[
				\btree {
					\btree {
						\btree {
							\Gamma; \aggr s: \ctype{0}{\tempty} \types \squeue{\aggr s}{0}{\equeue}
							\quad \SEmp
							\\
							\Gamma; \Theta \types B \text{ (from former derivation)}
						}{
							\Gamma; \Theta, \aggr s: \ctype{0}{\tempty} \types B \pll \squeue{\aggr s}{\equeue}{\equeue}
						}{\BPar}
						\\[9mm]
						\begin{array}{l}
							\text{Substitution Lemma~\ref{lem:substitution} and former derivation}
							\\
							\Gamma; \Delta_1'', \aggr x: \tdual T \types P \text{ imply }
							\Gamma; \Delta_1'', \aggr s: \tdual T \types P \subst{\aggr s}{\aggr x}
						\end{array}
						\\[4mm]
						\Delta_1'', \aggr s: \tdual T \sessionop \Theta, \aggr s: \ctype{0}{\tempty} = \Delta_1', \aggr s: \ctype{0}{\tdual T}
						\\[1mm]
					}{
						\Gamma; \Delta_1', \aggr s: \ctype{0}{\tdual T} \types \net{P \subst{\aggr s}{\aggr x} \pll B \pll \squeue{\aggr s}{0}{\equeue}}
					}{\TNode}
					\\[22mm]
					\begin{array}{l}
						\text{Linear context synchronisation (Definition~\ref{def:synchr}) and former result }
						\\
						\synchronise{\Delta_1'}{\Delta_0, \Delta_1} \text{ imply }
						\synchronise{\Delta_1', s: 0, \aggr s: \tdual T}{\Delta_0, \Delta_1, \aggr s: \ctype{0}{\tdual T}}
					\end{array}
				}{
					\Gamma; \Delta_0, \Delta_1, \aggr s: \ctype{0}{\tdual T} \types \net{P \subst{\aggr s}{\aggr x} \pll B \pll \squeue{\aggr s}{0}{\equeue}}
				}{\TSynch}
			\]
			Similarly we type every network node
			$\net{Q_i \subst{s}{x} \pll B_i \pll \squeue{s}{0}{\equeue}}$
			\[
				\btree[] {
					\btree {
						\btree {
							\Gamma; s: \ctype{0}{\tempty} \types \squeue{s}{0}{\equeue}
							\quad \SEmp
							\\
							\Gamma; \Theta \types B_i \text{ (from former derivation)}
						}{
							\Gamma; \Theta, s: \ctype{0}{\tempty} \types B_i \pll \squeue{s}{\equeue}{\equeue}
						}{\BPar}
						\\[9mm]
						\begin{array}{l}
							\text{Substitution Lemma~\ref{lem:substitution} and former derivation}
							\\
							\Gamma; \Delta_i'', x: T \types Q_i \text{ imply }
							\Gamma; \Delta_i'', s: T \types Q_i \subst{s}{x}
						\end{array}
						\\[4mm]
						\Delta_i'', s: T \sessionop \Theta, s: \ctype{0}{\tempty} = \Delta_i' s: \ctype{0}{T}
						\\[1mm]
					}{
						\Gamma; \Delta_i', s: \ctype{0}{T} \types \net{Q_i \subst{s}{x} \pll B_i \pll \squeue{s}{0}{\equeue}}
					}{\TNode}
					\\[22mm]
					\begin{array}{l}
						\text{Linear context synchronisation (Definition~\ref{def:synchr}) and former result }
						\\
						\synchronise{\Delta_i'}{\Delta_0, \Delta_1} \text{ imply }
						\synchronise{\Delta_i', s: \ctype{0}{T}}{\Delta_0, \Delta_i, s: \ctype{0}{T}}
					\end{array}
				}{
					\Gamma; \Delta_0, \Delta_i, s: \ctype{0}{T} \types \net{Q_i \subst{s}{x} \pll B_i \pll \squeue{s}{0}{\equeue}}
				}{\TSynch}
			\]
			We can now have a multiple application of the \TPar rule to
			type network

			\noindent
			$\Par{i}{I}{\net{Q_i \subst{s}{x} \pll B_i \pll \squeue{s}{0}{\equeue}}}$
			\[
				\tree {
					\Gamma; \Delta_0, \Delta_i, s: \ctype{0}{T} \types \net{Q_i \subst{s}{x} \pll B_i \pll \squeue{s}{0}{\equeue}}
					\\
					\text{ from former result }
					\Delta_2 = \bigcup_{i \in I} \Delta_i				
				}{
					\Gamma; \Delta_0, \Delta_2, s: \ctype{0}{T} \types \Par{i}{I}{\net{Q_i \subst{s}{x} \pll B_i \pll \squeue{s}{0}{\equeue}}}
				}{\TPar}
			\]
			We then use typing rule \TPar followed by rule \TSRes to type the result of the reduction
			$\newnp{s}{\net{P \subst{\aggr s}{\aggr x} \pll B \pll \squeue{\aggr s}{0}{\equeue}}
			\npll
			\Par{i}{I}{\net{Q_i \subst{s}{x} \pll B_i \pll \squeue{s}{0}{\equeue}}}}$:
			\[
				\btree[\TSRes] {
					\btree[\TPar] {
						\Gamma; \Delta_0, \Delta_1, \aggr s: \ctype{0}{\tdual T} \types \net{P \subst{\aggr s}{\aggr x} \pll B \pll \squeue{\aggr s}{0}{\equeue}}
						\\
						\Gamma; \Delta_0, \Delta_2, s: \ctype{0}{T} \types \Par{i}{I}{\net{Q_i \subst{s}{x} \pll B_i \pll \squeue{s}{0}{\equeue}}}
					}{
							\Gamma; \Delta, \aggr s: \ctype{0}{\tdual T}, s: \ctype{0}{T} \types \net{P \subst{\aggr s}{\aggr x} \pll B \pll \squeue{\aggr s}{0}{\equeue}}
							\npll
							\Par{i}{I}{\net{Q_i \subst{s}{x} \pll B_i \pll \squeue{s}{0}{\equeue}}}
					}{}
					\\[12mm]
				}{
					\Gamma; \Delta \types \newnp{s}{\net{P \subst{\aggr s}{\aggr x} \pll B \pll \squeue{\aggr s}{0}{\equeue}}
					\npll
					\Par{i}{I}{\net{Q_i \subst{s}{x} \pll B_i \pll \squeue{s}{0}{\equeue}}}}
				}{}
			\]
			The last result together with the trivial result
			$\Delta \tadvance \Delta$ concludes the case.
%

		\item	Rule \Broadcast

			\noindent
			Assume
			\[
					\net{\pout{\aggr s}{e} P \pll B \pll \squeue{\aggr s}{c}{\equeue}} \npll
					\Par{i}{I}{\net{P_i \pll B_i \pll \squeue{s}{c}{\pol{m}_i}}}
					\reduces
					\net{P \pll B \pll \squeue{s}{c+1}{\equeue}} \npll
					\Par{i}{I}{\net{P_i \pll B_i \pll \squeue{s}{c+1}{\pol{m}_i \cat e} } }
			\]
			and
			\[
				\Gamma; \Delta \types	\net{\pout{\aggr s}{e} P \pll B \pll \squeue{\aggr s}{c}{\equeue}}
							\npll
							\Par{i}{I}{\net{P_i \pll B_i \pll \squeue{s}{c}{\pol{m}_i}}}
			\]
			with $\Delta$ well-formed.
			We proceed with the typing derivation of the latter typing judgement.
			\[
				\tree {
					\begin{array}{l}
						\text{well-formed $\Delta$ implies }
						\Delta = \Delta_0, \Delta_1, \aggr s: \ctype{c}{\tout{\beta} \tdual{T}}, \Delta_2, s: \ctype{c}{\tin{\beta} T}
						\\
						\text{with $\Delta_0$ only $s$-endpoints}
					\end{array}
					\\
					\Gamma; \Delta_0, \Delta_1, \aggr s: \ctype{c}{\tout{\beta} \tdual{T}} \types \net{\pout{\aggr s}{e} P \pll B \pll \squeue{\aggr s}{c}{\equeue}}
					\\
					\Gamma; \Delta_0, \Delta_2, s: \tin{\beta} T \types \Par{i}{I}{\net{P_i \pll B_i \pll \squeue{s}{c}{\pol{m}_i}}}
				}{
					\Gamma; \Delta \types	\net{\pout{\aggr s}{e} P \pll B \pll \squeue{\aggr s}{c}{\equeue}}
								\npll
								\Par{i}{I}{\net{P_i \pll B_i \pll \squeue{s}{c}{\pol{m}_i}}}
				}{\TPar}
			\]

			We continue the above derivation for network node
			$\net{\pout{\aggr s}{e} P \pll B \pll \squeue{\aggr s}{c}{\equeue}}$:
			\[
				\btree[] {
					\btree {
						\btree {
							\Gamma; \aggr s: \ctype{c}{\tempty} \types \squeue{\aggr s}{c}{\equeue}
							\quad \SEmp
							\andalso
							\Gamma; \Theta \types B
						}{
							\Gamma; \Theta, \aggr s: \ctype{c}{\tempty} \types B \pll \squeue{\aggr s}{c}{\equeue}
						}{\BPar}
						\\[6mm]
						\tree {
							\Gamma; \Delta_1'', \aggr s: \tdual{T} \types P
							\andalso
							\Gamma \types e: \beta
						}{
							\Gamma; \Delta_1'', \aggr s: \tout{\beta} \tdual{T} \types \pout{\aggr s}{e} P
						}{\TSnd}
						\\[4mm]
						\Delta_1', \aggr s: \ctype{c}{\tout{\beta} \tdual{T}} = \Delta_1'', \aggr s: \tout{\beta} \tdual{T} \sessionop \Theta, \aggr s: \ctype{c}{\tempty}
						\\[1mm]
					}{
						\Gamma; \Delta_1', \aggr s: \ctype{c}{\tout{\beta} \tdual{T}} \types \net{\pout{\aggr s}{e} P \pll B \pll \squeue{\aggr s}{c}{\equeue}}
					}{\TNode}
					\\[20mm]
					\synchronise{\Delta_1', \aggr s: \ctype{c}{\tout{\beta} \tdual{T}}}{\Delta_0, \Delta_1, \aggr s: \ctype{c}{\tout{\beta} \tdual{T}}}
					\\[1mm]
				}{
					\Gamma; \Delta_0, \Delta_1, \aggr s: \ctype{c}{\tout{\beta} \tdual{T}} \types \net{\pout{\aggr s}{e} P \pll B \pll \squeue{\aggr s}{c}{\equeue}}				
				}{\TSynch}
			\]
			We then derive the typing for network nodes
			$\net{P_i \pll B_i \pll \squeue{s}{c}{\pol{m}_i}}$
			\[
				\btree[] {
					\btree[\TNode] {
						\btree {
							\Gamma; s: \ctype{c}{M} \types \squeue{s}{c}{\pol{m}_i}
							\quad \SExp
							\andalso
							\Gamma; \Theta_i \types B_i
						}{
							\Gamma; \Theta_i, s: \ctype{c}{M} \types B_i \pll \squeue{s}{c}{\pol{m}_i}
						}{\BPar}
						\\[6mm]
						\Gamma; \Delta_i'', s: T' \types P_i
						\andalso
						\Delta_i', s: \ctype{c}{\tin{\beta} T} = \Delta_1'', s: T' \sessionop \Theta, s: \ctype{c}{M}
						\\[1mm]
					}{
						\Gamma; \Delta_i', s: \ctype{c}{\tin{\beta} T} \types \net{P_i \pll B_i \pll \squeue{s}{c}{\pol{m}_i}}
					}{}
					\\[16mm]
					\synchronise{\Delta_i', s: \ctype{c}{\tin{\beta} T}}{\Delta_0, \Delta_i, s: \ctype{c}{\tin{\beta} T}}
					\\[1mm]
				}{
					\Gamma; \Delta_0, \Delta_i, s: \ctype{c}{\tin{\beta} T} \types \net{P_i \pll B_i \pll \squeue{s}{c}{\pol{m}_i}}
				}{\TSynch}
			\]
			We can now have multiple applications of rule \TPar to get:
			\[
				\tree {
					\Gamma; \Delta_0, \Delta_i, s: \ctype{c}{\tin{\beta} T} \types \net{P_i \pll B_i \pll \squeue{s}{c}{\pol{m}_i}}
					\andalso
					\Delta_2 = \bigcup_{i \in I} \Delta_i
				}{
					\Gamma; \Delta_0, \Delta_2, s: \ctype{c}{\tin{\beta} T} \types \Par{i}{I}{\net{P_i \pll B_i \pll \squeue{s}{c}{\pol{m}_i}}}
				}{\TPar}
			\]
			Using the above information we can type the result
			of the reduction,
			$\net{P \pll B \pll \squeue{s}{c+1}{\equeue}} \npll
			\Par{i}{I}{\net{P_i \pll B_i \pll \squeue{s}{c+1}{\pol{m}_i \cat e} } }$.
			We begin with the typing derivation for network node
			$\net{P \pll B \pll \squeue{s}{c+1}{\equeue}}$:
			\[
				\btree[\TSynch] {
					\btree {
						\btree {
							\Gamma; \aggr s: \ctype{c+1}{\tempty} \types \squeue{\aggr s}{c+1}{\equeue}
							\quad \SEmp
							\andalso
							\Gamma; \Theta \types B
						}{
							\Gamma; \Theta, \aggr s: \ctype{c+1}{\tempty} \types B \pll \squeue{\aggr s}{c+1}{\equeue}
						}{\BPar}
						\\[5mm]
						\text{from former derivation }
						\Gamma; \Delta_1'', \aggr s: \tdual{T} \types P
						\\[1mm]
						\Delta_1', \aggr s: \ctype{c+1}{\tdual T} = \Delta_1'', \aggr s: \tdual{T} \sessionop \Theta, \aggr s: \ctype{c+1}{\tempty}
						\\[1mm]
					}{
						\Gamma; \Delta_1', \aggr s: \ctype{c+1}{\tdual T} \types \net{P \pll B \pll \squeue{\aggr s}{c}{\equeue}}
					}{\TNode}
					\\[18mm]
					\begin{array}{l}
						\text{Linear context synchronisation (Definition~\ref{def:synchr}) and former result}
						\\
						\synchronise{\Delta_1', \aggr s: \ctype{c}{\tout{\beta} \tdual T}}{\Delta_0, \Delta_1, \aggr s: \ctype{c}{\tout{\beta} \tdual T}}
						\text{ implies}
						\\
						\synchronise{\Delta_1', \aggr s: \ctype{c+1}{\tdual T}}{\Delta_0, \Delta_1, \aggr s: \ctype{c+1}{\tdual T}}
					\end{array}
					\\[7mm]
				}{
					\Gamma; \Delta_0, \Delta_1, \aggr s: \ctype{c+1}{\tdual T} \types \net{P \pll B \pll \squeue{\aggr s}{c+1}{\equeue}}
				}{}
			\]
			We derive the typing judgement for network node
			$\Par{i}{I}{\net{P_i \pll B_i \pll \squeue{s}{c+1}{\pol{m}_i \cat e} } }$:
			\[
				\btree[\TSynch] {
					\btree {
						\btree[\BPar] {
							\tree {
								\begin{array}{l}
									\text{from former judgement }
									\\
									\Gamma; s: \ctype{c}{M} \types \squeue{s}{c}{\pol{m}_i \cat e}
								\end{array}
							}{
								\Gamma; s: \ctype{c+1}{M. \mtout{\beta}} \types \squeue{s}{c+1}{\pol{m}_i \cat e}
							}{\SExp}
							\andalso
							\Gamma; \Theta_i \types B_i
						}{
							\Gamma; \Theta_i, s: \ctype{c+1}{M. \mtout{\beta}} \types B_i \pll \squeue{s}{c+1}{\pol{m}_i \cat e}
						}{}
						\\[10mm]
						\Gamma; \Delta_i'', s: T' \types P_i
						\\
						\begin{array}{l}
							\text{Operator $\sessionop$ (Definition~\ref{def:combine_context}) and former result}
							\\
							\Delta_i', s: \ctype{c}{\tin{\beta} T} = \Delta_1'', s: T' \sessionop \Theta, s: \ctype{c}{M}
							\text{ imply}
							\\
							\Delta_i', s: \ctype{c+1}{T} = \Delta_1'', s: T' \sessionop \Theta, s: \ctype{c+1}{M. \mtout{\beta}}
						\end{array}
					}{
						\Gamma; \Delta_i', s: \ctype{c+1}{T} \types \net{P_i \pll B_i \pll \squeue{s}{c+1}{\pol{m}_i \cat e}}
					}{\TNode}
					\\[20mm]
					\synchronise{\Delta_i', s: \ctype{c+1}{T}}{\Delta_0, \Delta_i, s: \ctype{c+1}{T}}
				}{
					\Gamma; \Delta_0, \Delta_i, s: \ctype{c+1}{T} \types \net{P_i \pll B_i \pll \squeue{s}{c+1}{\pol{m}_i \cat e}}
				}{}
			\]
			We can now have multiple applications of rule \TPar to get:
			\[
				\tree[] {
					\Gamma; \Delta_0, \Delta_i, s: \ctype{c+1}{T} \types \net{P_i \pll B_i \pll \squeue{s}{c+1}{\pol{m}_i \cat e}}
					\andalso
					\Delta_2 = \bigcup_{i \in I} \Delta_i
				}{
					\Gamma; \Delta_0, \Delta_2, s: \ctype{c+1}{T} \types \Par{i}{I}{\net{P_i \pll B_i \pll \squeue{s}{c+1}{\pol{m}_i \cat e}}}
				}{\TPar}
			\]
			The final rule needed to be applied is \TPar to get:
			\[
			\arraycolsep=2pt
			\tree[] {
				\Gamma; \Delta_0, \Delta_1, \aggr s: \ctype{c+1}{\tdual T} \types \net{P \pll B \pll \squeue{\aggr s}{c+1}{\equeue}}
				\\
				\Gamma; \Delta_0, \Delta_2, s: \ctype{c+1}{T} \types \Par{i}{I}{\net{P_i \pll B_i \pll \squeue{s}{c+1}{\pol{m}_i \cat e}}}
			}{
				\begin{array}{lcl}
					\Gamma; \Delta_0, \Delta_1, \aggr s: \ctype{c+1}{\tdual T}, \Delta_2, s: \ctype{c+1}{T} \types
					&&
					\net{P \pll B \pll \squeue{\aggr s}{c+1}{\equeue}}
					\\&\npll&
					\Par{i}{I}{\net{P_i \pll B_i \pll \squeue{s}{c+1}{\pol{m}_i \cat e}}}
				\end{array}
			}{\TPar}
			\]
			The case concludes because
			$\Delta = \Delta_0, \Delta_1, \aggr s: \ctype{c}{\tout{\beta} \tdual{T}}, \Delta_2, s: \tin{\beta} T$
			and
			$\Delta \tadvance \Delta_0, \Delta_1, \aggr s: \ctype{c+1}{\tdual T}, \Delta_2, s: \ctype{c+1}{T}$
			as required.


		\item	Rule \Unicast

			\noindent
			Assume
			\[
			\tree[\Unicast] {
				c_1 \geq c_2
			}{
				\net{\pout{s}{e} P_1 \pll B_1 \pll \squeue{s}{c_1}{\equeue}} \npll
				\net{P_2 \pll B_2 \pll \squeue{\aggr s}{c_2}{\pol{h}} }
				\\
				\qquad\qquad
				\reduces
				\net{P_1 \pll B_1 \pll \squeue{s}{c_1 + 1}{\equeue} } \npll
				\net{P_2 \pll \squeue{\aggr s}{c_2}{\pol{h} \cat (c_1, e)} }
			}{}
			\]
			and
			\[
				\Gamma; \Delta \types	\net{\pout{s}{e} P_1 \pll B_1 \pll \squeue{s}{c_1}{\equeue}} \npll
							\net{P_2 \pll B_2 \pll \squeue{\aggr s}{c_2}{\pol{h}} }
			\]
			We provide the derivation for the latter typing judgement.
			\[
				\btree {
					\begin{array}{l}
						\text{well-formed $\Delta$ implies }
						\Delta = \Delta_0, \Delta_1, s: \ctype{c_3}{T}, \Delta_2, \aggr s: \ctype{c_3}{\tdual T}
						\\
						\text{with $\Delta_0$ only $s$-endpoints and}
					\end{array}
					\\
					\Gamma; \Delta_0, \Delta_1, s: \ctype{c_3}{T} \types \net{\pout{s}{e} P_1 \pll B_1 \pll \squeue{s}{c_1}{\equeue}}
					\\
					\Gamma; \Delta_0, \Delta_0, \aggr s: \ctype{c_3}{\tdual T} \types \net{P_2 \pll B_2 \pll \squeue{\aggr s}{c_2}{\pol{h}} }
				}{
					\Gamma; \Delta \types	\net{\pout{s}{e} P_1 \pll B_1 \pll \squeue{s}{c_1}{\equeue}} \npll
								\net{P_2 \pll B_2 \pll \squeue{\aggr s}{c_2}{\pol{h}} }
				}{\TPar}
			\]
			We continue with the typing derivation for network node:
			$\net{\pout{s}{e} P_1 \pll B_1 \pll \squeue{s}{c_1}{\equeue}}$
			\[
				\btree[] {
					\btree {
						\btree {
							\Gamma; s: \ctype{c_1}{\tempty} \types \squeue{s}{c_1}{\equeue}
							\  \SEmp
							\andalso
							\Gamma; \Theta_1 \types B_1
						}{
							\Gamma; \Theta_1, s: \ctype{c_1}{\tempty} \types B_1 \pll \squeue{s}{c_1}{\equeue}
						}{\BPar}
						\\[6mm]
						\tree {
							\Gamma; \Delta_1'', s: T_1 \types P_1
							\andalso
							\Gamma \types e: \beta
						}{
							\Gamma; \Delta_1'', s: \tout{\beta} T_1 \types \pout{\aggr s}{e} P_1
						}{\TSnd}
						\\[4mm]
						\Delta_1', s: \ctype{c_1}{\tout{\beta} T_1} = \Delta_1'', s: \tout{\beta} T_1 \sessionop \Theta_1, s: \ctype{c_1}{\tempty}
						\\[1mm]
					}{
						\Gamma; \Delta_1', s: \ctype{c_1}{\tout{\beta} T_1} \types \net{\pout{s}{e} P_1 \pll B_1 \pll \squeue{s}{c_1}{\equeue}}
					}{\TNode}
					\\[20mm]
					\tree {
						\synchronise{\Delta_1'}{\Delta_0, \Delta_1}
						\andalso
						c_3 \geq c_1
						\andalso
						\tout{\beta} T_1 \tadvance^{c_3 - c_1} T
					}{
						\synchronise{\Delta_1', s: \ctype{c_1}{\tout{\beta} T_1}}{\Delta_0, \Delta_1, s: \ctype{c_3}{T}}
					}{}
					\\[4mm]
				}{
					\Gamma; \Delta_0, \Delta_1, s: \ctype{c_3}{T} \types \net{\pout{s}{e} P_1 \pll B_1 \pll \squeue{s}{c_1}{\equeue}}
				}{\TSynch}
			\]
			We also continue the derivation of network node
			$\net{P_2 \pll B_2 \pll \squeue{\aggr s}{c_2}{\pol{h}} }$:
			\[
				\btree[] {
					\btree[] {
						\btree {
							\tree{
								\dots
							}{
								\Gamma; \aggr s: \ctype{c_3}{M} \types \squeue{\aggr s}{c_2}{\pol{h}}
							}{\LExp}
							\\
							c_3 \geq c_1
							\andalso
							\Gamma; \Theta_2 \types B_2
						}{
							\Gamma; \Theta_2, \aggr s: \ctype{c_3}{M} \types B_2 \pll \squeue{\aggr s}{c_2}{\pol{h}}
						}{\BPar}
						\\[12mm]
						\Gamma; \Delta_2'', \aggr s: T_2 \types P_2
						\andalso
						\Delta_2', \aggr s: \ctype{c_3}{\tdual T} = \Delta_2'', \aggr s: T_2 \sessionop \Theta_2, \aggr s: \ctype{c_3}{M}
					}{
						\Delta_2', \aggr s: \ctype{c_3}{\tdual T} \types \net{P_2 \pll B_2 \pll \squeue{\aggr s}{c_2}{\pol{h}} }
					}{\TNode}
					\\[22mm]
					\synchronise{\Delta_2', \aggr s: \ctype{c_3}{\tdual T}}{\Delta_0, \Delta_2, \aggr s: \ctype{c_3}{\tdual T}}
					\\[1mm]
				}{
					\Delta_0, \Delta_2, \aggr s: \ctype{c_3}{\tdual T} \types \net{P_2 \pll B_2 \pll \squeue{\aggr s}{c_2}{\pol{h}} }
				}{\TSynch}
			\]
			Using the above information we can type the result
			of the reduction,
			$\net{P_1 \pll B_1 \pll \squeue{s}{c_1 + 1}{\equeue} } \npll
			\net{P_2 \pll \squeue{\aggr s}{c_2}{\pol{h} \cat (c_1, e)}}$.
			We begin with the typing derivation for network node
			$\net{P_1 \pll B_1 \pll \squeue{s}{c_1 + 1}{\equeue} }$:
			\[
				\btree[] {
					\btree[\TNode] {
						\btree[] {
							\Gamma; s: \ctype{c_1+1}{\tempty} \types \squeue{s}{c_1 + 1}{\equeue}
							\ \SEmp
							\andalso
							\Gamma; \Theta_1 \types B_1
						}{
							\Gamma; \Theta_1, s: \ctype{c_1 + 1}{\tempty} \types B_1 \pll \squeue{s}{c_1 + 1}{\equeue}
						}{\BPar}
						\\[6mm]
						\Gamma; \Delta_1'', s: T_1 \types P_1
						\andalso
						\Delta_1', s: \ctype{c_1+1}{T_1} = \Delta_1'', s: T \sessionop \Theta_1, s: \ctype{c_1 + 1}{\tempty}
						\\[1mm]
					}{
						\Gamma; \Delta_1', s: \ctype{ c_1 + 1}{T_1} \types \net{P_1 \pll B_1 \pll \squeue{s}{c_1 + 1}{\equeue}}
					}{}
					\\[17mm]
					\begin{array}{l}
						\text{results }
						\synchronise{\Delta_1', s: \ctype{c_1}{\tout{\beta} T_1}}{\Delta_0, \Delta_1, s: \ctype{c_3}{T}}
						\text{ and $c_3 \geq c_1$ imply}
						\\
						\tree {
							\synchronise{\Delta_1'}{\Delta_0, \Delta_1}
							\andalso
							c_4 = c_3 \vee c_4 = c_3 + 1
							\andalso
							T_1 \tadvance^{c_4 - c_1 + 1} T'
						}{
							\synchronise{\Delta_1', s: \ctype{c_1 + 1}{T_1}}{\Delta_0, \Delta_1, s: \ctype{c_4}{T'}}
						}{}
					\end{array}
					\\[7mm]
				}{
					\Gamma; \Delta_0, \Delta_1, s: \ctype{c_4}{T'} \types \net{P_1 \pll B_1 \pll \squeue{s}{c_1 + 1}{\equeue}}
				}{\TSynch}
			\]
			We continue with the typing derivation for network node
			$\net{P_2 \pll \squeue{\aggr s}{c_2}{\pol{h} \cat (c_1, e)}}$:
			\[
				\btree[] {
					\btree[\TNode] {
						\btree[\BPar] {
							\tree{
								\Gamma;\aggr s: \ctype{c_3}{M} \types \squeue{\aggr s}{c_2}{\pol{h}}
								\\
								c_4 = c_3 \vee c_4 = c_3 + 1
							}{
								\Gamma; \aggr s: \ctype{c_4}{M.\mtout{\beta}} \types \squeue{\aggr s}{c_2}{\pol{h} \cat (c_1, e)}
							}{\LExp}
							\andalso
							\Gamma; \Theta_2 \types B_2
						}{
							\Gamma; \Theta_2, \aggr s: \ctype{c_4}{M.\mtout{\beta}} \types B_2 \pll \squeue{\aggr s}{c_2}{\pol{h}\cat (c_1, e)}
						}{}
						\\[14mm]
						\Gamma; \Delta_2'', \aggr s: T_2 \types P_2
						\andalso
						\Delta_2', \aggr s: \ctype{c_4}{\tdual T'} = \Delta_2'', \aggr s: T_2 \sessionop \Theta_2, \aggr s: \ctype{c_4}{M. \mtout{\beta}}
						\\[1mm]
					}{
						\Delta_2', \aggr s: \ctype{c_4}{\tdual T'} \types \net{P_2 \pll B_2 \pll \squeue{\aggr s}{c_2}{\pol{h}} }
					}{}
					\\[26mm]
					\synchronise{\Delta_2', \aggr s: \ctype{c_4}{\tdual T'}}{\Delta_0, \Delta_2, \aggr s: \ctype{c_4}{\tdual T'}}
					\\[1mm]
				}{
					\Delta_0, \Delta_2, \aggr s: \ctype{c_4}{\tdual T'} \types \net{P_2 \pll B_2 \pll \squeue{\aggr s}{c_2}{\pol{h}} }
				}{\TSynch}
			\]
			We apply rule \TPar to get
			\[			
				\tree {
					\Gamma; \Delta_0, \Delta_1, s: \ctype{c_4}{T'} \types \net{P_1 \pll B_1 \pll \squeue{s}{c_1 + 1}{\equeue}}
					\\
					\Delta_0, \Delta_2, \aggr s: \ctype{c_4}{\tdual{T'}} \types \net{P_2 \pll B_2 \pll \squeue{\aggr s}{c_2}{\pol{h}} }
				}{
						\Gamma; \Delta_0, \Delta_1, s: \ctype{c_4}{T'}, \Delta_2, \aggr s: \ctype{c_4}{\tdual T'} \types
							\net{P_1 \pll B_1 \pll \squeue{s}{c_1 + 1}{\equeue}}
							\npll
							\net{P_2 \pll B_2 \pll \squeue{\aggr s}{c_2}{\pol{h}}}
				}{\TPar}
			\]
			From here there are two cases:
			If $c_4 = c_3$ then $T' = T$ and the result 
			follows from the fact that $\Delta_0, \Delta_1, s: \ctype{c_4}{T'}, \Delta_2, \aggr s: \ctype{c_4}{\tdual{T'}} = \Delta_0, \Delta_1, s: \ctype{c_3}{T}, \Delta_2, \aggr s: \ctype{c_3}{\tdual T}$.
			If $c_4 = c_3 + 1$ then $T \tadvance^{1} T'$
			and the results follows the fact that
			$\Delta_0, \Delta_1, s: \ctype{c_3}{T}, \Delta_2, \aggr s: \ctype{c_3}{\tdual T} \tadvance^{1} \Delta_0,s: \ctype{c_4}{T'}, \Delta_2, \aggr s: \ctype{c_4}{\tdual{T'}}$


		\item	Rule \Receive

			\noindent
			Assume
			\[
				\net{\pdin{s}{x}{e'} P \pll B \pll \squeue{s}{c}{e \cat \pol{m}}}
				\reduces
				\net{P \subst{e}{x} \pll B \pll \squeue{s}{c}{\pol{m}}}
			\]
			with
			\[
				\Gamma; \Delta \types \net{\pdin{s}{x}{e'} P \pll B \pll \squeue{s}{c}{e \cat \pol{m}}}
			\]
			and $\Delta$ well-formed.
			We give the derivation for the latter typing judgement
			\[
				\btree[] {
					\btree{
						\tree{
							\Gamma; s: \ctype{c}{M} \types B \pll \squeue{s}{c}{\pol{m}}
						}{
							\Gamma; s: \ctype{c}{\mtout{\beta}. M} \types B \pll \squeue{s}{c}{e \cat \pol{m}}
						}{\SExp}
						\andalso
						\Gamma; \Theta \types B
						\\[4mm]
					}{
						\Gamma; \Theta, s: \ctype{c}{\mtout{\beta}. M} \types B \pll \squeue{s}{c}{e \cat \pol{m}}
					}{\BPar}
					\\[10mm]
					\btree {
						\Gamma \types e': \beta
						\andalso
						\Gamma, x: \beta; \Delta', s: T' \types P
					}{
						\Gamma; \Delta', s: \tin{\beta} T' \types \pdin{s}{x}{e'} P
					}{\TRcv}
					\\[6mm]
					\begin{array}{rcl}
						\Delta = \Delta', s: \ctype{c}{T}	&=&	\Delta', s: \tin{\beta} T' \sessionop \Theta, \ctype{c}{\mtout{\beta}. M}
						\\					&=&	\Delta, s: T' \sessionop \Theta, s: \ctype{c}{M}
					\end{array}
					\\[4mm]
				}{
					\Gamma; \Delta \types \net{\pdin{s}{x}{e'} P \pll B \pll \squeue{s}{c}{e \cat \pol{m}}}
				}{\TNode}
			\]
			The above result allows us to produce the
			typing derivation for the result of the reduction,
			$\net{P \subst{e}{x} \pll B \pll \squeue{s}{c}{\pol{m}}}$:
			\[
				\btree {
					\btree{
						\Gamma; s: \ctype{c}{M} \types B \pll \squeue{s}{c}{\pol{m}}
						\andalso
						\Gamma; \Theta \types B
					}{
						\Gamma; \Theta, s: \ctype{c}{M} \types B \pll \squeue{s}{c}{\pol{m}}
					}{\BPar}
					\\[6mm]
					\Gamma; \Delta', s: T' \types P
					\andalso
					\Delta = \Delta', s: \ctype{c}{T} = \Delta, s: T' \sessionop \Theta, s: \ctype{c}{M}
					\\[1mm]
				}{
					\Gamma; \Delta \types \net{P \subst{e}{x} \pll B \pll \squeue{s}{c}{\pol{m}}}
				}{\TNode}
			\]
			as required.
		\item	Rule \Gather.

			\noindent
			Assume
			\[
				\tree {
					\pol{h}' = \newbuffer{\pol{h}}{c + 1}
					\andalso
					e = \gthr{\pol{h}}{c + 1}
				}{
					\net{\pin{\aggr s}{x} P \pll B \pll \squeue{\aggr s}{c}{ \pol{h} }}
					\reduces
					\net{P \subst{e}{x} \pll B \pll \squeue{\aggr s}{c + 1}{ \pol{h}' }}
				}{\Gather}
			\]
			with
			\[
				\Gamma; \Delta \types \net{\pin{\aggr s}{x} P \pll B \pll \squeue{\aggr s}{c}{ \pol{h} }}
			\]
			and $\Delta$ well-formed.
			We produce the derivation of the latter typing judgement:
			We first type the session buffer
			$\squeue{\aggr s}{c}{ \pol{h} }$:
			\[
				\begin{array}{ll}
					\left.
					\begin{array}{c}
					\btree {
						\btree[] {
							\Gamma; \aggr s: \ctype{c}{\tempty} \types \squeue{\aggr s}{c}{\equeue}
							\ \SEmp
							\\
							\equeue = \newbuffer{\pol{h}_1}{c + 1}
							\andalso
							\Gamma \types V(\pol{h}_1, c + 1) : \beta
						}{
							\Gamma; \aggr s: \ctype{c+1}{\mtout{\beta}} \types \squeue{\aggr s}{c}{\pol{h}_1}
						}{\LExp}
						\\
						\vdots
						\\
						\Gamma; \aggr s: \ctype{c'-1}{\mtout{\beta}. M} \types \squeue{\aggr s}{c}{ \pol{h}_{c' - c} }
						\\
						\pol{h} = \newbuffer{\pol{h}_{c' - c}}{c'}
						\andalso
						\Gamma \types V(\pol{h}, c') : \beta'
					}{
						\Gamma; \aggr s: \ctype{c'}{\mtout{\beta}. M. \mtout{\beta'}} \types \squeue{\aggr s}{c}{\pol{h}}
					}{\LExp}
					\end{array}
					\right\}
					&
					\begin{array}{l}
						c' - c 
						\\ \text{applications}
						\\ \text{of rule \LExp}
					\end{array}
				\end{array}
			\]
			We now produce the typing derivation for term
			$\net{\pin{\aggr s}{x} P \pll B \pll \squeue{\aggr s}{c}{ \pol{h} }}$:
			\[
				\btree{
					\btree{
						\Gamma; \aggr s: \ctype{c'}{\mtout{\beta}. M. \mtout{\beta'}} \types \squeue{\aggr s}{c}{\pol{h}}
						\andalso
						\Gamma; \Theta \types B
					}{
						\Gamma; \Theta, \aggr s: \ctype{c'}{\mtout{\beta}. M. \mtout{\beta'}} \types B \pll \squeue{\aggr s}{c}{ \pol{h} }
					}{\BPar}
					\\[6mm]
					\tree {
						\Gamma, x: \beta; \Delta', \aggr s: T' \types P
					}{
						\Gamma; \Delta', \aggr s: \tin{\beta} T' \types \pin{\aggr s}{x} P
					}{\TRcv}
					\\[4mm]
					\begin{array}{rcl}
						\Delta = \Delta', \aggr s: \ctype{c'}{T} 	&=&	\Delta', \aggr s: \tin{\beta} T' \sessionop \Theta, \aggr s: \ctype{c'}{\mtout{\beta}. M. \mtout{\beta'}}
						\\						&=&	\Delta', \aggr s: T' \sessionop \Theta, \aggr s: \ctype{c'}{M. \mtout{\beta'}}
					\end{array}
					\\[4mm]
				}{
					\Gamma; \Delta \types \net{\pin{\aggr s}{x} P \pll B \pll \squeue{\aggr s}{c}{ \pol{h} }}
				}{\TNode}
			\]
			We use the above information to produce the typing derivation
			for the result of the reduction.
			We first produce the typing derivation for session buffer
			$\squeue{\aggr s}{c + 1}{ \pol{h}' }$
			\[
				\begin{array}{ll}
					\left.
					\begin{array}{c}
					\btree {
						\btree[] {
							\Gamma; \aggr s: \ctype{c+1}{\tempty} \types \squeue{\aggr s}{c+1}{\equeue}
							\ \SEmp
							\\[4mm]
							\equeue = \newbuffer{\pol{h}_2'}{c + 2}
							\andalso
							\Gamma \types V(\pol{h}_2', c + 2) : \beta''
						}{
							\Gamma; \aggr s: \ctype{c+2}{\mtout{\beta''}} \types \squeue{\aggr s}{c+1}{\pol{h}_2'}
						}{\LExp}
						\\
						\vdots
						\\
						\Gamma; \aggr s \ctype{c'-1}{M} \types \squeue{\aggr s}{c + 1}{ \pol{h}_{c' - c + 1}' }
						\\
						\pol{h}' = \newbuffer{\pol{h}_{c' - c + 1}'}{c'}
						\andalso
						\Gamma \types V(\pol{h}', c') : \beta'
					}{
						\Gamma; \aggr s: \ctype{c'}{M. \mtout{\beta'}} \types \squeue{\aggr s}{c+1}{\pol{h}'}
					}{\LExp}
					\end{array}
					\right\}
					&
					\begin{array}{l}
						c' - c + 1
						\\ \text{applications}
						\\ \text{of rule \LExp}
					\end{array}
				\end{array}
			\]
			Finally we give the typing derivation for network node:
			$\net{P \subst{e}{x} \pll B \pll \squeue{\aggr s}{c + 1}{ \pol{h}' }}$
			\[
				\btree{
					\btree[\BPar] {
						\Gamma; \aggr s: \ctype{c'}{M. \mtout{\beta'}} \types \squeue{\aggr s}{c+1}{\pol{h}'}
						\andalso \andalso
						\Gamma; \Theta \types B
					}{
						\Gamma; \Theta, \aggr s: \ctype{c'}{M. \mtout{\beta'}} \types B \pll \squeue{\aggr s}{c'}{\pol{h}'}
					}{}
					\\[8mm]
					\begin{array}{l}
						\text{From Substitution Lemma~\ref{lem:substitution} }
						\\
						\Gamma, x: \beta; \Delta', \aggr s: T' \types P \text{ implies }
						\Gamma; \Delta', \aggr s: T' \types P \subst{e}{x}
					\end{array}
					\\[4mm]
					\Gamma; \econtext \types R
					\andalso
					\Delta = \Delta', \aggr s: \ctype{c'}{T} = \Delta', \aggr s: T' \sessionop \Theta, \aggr s: \ctype{c'}{M. \mtout{\beta'}} 
				}{
					\Gamma; \Delta \types \net{\pin{\aggr s}{x} P \pll B \pll \squeue{\aggr s}{c}{ \pol{h} }}
				}{\TNode}
			\]
			The result follows the fact that $\Delta \tadvance \Delta$.

		\item	Rule \Select. The proof is similar to the proof for rule \Broadcast.
		\item	Rule \Branch. The proof is similar to the proof for rule \Receive.

		\item	Rule \Loss

			\noindent
		        Assume
			\[
				\net{\pout{s}{e} P \pll B \pll \squeue{s}{c}{\equeue}}
				\reduces
				\net{P \pll B \pll \squeue{s}{c + 1}{\equeue}}
			\]
			with
			\[
				\Gamma; \Delta \types \net{\pout{s}{e} P \pll B \pll \squeue{s}{c}{\equeue}}
			\]
			and $\Delta$ well-formed.
			We provide the derivation of the latter typing judgement:
			\[
				\btree{
					\btree {
						\Gamma; s: \ctype{c}{\tempty} \types \squeue{s}{c}{\equeue}
						\ \SEmp
						\andalso
						\Gamma; \Theta \types B
					}{
						\Gamma; \Theta, s: \ctype{c}{\tempty} \types B \pll \squeue{s}{c}{\equeue}
					}{\BPar}
					\\[6mm]
					\btree {
						\Gamma; \Delta', s: T \types P
						\andalso
						\Gamma \types e: \beta
					}{
						\Gamma; \Delta', s: \tout{\beta} \types \pout{s}{e} P
					}{\TSnd}
					\\[6mm]
					\Delta = \Delta', s: \ctype{c}{\tout{\beta} T} = \Delta', s: \tout{\beta} T \sessionop \Theta, s: \ctype{c}{\tempty}
					\\[1mm]
				}{
					\Gamma; \Delta \types \net{\pout{s}{e} P \pll B \pll \squeue{s}{c}{\equeue}}
				}{\TNode}
			\]
			The above result allows us to produce the
			typing derivation for the result of the reduction,
			$\net{P \pll B \pll \squeue{s}{c + 1}{\equeue}}$
			\[
				\btree[\TSynch] {
					\btree {
						\btree[] {
							\Gamma; s: \ctype{c+1}{\tempty} \types \squeue{s}{c+1}{\equeue}
							\ \SEmp
							\andalso
							\Gamma; \Theta \types B
						}{
							\Gamma; \Theta, s: \ctype{c+1}{\tempty} \types B \pll \squeue{s}{c+1}{\equeue}
						}{\BPar}
						\\[6mm]
						\Gamma; \Delta', s: T \types P
						\andalso
						\Delta', s: \ctype{c+1}{T} = \Delta', s: T \sessionop \Theta, s: \ctype{c+1}{\tempty}
						\\[1mm]
					}{
						\Gamma; \Delta', s: \ctype{c+1}{T} \types \net{P \pll B \pll \squeue{s}{c + 1}{\equeue}}
					}{\TNode}
					\\[14mm]
					\tree{
						\synchronise{\Delta'}{\Delta'}
						\andalso
						\tout{\beta} T \tadvance_o^{c + 1 - c} T
					}{
						\synchronise{\Delta', s: \ctype{c+1}{T}}{\Delta', s: \ctype{c}{\tout{\beta} T}}
					}{}
					\andalso
					\Delta = \Delta', s: \ctype{c}{\tout{\beta} T}
					\\[4mm]
				}{
					\Gamma; \Delta \types \net{P \pll B \pll \squeue{s}{c + 1}{\equeue}}
				}{}
			\]
			as required.


		\item	Rule \Recover

			\noindent
			Assume
			\[
				\net{\pdin{s}{x}{e} P \pll B \pll \squeue{s}{c}{\equeue}}
				\reduces
				\net{P \subst{e}{x} \pll B \pll \squeue{s}{c+1}{\equeue}}
				\ \Recover
			\]
			with
			\[
				\Gamma; \Delta \types \net{\pdin{s}{x}{e} P \pll B \pll \squeue{s}{c}{\equeue}}
			\]
			and $\Delta$ well-formed.
			We give the derivation for the latter typing judgement
			\[
				\btree[] {
					\btree{
						\Gamma; s: \ctype{c}{\tempty} \types \squeue{s}{c}{\equeue}
						\ \SEmp
						\andalso
						\Gamma; \Theta \types B
					}{
						\Gamma; \Theta, s: \ctype{c}{\tempty} \types B \pll \squeue{s}{c}{\equeue}
					}{\BPar}
					\\[6mm]
					\btree {
						\Gamma \types e: \beta
						\andalso
						\Gamma, x: \beta; \Delta', s: T \types P
					}{
						\Gamma; \Delta', s: \tin{\beta} T \types \pdin{s}{x}{e} P
					}{\TRcv}
					\\[6mm]
					\Delta = \Delta', s: \ctype{c}{\tin{\beta} T}	=	\Delta', s: \tin{\beta} T \sessionop \Theta, \ctype{c}{\tempty}
					\\[1mm]
				}{
					\Gamma; \Delta \types \net{\pdin{s}{x}{e'} P \pll B \pll \squeue{s}{c}{e \cat \pol{m}}}
				}{\TNode}
			\]
			The above result allows us to produce the
			typing derivation for the result of the reduction,
			$\net{P \subst{e}{x} \pll B \pll \squeue{s}{c+1}{\equeue}}$
			\[
				\btree[\TSynch] {
					\btree {
						\btree{
							\Gamma; s: \ctype{c+1}{\tempty} \types \squeue{s}{c+1}{\tempty}
							\andalso
							\Gamma; \Theta \types B
						}{
							\Gamma; \Theta, s: \ctype{c+1}{\tempty} \types B \pll \squeue{s}{c}{\pol{m}}
						}{\BPar}
						\\[6mm]
						\Gamma; \Delta', s: T \types P \subst{e}{x}
						\andalso
						\Delta', s: \ctype{c+1}{T} = \Delta, s: T \sessionop \Theta, s: \ctype{c+1}{\tempty}
						\\[1mm]
					}{
						\Gamma; \Delta', s: \ctype{c+1}{T} \types \net{P \subst{e}{x} \pll B \pll \squeue{s}{c}{\pol{m}}}
					}{\TNode}
					\\[14mm]
					\synchronise{\Delta', s: \ctype{c+1}{T}}{\Delta', s: \ctype{c}{\tin{\beta} T}}
					\andalso
					\Delta = \Delta', s: \ctype{c}{\tin{\beta} T}
					\\[1mm]
				}{
					\Gamma; \Delta \types \net{P \subst{e}{x} \pll B \pll \squeue{s}{c}{\pol{m}}}
				}{}
			\]
			as required.

		\item	Rule \BRecover

			\noindent
			Assume
			\[
				\tree {
					\fs{B} = \fs{R}
				}{
					\net{\branchDef{s}{\ell_i: P_i}{R}_{i \in I} \pll B \pll B' \pll \squeue{s}{c}{\equeue}}
					\reduces
					\net{R \pll B}
				}{}
				\ \BRecover
			\]
			with
			\[
				\Gamma; \Delta \types \net{\branchDef{s}{\ell_i: P_i}{R}_{i \in I} \pll B \pll B' \pll \squeue{s}{c}{\equeue}}
			\]
			and $\Delta$ well-formed.
			The last result follows typing derivation:
			\[
				\btree {
					\btree{
						\Gamma; s: \ctype{c}{\tempty} \types \squeue{s}{c}{\equeue}
						\ \SEmp
						\\
						\Gamma; \Theta_1 \types B'
						\andalso
						\Gamma; \Theta_2 \types B
						\andalso
						\Theta = \Theta_1, \Theta_2
					}{
						\Gamma; \Theta, s: \ctype{c}{\tempty} \types B \pll B' \pll \squeue{s}{c}{\equeue}
					}{\BPar}
					\\[8mm]
					\btree {
						\forall i \in I, \Gamma; \Delta_1, \Delta_2, s: T_i \types P_i
						\\
						\text{only $s$-endpoints in } \Delta_1
						\andalso
						\Gamma; \Delta_2 \types R
						\andalso
						\Delta' = \Delta_1, \Delta_2
					}{
						\Gamma; \Delta', s: \tbranch{l_i}{T_i}_{i \in I} \types \branchDef{s}{\ell_i: P_i}{R}_{i \in I}
					}{\TBr}
					\\[8mm]
					\Delta = \Delta', s: \tbranch{l_i}{T_i}_{i \in I} \sessionop \Theta, s: \ctype{c}{\tempty}
					\\[1mm]
				}{
					\Gamma; \Delta \types \net{\branchDef{s}{\ell_i: P_i}{R}_{i \in I} \pll B \pll B' \pll \squeue{s}{c}{\equeue}}
				}{\TNode}
			\]
			We can now provide the typing derivation for the result of the
			reduction
			$\net{R \pll B}$
			\[
				\btree {
					\Gamma; \Delta_2 \types R
					\andalso
					\Gamma; \Theta_2 \types B
					\andalso
					\fs{R} = \fs{B} \text{ implies }
					\Delta'' = \Delta_2 \sessionop \Theta_2
				}{
					\Gamma; \Delta'' \types \net{R \pll B}
				}{\TNode}
			\]
			Because $\Delta_1$ includes only $s$-enpoints, we can use $\tadvance$ to drop them:
			\begin{eqnarray*}
				\Delta	&=&	\Delta_1, \Delta_2, s: \tbranch{l_i}{T_i}_{i \in I} \sessionop \Theta_1, \Theta_2, s: \ctype{c}{\tempty}
				\\	&=&	(\Delta_1 \sessionop \Theta_1), (\Delta_2 \sessionop \Theta_2), s: \ctype{c}{\tbranch{l_i}{T_i}_{i \in I}}
				\\	&\tadvance& (\Delta_2 \sessionop \Theta_2) 
				\\	&=& \Delta''
			\end{eqnarray*}
			as required

		\item	Rule \True

			\noindent
			Assume
			\[
				\tree {
					\condtrue \econd
					\andalso
					\fs{P_1} = \fs{B} 
				}{
					\net{\cond{\econd}{P_1}{P_2} \pll B \pll B'}
					\reduces
					\net{P_1 \pll B}
				}{\True}
			\]
			with
			\[
				\Gamma; \Delta \types \net{\cond{\econd}{P_1}{P_2} \pll B \pll B'}
			\]
			and
			$\Delta$ well-type.
			The derivation for the latter typing judgement is
			\[
				\btree[] {
					\btree {
						\Gamma; \Theta_1 \types B
						\andalso
						\Gamma; \Theta_2 \types B'
					}{
						\Gamma; \Theta_1, \Theta_2 \types B \pll B'
					}{\BPar}
					\\[6mm]
					\btree[\TCond] {
						\Gamma \types \econd: \bool
						\andalso
						\Gamma; \Delta_0, \Delta_1 \types P_1
						\andalso
						\Gamma; \Delta_0, \Delta_2 \types P_2
						\\
						i, j \in \set{1,2} \land i \not=j \implies \dom{\Delta_i} = \fs{P_i} - \fs{P_j}
					}{
						\Gamma; \Delta_0, \Delta_1, \Delta_2 \types \cond{\econd}{P_1}{P_2}
					}{}
					\\[10mm]
					\Gamma; \econtext \types R
					\andalso
					\Delta = \Delta_0, \Delta_1, \Delta_2 \sessionop \Theta_1, \Theta_2
				}{
					\Gamma; \Delta \types \net{\cond{\econd}{P_1}{P_2} \pll B \pll B'}
				}{\TNode}
			\]
			We use the above information to type the result of the reduction
			$\net{P_1 \pll B}$:
			\[
				\tree {
					\Gamma; \Delta_0, \Delta_1 \types P_1
					\andalso
					\Gamma; \econtext \types R
					\andalso
					\Gamma; \Theta_1 \types B
					\andalso
					\Delta' = \Delta_0, \Delta_1 \sessionop \Theta_1
				}{
					\Gamma; \Delta' \types \net{P_1 \pll B}
				}{\TNode}
			\]
			The reduction condition requires that
			$\fs{B} = \fs{P_1}$.
			From the last equation and the fact that
			$\fs{B} = \dom{\Theta_1}$ we have that 
			$\fs{P_1} = \dom{\Theta_1}$.
			Also $\fs{P_1} = \dom{\Delta_0, \Delta_1}$.
			From the latter two equalities
			we can conclude that
			$\Delta_0, \Delta_1 \sessionop \Theta_1$
			is defined and $\fs{P_1} = \dom{\Delta'}$.
			Moreover, $\fs{P_1} \cup \fs{P_2} = \Delta$.
			Therefore, $\Delta' \subseteq \Delta$
			as required.

		\item	Rule \False. The proof the same with the proof of case \True.


		\item	Rule \NDet.

			\noindent
			Assume
			\[
				\tree {
					\net{P_1 \pll B} \npll N
					\reduces
					\net{P' \pll B'} \npll N'
				}{
					\net{P_1 \Sum P_2 \pll B} \npll N
					\reduces
					\net{P' \pll B'} \npll N'
				}{\NonDet}
			\]
			with
			\[
				\Gamma; \Delta \types \net{P_1 \Sum P_2 \pll B} \npll N
			\]
			and $\Delta$ well-formed.

			We produce the derivation for the latter typing judgement.
			\[
			\btree {
				\btree {
					\btree{
						\btree[\TSum] {
							\Gamma; \Delta_1'' \types P_1
							\andalso
							\Gamma; \Delta_1'' \types P_2
						}{
							\Gamma; \Delta_1'' \types P_1 \Sum P_2
						}{}
						\\[3mm]
						\Gamma; \Theta \types B
						\andalso
						\Delta_1' = \Delta_1'' \sessionop \Theta
						\\[1mm]
					}{
						\Gamma; \Delta_1' \types \net{P_1 \Sum P_2 \pll B}
					}{\TNode}
					\andalso
					\synchronise{\Delta_1'}{\Delta_0, \Delta_1}
					\\[16mm]
				}{
					\Gamma; \Delta_0, \Delta_1 \types \net{P_1 \Sum P_2 \pll B}
				}{\TSynch}
				\\[20mm]
				\Gamma; \Delta_0, \Delta_2 \types N
				\andalso
				\Delta_0 \text{ only $s$-endpoints}
				\andalso
				\Delta = \Delta_0, \Delta_1, \Delta_2
				\\[1mm]
			}{
				\Gamma; \Delta \types \net{P_1 \Sum P_2 \pll B} \npll N
			}{\TPar}
			\]
			We can now produce the typing derivation for
			the left hand side network of the premise
			$\net{P_1 \pll B} \npll N$:
			\[
			\btree {
				\btree {
					\btree[\TNode] {
						\Gamma; \Delta_1'' \types P_1
						\andalso
						\Gamma; \Theta \types B
						\andalso
						\Delta_1' = \Delta_1'' \sessionop \Theta	
					}{
						\Gamma; \Delta_1' \types \net{P_1 \pll B}
					}{}
					\andalso
					\synchronise{\Delta_1'}{\Delta_0, \Delta_1}
					\\[8mm]
				}{
					\Gamma; \Delta_0, \Delta_1 \types \net{P_1 \pll B}
				}{\TSynch}
				\\[14mm]
				\Gamma; \Delta_0, \Delta_2 \types N
				\andalso
				\Delta_0 \text{ only $s$-endpoints}
				\andalso
				\Delta = \Delta_0, \Delta_1, \Delta_2
				\\[1mm]
			}{
				\Gamma; \Delta \types \net{P_1 \pll B} \npll N
			}{\TPar}
			\]
			The latter two derivations have the same
			typing contexts.
			The result is then immediate from the
			induction hypothesis;
			from the fact that type preservation holds on the
			premise of the reduction, we have that 
			\[
				\Gamma; \Delta' \types \net{P' \pll B'} \npll N'
			\]
			and
			$\Delta \tadvance \Delta'$.
			The latter results is what needed to
			conclude type preservation
			for the reduction.


		\item	Rule \RDef.

			\noindent
			Assume
			\[
				\tree {
					\net{P \pll B} \npll N
					\reduces
					\net{P' \pll B'} \npll N'
				}{
					\net{\Def{\set{\abs{D_i}{\tilde{x}_i} \defeq P_i}_{i \in I}}{P} \pll B} \npll N
					\reduces
					\net{\Def{\set{\abs{D_i}{\tilde{x}_i} \defeq P_i}_{i \in I}}{P'} \pll B' } \npll N'
				}{\RDef}
			\]
			with
			\[
				\Gamma; \Delta \types \net{\Def{\set{\abs{D_i}{\tilde{x}_i} \defeq P_i}_{i \in I}}{P} \pll B} \npll N
			\]
			and $\Delta$ well-formed.

			We produce the typing derivation of the latter judgement.
			We begin with the typing derivation for network node
			$\net{\Def{\set{\abs{D_i}{\tilde{x}_i} \defeq P_i}_{i \in I}}{P} \pll B}$:
			\[
				\btree[] {
					\btree[] {
						\tree[\TRec]{
							\forall i \in I, \abs{D_i}{\tilde{x}_i}: (\Gamma_i, \Delta_i) \in \Gamma \land \Gamma, \Gamma_i; \Delta_i \types P_i
							\andalso
							\Gamma; \Delta_1'' \types P
						}{
							\Gamma; \Delta_1'' \types \Def{\set{\abs{D_i}{\tilde{x}_i} \defeq P_i}_{i \in I}}{P}
						}{}
						\\
						\Gamma; \Theta \types B
						\andalso
						\Delta_1' = \Delta_1'' \sessionop \Theta
					}{
						\Gamma; \Delta_1' \types \net{\Def{\set{\abs{D_i}{\tilde{x}_i} \defeq P_i}_{i \in I}}{P} \pll B}
					}{\TNode}
					\\[14mm]
					\synchronise{\Delta_1'}{\Delta_0, \Delta_1}
					\\[1mm]
				}{
					\Gamma; \Delta_0, \Delta_1 \types \net{\Def{\set{\abs{D_i}{\tilde{x}_i} \defeq P_i}_{i \in I}}{P} \pll B}
				}{\TSynch}
			\]
			We type the entire network node
			$\net{\Def{\set{\abs{D_i}{\tilde{x}_i} \defeq P_i}_{i \in I}}{P} \pll B} \npll N$:
			\[	\tree {
					\Gamma; \Delta_0, \Delta_1 \types \net{\Def{\set{\abs{D_i}{\tilde{x}_i} \defeq P_i}_{i \in I}}{P} \pll B}
					\\
					\Gamma; \Delta_0, \Delta_2 \types N
					\andalso
					\Delta_0 \text{ only $s$-endpoint }
					\andalso
					\Delta = \Delta_0, \Delta_1, \Delta_2
				}{
					\Gamma; \Delta \types \net{\Def{\set{\abs{D_i}{\tilde{x}_i} \defeq P_i}_{i \in I}}{P} \pll B} \npll N
				}{\TPar}
			\]
			We can use the above information to produce the typing derivation
			of the network term $\net{P \pll B} \npll N$
			in the premise of the reduction:
			\[
				\btree{
					\btree[] {
						\btree[\TNode] {
							\Gamma; \Delta_1'' \types P
							\andalso
							\Gamma; \Theta \types B
							\andalso
							\Delta_1' = \Delta_1'' \sessionop \Theta
						}{
							\Gamma; \Delta_1' \types \net{P \pll B}
						}{}
						\andalso
						\synchronise{\Delta_1'}{\Delta_0, \Delta_1}
						\\[9mm]
					}{
						\Gamma; \Delta_0, \Delta_1 \types \net{P \pll B}
					}{\TSynch}
					\\[14mm]
					\Gamma; \Delta_0, \Delta_2 \types N
					\andalso
					\Delta_0 \text{ only $s$-endpoint }
					\andalso
					\Delta = \Delta_0, \Delta_1, \Delta_2
					\\[1mm]
				}{
					\Gamma; \Delta \types \net{\Def{P \pll B} \npll N}
				}{\TPar}
			\]
			The results then comes from the
			induction hypothesis;
			the latter typing derivation and the reduction
			on the premise of the case implies typing derivation
			$
				\Gamma; \Delta' \types \net{\Def{P' \pll B'} \npll N'}
			$
			and
			$\Delta \tadvance \Delta'$, which concludes the case.

		\item	Rule \TPar.

			\noindent
			Assume
			\[
				\tree {
					N \reduces N'
				}{
					N \npll M \reduces N' \npll M
				}{\RPar}
			\]
			with
			\[
				\Gamma; \Delta \types N \npll M
			\]
			and $\Delta$ well-formed.

			We provide with the typing derivation for
			the latter typing judgement.
			\[
				\tree{
					\Gamma; \Delta_0, \Delta_1 \types N
					\andalso
					\Gamma; \Delta_0, \Delta_2 \types M
					\\
					\dom{\Delta_0} \text{ only $s$-endpoints}
					\andalso
					\Delta = \Delta_0, \Delta_1, \Delta_2
				}{
					\Gamma; \Delta \types N \npll M
				}{\TPar}
			\]

			From the fact that $\Gamma; \Delta_0, \Delta_1 \types N$,
			the reduction, and the induction hypothesis we can
			conclude that $\Gamma; \Delta' \types N$ and
			$\Delta_0, \Delta_1 \tadvance \Delta'$.
			It remains to type the network term
			$N \npll M$ and show the Typing Preservation
			requirements for $\Delta'$. This can be analysed
			in several sub-cases.
			\begin{itemize}
				\item	$\Delta' =  \Delta_0', \Delta_1'$ with $\Delta_0' = \Delta_0, \set{s_i: \ctype{c_i}{T_i}}_{i \in I}$
					and $\Delta_1 = \Delta_1', \set{s_j: \ctype{c_j}{T_j}}_{j \in J}$. This implies derivation:
					\[
						\btree{
							\Gamma; \Delta_0', \Delta_1' \types N'
							\andalso
							\andalso
							\tree {
								\Delta_0 = \Delta_0', \Delta_0''
								\andalso
								\Gamma; \Delta_0, \Delta_2 \types M 
							}{
								\Gamma; \Delta_0', \Delta_0'', \Delta_2 \types M
							}{}
							\\[6mm]
							\begin{array}{l}
								\dom{\Delta_0} \text{ only $s$-endpoints and }
								\Delta_0' \subseteq \Delta_0 \text{ implies }
								\\
								\dom{\Delta_0'} \text{ only $s$-endpoints}
							\end{array}
						}{
							\Gamma; \Delta_0', \Delta_0'', \Delta_1', \Delta_2 \types N \npll M
						}{\TPar}
					\]
					as required, because
					$\Delta = \Delta_0, \Delta_1, \Delta_2 \tadvance \Delta_0', \Delta_0'', \Delta_1', \Delta_2 = \Delta_0, \Delta_1', \Delta_2$

				\item	$\Delta' =  \Delta_0, \Delta_1'$ and $\Delta_1 \tadvance \Delta_1'$
					implies:
					\[
						\tree{
							\Gamma; \Delta_0, \Delta_1' \types N'
							\andalso
							\Gamma; \Delta_0, \Delta_2 \types M
							\andalso
							\dom{\Delta_0} \text{ only $s$-endpoints}
						}{
							\Gamma; \Delta_0, \Delta_1', \Delta_2 \types N \npll M
						}{\TPar}
					\]
					as required, because
					$\Delta = \Delta_0, \Delta_1, \Delta_2 \tadvance \Delta_0, \Delta_1', \Delta_2$.

				\item	$\Delta' =  \Delta_0', \Delta_1'$ with $\Delta_0, \Delta_1 \tadvance \Delta_0', \Delta_1'$.
					The latter assumption, the fact that $\dom{\Delta_0}$ contains only $s$-endpoints,
					and Linear Context Advancement (Definition~\ref{def:lin_context_adv})
					imply that
					$\Delta_0 = \Delta_0'', s: \ctype{c}{T}$
					and
					$\Delta_1 = \Delta_1'', \aggr s: \ctype{c}{\tdual T}$
					and furthermore	
					$\Delta_0 = \Delta_0'', s: \ctype{c+1}{T'}$
					and
					$\Delta_1' = \Delta_1'', \aggr s: \ctype{c+1}{\tdual{T'}}$.
					Note that $T \tadvance T'$.
					These last equations allows to type the resulting
					network of the reduction:
					\[
						\btree{
							\btree[] {
								\btree{
									\Gamma; \Delta_0, \Delta_2 \types M
									\andalso
									\Delta_0 = \Delta_0'', s: \ctype{c}{T}
								}{
									\Gamma; \Delta_0'', s: \ctype{c}{T}, \Delta_2 \types M
								}{}
								\\[6mm]
								\btree {
									\synchronise{\Delta_0'', \Delta_2}{\Delta_0'', \Delta_2}
									\andalso
									T \tadvance^{c + 1 - c} T'
								}{
									\synchronise{\Delta_0'', s: \ctype{c}{T}, \Delta_2}{\Delta_0'', s: \ctype{c+1}{T'}, \Delta_2}
								}{}
								\\[6mm]
							}{
								\Gamma; \Delta_0'', s: \ctype{c+1}{T'}, \Delta_2 \types M
							}{\TSynch}
							\\[18mm]
							\Gamma; \Delta_0'', s: \ctype{c+1}{T'}, \Delta_1'', \aggr s: \ctype{c+1}{\tdual{T'}} \types N'
							\\
							\dom{\Delta_0'', s: \ctype{c+1}{T'}} \text{ only $s$-endpoints}
							\\[1mm]
						}{
							\Gamma; \Delta_0'', s: \ctype{c+1}{T'}, \Delta_1'', \aggr s: \ctype{c+1}{\tdual{T'}}, \Delta_2 \types N \npll M
						}{\TPar}
					\]
					as required, because
					$\Delta = \Delta_0, \Delta_1, \Delta_2 \tadvance \Delta_0', \Delta_1', \Delta_2 = \Delta_0', s: \ctype{c+1}{T'}, \Delta_1'', \aggr s: \ctype{c+1}{\tdual{T'}}, \Delta_2$.

				\item	$\Delta = \Delta_0', \Delta_1$ and $\Delta_0 \tadvance \Delta_0'$ cannot happen
					because $\dom{\Delta_0}$ has only $s$-endpoints.
			\end{itemize}

		\item	Rule \RCong. The result is immediate by applying Congruence
				Invariant (Lemma~\ref{lem:congruence-invariance})
				on the induction hypothesis.


		\item	Rule \RRes.

			\noindent
			Assume
			\[
				\tree {
					N \reduces N'
				}{
					\newn{n} N
					\reduces
					\newn{n} N'
				}{\RRes}
			\]
			with
			\[
				\Gamma; \Delta \types \newn{n} N
			\]
			and $\Delta$ well-formed.
			
			There are two sub-cases for the typing derivation of the latter
			judgement:
			\begin{itemize}
				\item	$n = a$. Restrict shared name. In this sub-case
					the typing derivation for network term $\newn{a} N$
					is as:
					
					\[
						\tree {
							\Gamma, a : T; \Delta \types N
						}{
							\Gamma; \Delta \types \newn{a} N
						}{\TCRes}
					\]
					From $\Gamma, a: T; \Delta \types N$,
					well-formed $\Delta$, and the induction
					hypothesis we conclude that
					$\Gamma, a : T; \Delta' \types N'$
					and either
					$\Delta' \subseteq \Delta$
					or
					$\Delta \tadvance \Delta'$.
					From the last typing judgement
					we can type the result of the
					reduction $\newn{a} N$:
					\[
						\tree {
							\Gamma, a: T; \Delta' \types N'
						}{
							\Gamma; \Delta' \types N'
						}{\TCRes}
					\]
					to conclude the sub-case.

				\item	$n = s$. Restrict session name. In this sub-case
					the typing derivation for network term $\newn{s} N$
					is as:

					\[
						\tree[] {
							\Gamma; \Delta, \Delta''', \aggr s: \ctype{c}{T} \types N
							\andalso
							\Delta''' = s: \ctype{c}{\dual{T}} \lor
							\Delta''' = \econtext
						}{
							\Gamma;\Delta \types \newn{s} N
						}{\TSRes}
					\]
					Typing context $\Delta, \aggr s: \ctype{c}{T}$
					is well-formed because $\Delta$ is well-formed,
					see well-formedmess Definition~\ref{def:well-formed}.

					From $\Gamma; \Delta, \aggr s: \ctype{c}{T}, s: \ctype{c}{\dual T} \types N$,
					well-formed $\Delta, \aggr s: \ctype{c}{T}, s: \ctype{c}{\dual T}$,
					and the induction hypothesis we conclude that
					$\Gamma; \Delta' \types N'$
					and
					$\Delta, \aggr s: \ctype{c}{T}, s: \ctype{c}{\dual T} \tadvance \Delta'$.
					All it remains is to conclude te case, is the
					typing derivation of the result of the reduction
					$\newn{s} N'$.

				If $\Delta, \aggr s: \ctype{c}{T}, s: \ctype{c}{\tdual T} \tadvance \Delta'$ then
				there are two sub-cases:
				\begin{itemize} 
					\item	$\Delta' = \Delta'', \aggr s: \ctype{c}{T}, s: \ctype{c}{\tdual T}$ with $\Delta \tadvance \Delta''$,
						which implies
						\[
							\tree {
								\Gamma; \Delta'', \aggr s: \ctype{c}{T}, s: \ctype{c}{\tdual T}  \types N'
							}{
								\Gamma; \Delta'' \types \newn{s} N'
							}{\TSRes}
						\]
						as required.
					\item	$\Delta' = \Delta, \aggr s: \ctype{c}{T}, s: \ctype{c}{\dual{T}}$ with $\aggr s: \ctype{c}{T}, s: \ctype{c}{\dual{T}} \tadvance \aggr s: \ctype{c+1}{T_1}, s: \ctype{c+1}{T_2}$,
						which implies
						\[
							\tree {
								\Gamma; \Delta, \aggr s: \ctype{c+1}{T'}, s: \ctype{c+1}{\tdual{T'}}  \types \newn{n} N'
								\\
								\text{From the definition of $\tadvance$ we get that } 
								T_1' = \tdual T_2'
							}{
								\Gamma; \Delta \types \newn{s} N'
							}{\TSRes}
						\]
						as required. \qedhere
				\end{itemize}
			\end{itemize}

	\end{itemize}

\end{proof}

\end{document}